\documentclass[11pt,reqno]{amsart}
\usepackage[body={14.5cm,20.4cm}]{geometry}
\usepackage{amssymb}
\usepackage{hyperref}
\usepackage{enumerate}
\usepackage{graphicx}
\usepackage{curves}
\usepackage{pstricks,pstricks-add}
\definecolor{gray50}{gray}{0.50}
\definecolor{gray75}{gray}{0.75}
\definecolor{gray85}{gray}{0.85}
\definecolor{gray90}{gray}{0.90}
\unitlength1cm
\newcommand{\arxiv}[1]{\href{http://arxiv.org/abs/#1}{arXiv:#1}}
\newcommand*{\mailto}[1]{\href{mailto:#1}{\nolinkurl{#1}}}
\theoremstyle{plain}
\newtheorem{theorem}{Theorem}[section]
\newtheorem{lemma}[theorem]{Lemma}
\newtheorem{corollary}[theorem]{Corollary}
\theoremstyle{remark}

\newtheorem*{remarks*}{Remarks}

\providecommand{\D}{\mathbb}
\providecommand{\abs}[1]{\lvert#1\rvert}
\providecommand{\accol}[1]{\lbrace#1\rbrace}
\providecommand{\norm}[1]{\lVert#1\rVert}
\newcommand{\dd}{\mathrm{d}}
\newcommand{\ee}{\mathrm{e}}
\newcommand{\ii}{\mathrm{i}}
\renewcommand{\Im}{\operatorname{Im}}
\renewcommand{\Re}{\operatorname{Re}}
\newcommand{\id}{\mathbb{I}}
\DeclareMathOperator{\const}{Const}
\DeclareMathOperator{\ord}{O}
\DeclareMathOperator{\Res}{Res}
\DeclareMathOperator{\sol}{\textup{\textsf{sol}}}
\DeclareMathOperator{\wronsk}{\textsf{W}}
\numberwithin{equation}{section}
\begin{document}
\title[Long-Time Asymptotics for the CH Equation]{Long-Time Asymptotics for the Camassa--Holm Equation}
\author[A. Boutet de Monvel]{Anne Boutet de Monvel}
\address{Institut de Math\'ematiques de Jussieu, Universit\'e Paris Diderot Paris 7\\
175 rue du Chevaleret\\ 75013 Paris \\ France}
\email{\mailto{aboutet@math.jussieu.fr}}
\urladdr{\url{http://www.math.jussieu.fr/~aboutet/}}
\author[A. Kostenko]{Aleksey Kostenko}
\address{Institute of Applied Mathematics and Mechanics\\ NAS of Ukraine\\
R. Luxemburg str. 74\\ Donetsk 83114\\ Ukraine}
\email{\mailto{duzer80@gmail.com}}
\author[D. Shepelsky]{Dmitry Shepelsky}
\address{B.Verkin Institute for Low Temperature Physics\\
47 Lenin Avenue\\61103 Kharkiv\\Ukraine}
\email{\mailto{shepelsky@yahoo.com}}
\author[G. Teschl]{Gerald Teschl}
\address{Faculty of Mathematics\\
Nordbergstrasse 15\\ 1090 Wien\\ Austria\\ and International Erwin Schr\"odinger
Institute for Mathematical Physics\\ Boltzmanngasse 9\\ 1090 Wien\\ Austria}
\email{\mailto{Gerald.Teschl@univie.ac.at}}
\urladdr{\url{http://www.mat.univie.ac.at/~gerald/}}
\thanks{Research supported (G.T.) by the Austrian Science Fund (FWF) under Grant No.\ Y330.}
\thanks{SIAM J. Math. Anal. {\bf 41:4}, 1559--1588 (2009)}
\keywords{Camassa--Holm equation, Riemann--Hilbert problem, solitons}
\subjclass[2000]{Primary 37K40, 35Q35; Secondary 37K45, 35Q15}
\date{\today}
\begin{abstract}
We apply the method of nonlinear steepest descent to compute the long-time
asymptotics of the Camassa--Holm equation for decaying initial data, completing previous results by A.~Boutet de Monvel and D.~Shepelsky.
\end{abstract}
\maketitle
\section{Introduction}            \label{sec:intro}

In this paper we want to study the long-time asymptotics of the Camassa--Holm (CH) equation, also known as the dispersive shallow water equation,
\begin{equation}\label{ch.eq}
u_{t}+2\varkappa u_{x}-u_{txx}+3uu_x=2u_x u_{xx}+u u_{xxx},\qquad t>0,\qquad x\in\D{R},
\end{equation}
where $u\equiv u(x,t)$ is the fluid velocity in the $x$ direction, $\varkappa>0$ is a constant related to the critical shallow water
wave speed, and subscripts denote partial derivatives. It was first introduced by R.~Camassa and D.~Holm in \cite{ch}
and R.~Camassa et al.\ \cite{chh} as a model for shallow water waves, but already appeared earlier in a list by B.~Fuchssteiner and A.~Fokas \cite{ff}. More on the hydrodynamical relevance of this model can be found in the recent
articles by R.~Johnson \cite{jo} and A.~Constantin and D.~Lannes \cite{cla}. With
\begin{equation}
w:=u-u_{xx} +\varkappa,
\end{equation}
called the ``momentum'', equation \eqref{ch.eq} can be expressed as the condition of compatibility between
\begin{equation}\label{sp.ch}
\frac{1}{w} \left(-f''+\frac{1}{4}f\right)=\lambda f,
\end{equation}
and
\begin{equation}\label{LA-pair}
\partial_tf=-\left(\frac{1}{2\lambda}+u\right)f'+\frac{1}{2}u'f,
\end{equation}
that is,
\[
\partial_t\partial_{xx}f=\partial_{xx}\partial_tf
\]
is the same as saying that \eqref{ch.eq} holds. Equation \eqref{sp.ch} is the spectral problem associated to \eqref{ch.eq}.
In particular, the CH equation is completely integrable and can be solved via the inverse scattering method. Correspondingly,
we consider real-valued classical solutions $u(x,t)$ of the CH equation \eqref{ch.eq}, which decay rapidly,
that is,
\begin{equation}\label{decay}
\max_{0<t\leq T} \int_{\D{R}} (1+|x|)^{l+1} \big(|w(x,t)-\varkappa| + |w_x(x,t)| +|w_{xx}(x,t)|\big)\dd x < \infty
\end{equation}
for all $T>0$ and some integer $l\geq 1$. Moreover, we will assume
\begin{equation}
w(x,0)>0
\end{equation}
throughout this paper. Then $u$ exists for all times $t>0$ with $w(x,t)>0$ (for existence of solutions we refer to A.~Constantin and J.~Escher \cite{ce} and the discussion in A.~Constantin and J.~Lenells \cite{cl}, see also \cite{co}).

The aim of this paper is to establish the long-time asymptotics of such solutions using the nonlinear steepest descent method from P.~Deift and X.~Zhou \cite{dz} which was inspired by earlier work of S.~Manakov \cite{ma} and A.~Its \cite{its1}. More on this method and its
history can be found in the survey by P.~Deift, A.~Its, and X.~Zhou \cite{diz}.

The starting point for this method is the representation of a solution of the nonlinear equation under consideration in terms of a solution of an associated Riemann-Hilbert problem.

Recently, A.~Boutet de Monvel and D.~Shepelsky have used the inverse scattering approach to the CH equation, based on the construction and analysis of an associated matrix Riemann-Hilbert problem \cite{bms,bms2}. The analysis, via the nonlinear steepest descent method, of a vector oscillatory RH problem derived from the original matrix-valued problem, allowed \cite{bms3} distinguishing four main regions in the $(x,t)$-half-plane, where the leading asymptotic terms were qualitatively different: a solitonic sector, two sectors of (slowly decaying) modulated oscillations, and a sector of rapid decay.

Here we want to simplify the original approach, deriving the vector Riem\-ann-Hilbert problem directly from scattering theory for the underlying Sturm--Liouville operator. We notice that the matrix and vector RH problems, being closely related, have specific features concerning, particularly, the uniqueness issue. For the matrix problem, we refer to [5], whereas the uniqueness for the vector problem is addressed in detail in the present paper, see Section \ref{sec:istrhp} below. 

At the same time we want to provide complete proofs including the effects of solitons in the sectors of decaying oscillations (formulas given in \cite{bms3} for the oscillatory regions are true in the solitonless case only) and the error estimates in terms of decay of the initial condition.
Moreover, we will make use of some
simplifications to the nonlinear steepest descent method recently given in H.~Kr\"uger and G.~Teschl \cite{kt} respectively K.~Grunert and G.~Teschl \cite{gt}.

The asymptotics in the transition regions, near the lines $x/\varkappa t=2$ and $x/\varkappa t=-1/4$, involves Painlev\'e transcendents. A detailed analysis is presented in \cite{bmis}.
\section{Main result}             \label{sec:main.result}

In order to state our main result we first need to recall a few things.

\subsection*{Scattering data}
Associated with $w(x,t)$ is a self-adjoint Sturm--Liouville operator
\begin{equation}\label{defslop}
\begin{split}
&H(t)=\frac{1}{w(x,t)}\left(-\frac{\dd^2}{\dd x^2}+\frac{1}{4}\right),\\
&D\bigl(H(t)\bigr)=H^2(\D{R})\subset L^2(\D{R},w\,\dd x).
\end{split}
\end{equation}
Here $L^2(\D{R},w\,\dd x)$ denotes the Hilbert space of square integrable (complex-valued) functions over $\D{R}$ and $H^2(\D{R})$ the corresponding Sobolev space. 

By our assumption \eqref{decay} the spectrum of $H(t)$ (independent of $t$) consists of an absolutely continuous part $[\frac{1}{4\varkappa},\infty)$ plus a finite set of eigenvalues $\lambda_j\in(0,\frac{1}{4\varkappa})$, $1\leq j\leq N$, possibly empty (``solitonless case''), see \cite[Theorem 2.1]{co}. For $j=1,\dots,N$ we denote
\begin{align*}
&\lambda_j = \frac{1}{\varkappa}\Bigl(\frac{1}{4}-\kappa_j^2\Bigr),\\
&0<\kappa_1 < \kappa_2 < \cdots < \kappa_N <\frac{1}{2}\,.
\end{align*} 
Moreover, associated with the continuous spectrum is a (right)
reflection coefficient $R_+(k,t)$, and associated with every eigenvalue $\lambda_j$
is a (right) norming constant $\gamma_{+,j}(t)$ (see the next section for details).

\subsection*{One-soliton solution}
Given $\kappa\in(0,\frac{1}{2})$ and $\gamma>0$ we have the corresponding one-soliton solution $u_{\sol}^{(\kappa,\gamma)}\equiv u_{\sol}(x,t)$ given by
\begin{equation}
u_{\sol}(x-\varkappa c t) = 
\frac{32 \varkappa \kappa^2}{(1-4\kappa^2)^2}\;\frac{\alpha(y(x-\varkappa ct))}%
{(1+\alpha(y(x-\varkappa c t)))^2+\frac{16\kappa^2}{1-4\kappa^2}\ \alpha(y(x-\varkappa c t))},
\end{equation}
with
\begin{subequations}
\begin{align}
\alpha(y) 
&= \frac{\gamma^2}{2\kappa}\ee^{-2 \kappa y},\\
c 
&= \frac{1}{2(\tfrac{1}{4}-\kappa^2)} \in (2,\infty),
\end{align}
\end{subequations}
where $y(x)$ is given implicitly by
\begin{equation}
x = y + \log\frac{1+\alpha(y) \frac{1+2\kappa}{1-2\kappa}}{1+\alpha(y) \frac{1-2\kappa}{1+2\kappa}}\,.
\end{equation}
The momentum $w_{\sol}^{(\kappa,\gamma)}\equiv w_{\sol}(x,t)$ of this solution is given by
\begin{equation}
w_{\sol}(x-\varkappa ct) 
= \varkappa\biggl(1+\frac{16\kappa^2}{1-4\kappa^2}\frac{\alpha(y(x-\varkappa ct))}{(1+\alpha(y(x-\varkappa ct)))^2}\biggr)^2.
\end{equation}
Note that the one-soliton solution has the form of a single peak
which is symmetric
with respect to its center 
\[
x_0 =
\frac{1}{2\kappa}\log\frac{\gamma^2}{2\kappa}+
\log\frac{1+2\kappa}{1-2\kappa}\,.
\]
The maximum at $x_0$ is given by
$u_{\sol,\max}=\varkappa\frac{8\kappa^2}{1-4\kappa^2}$ and taller waves  are
narrower and travel
faster (for fixed $\varkappa$). See Fig.~\ref{fig:one-soliton}, where $u_{\sol}(x-x_0)$ for fixed $\varkappa$ and different values of $\kappa$ is displayed.

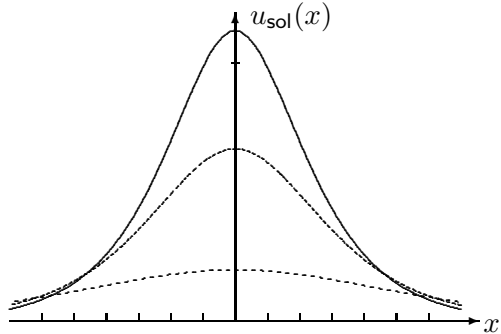
\begin{figure}[ht]
\begin{picture}(6.4,4.1)

\put(0,-0.156){\vector(1,0){6.24}}
\put(3.,-0.156){\vector(0,1){4.1}}
\put(6.3,-0.3){$x$}
\put(3.2,3.8){$u_{\sol}(x)$}

\multiput(0.429,-0.156)(0.4285,0){13}{\line(0,1){0.1}}
\put(2.95,3.27847){\line(1,0){0.1}}

\curve(0,0, 0.3,0.08, 0.6,0.199, 0.9,0.375, 1.2,0.632, 1.5,0.997, 
1.8,1.497, 2.1,2.133, 2.4,2.844, 2.7,3.457, 3.,3.708, 3.3,3.457, 
3.6,2.844, 3.9,2.133, 4.2,1.497, 4.5,0.997, 4.8,0.632, 5.1,0.375, 
5.4,0.199, 5.7,0.08, 6.,0)
\curvedashes{0.02,0.05}
\curve(0,0.05, 0.3,0.132, 0.6,0.244, 0.9,0.395, 1.2,0.593, 1.5,0.844, 
1.8,1.147, 2.1,1.481, 2.4,1.802, 2.7,2.043, 3.,2.134, 3.3,2.043, 
3.6,1.802, 3.9,1.481, 4.2,1.147, 4.5,0.844, 4.8,0.593, 5.1,0.395, 
5.4,0.244, 5.7,0.132, 6.,0.05)
\curvedashes{0.05,0.05}
\curve(0,0.094, 0.3,0.139, 0.6,0.189, 0.9,0.242, 1.2,0.298, 
1.5,0.355, 1.8,0.408, 2.1,0.455, 2.4,0.492, 2.7,0.516, 3.,0.524, 
3.3,0.516, 3.6,0.492, 3.9,0.455, 4.2,0.408, 4.5,0.355, 4.8,0.298, 
5.1,0.242, 5.4,0.189, 5.7,0.139, 6.,0.094)
\end{picture}
\caption{One-solitons for fixed $\varkappa=1$ and $\kappa=.15,\,.25,\,.3$}
\label{fig:one-soliton}
\end{figure}

\noindent
Note also that if $\kappa=\kappa(\varkappa)$ is changing in such a way that the \emph{soliton velocity} 
\[
v=c\varkappa=\frac{2\varkappa}{1-4\kappa^2}
\]
is fixed,
then, as $\varkappa\to 0$, the form of the soliton approaches that of a \emph{peakon}, $v\ee^{-|x|}$, which is a nonsmooth, weak solution of the Camassa--Holm equation with $\varkappa=0$ (see~\cite{bms2}). See Fig.~\ref{fig:fixed-velocity}, where $u_{\sol}(x-x_0)$ for fixed velocity $v$ and different values of $(\varkappa,\,\kappa)$ is displayed. Note that $u_{\sol,\max}=v-2\varkappa$.

\begin{figure}[ht]
\begin{picture}(6.24,4)

\put(0,0){\vector(1,0){6.24}}
\put(3.,0){\vector(0,1){4}}
\put(6.3,-0.15){$x$}
\put(3.2,3.8){$u_{\sol}(x)$}

\multiput(0.429,0)(0.4285,0){13}{\line(0,1){0.1}}
\multiput(2.95,1.615)(0,1.615){2}{\line(1,0){0.1}}

\curve(0,0, 0.2,0.003, 0.4,0.008, 0.6,0.016, 0.8,0.027, 1.,0.046, 
1.2,0.075, 1.4,0.121, 1.6,0.192, 1.8,0.302, 2.,0.475, 2.2,0.744, 
2.4,1.16, 2.6,1.798, 2.8,2.743, 3.,3.708, 3.2,2.743, 3.4,1.798, 
3.6,1.16, 3.8,0.744, 4.,0.475, 4.2,0.302, 4.4,0.192, 4.6,0.121, 
4.8,0.075, 5.,0.046, 5.2,0.027, 5.4,0.016, 5.6,0.008, 5.8,0.003, 6.,0)
\curvedashes{0.02,0.05}
\curve(0,0.019, 0.24,0.032, 0.48,0.053, 0.72,0.085, 0.96,0.134, 
1.2,0.209, 1.44,0.323, 1.68,0.495, 1.92,0.75, 2.16,1.118, 2.4,1.619, 
2.64,2.221, 2.88,2.717, 3.12,2.717, 3.36,2.221, 3.6,1.619, 
3.84,1.118, 4.08,0.75, 4.32,0.495, 4.56,0.323, 4.8,0.209, 5.04,0.134, 
5.28,0.085, 5.52,0.053, 5.76,0.032, 6.,0.019)
\curvedashes{0.05,0.05}
\curve(0,0.103, 0.3,0.14, 0.6,0.187, 0.9,0.248, 1.2,0.324, 1.5,0.413, 
1.8,0.514, 2.1,0.618, 2.4,0.711, 2.7,0.778, 3.,0.802, 3.3,0.778, 
3.6,0.711, 3.9,0.618, 4.2,0.514, 4.5,0.413, 4.8,0.324, 5.1,0.248, 
5.4,0.187, 5.7,0.14, 6.,0.103)
\end{picture}
\caption{Fixed velocity $v=2.5$ and $(\varkappa,\kappa)=(1,.22),\,(.5,.39),\,(.1,\,.48)$}
\label{fig:fixed-velocity}
\end{figure}
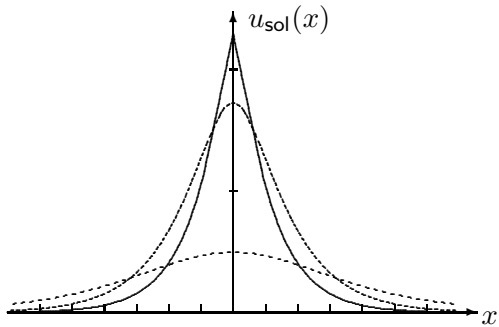

Now we are ready to state our main result.

\begin{theorem}[solution asymptotics]\label{thm:main}
Suppose $u(x,t)$ is a classical solution of the CH equation \eqref{ch.eq} satisfying \eqref{decay} for some integer $l\geq 1$. Let 
\[
R(k)=R_+(k,0)\text{ and }\kappa_j,\ \gamma_j=\gamma_{+,j}(0),\ j=1,\dots,N, 
\]
be the (right) scattering data
associated with $H(0)$ and the initial condition $w(x,0)$.

There are four sectors in the $(x,t)$ half-plane in which the long-time asymptotics of a solution of the CH equation satisfying \eqref{decay} are given by the following formulas:
\begin{enumerate}[\rm(i)]
\item
The ``soliton'' region $c:=\frac{x}{\varkappa t}>2+C$ for any small $C>0$. Let 
\[
c_j=\frac{1}{2(\frac{1}{4}-\kappa_j^2)},\quad j=1,\dots,N,
\]
and let $\varepsilon>0$ sufficiently small such that the intervals $[c_j-\varepsilon,c_j+\varepsilon]$ are disjoint and lie inside $(2,\infty)$.
\begin{enumerate}[\rm{(\theenumi}$_1$)]
\item
If $|\frac{x}{\kappa t}-c_j|<\varepsilon$ for some $j$, one has
\begin{equation}
u(x,t)= u_j(x-\xi_j-\varkappa c_jt)+\ord(t^{-l}),
\end{equation}
where $u_j=u_{\sol}^{(\kappa_j,\gamma_j)}$ is the one-soliton solution formed from $\kappa_j$ and
\begin{equation}     \label{eq:tilde.gamma.j}
\hat\gamma_j=\gamma_j\prod_{i=j+1}^N\frac{\kappa_i-\kappa_j}{\kappa_i+\kappa_j}\,,
\end{equation}
and having a phase shift
\begin{equation}     \label{eq:xi.j}
\xi_j=2\sum_{i=j+1}^N \log\frac{1+2\kappa_i}{1-2\kappa_i}\,.
\end{equation}
For $j=N$ the product is $1$ in \eqref{eq:tilde.gamma.j} and the sum is $0$ in \eqref{eq:xi.j}.
\item
If $\bigl\lvert\frac{x}{\kappa t}-c_j\bigr\rvert\geq\varepsilon$ for all $j$, one has
\begin{equation}
u(x,t) = \ord(t^{-l}).
\end{equation}
\end{enumerate}
\item
The ``first oscillatory'' region $0 \leq c:=\frac{x}{\varkappa t}< 2-C$ for any $C>0$. Here
\begin{align}   \label{first-oscill}
u(x,t) 
&=  -\sqrt{\frac{2\varkappa k_0(c) \nu_0(c)}{(\frac{1}{4}+k_0(c)^2)(\frac{3}{4}-k_0(c)^2) t}}
\sin\left(\frac{2\varkappa k_0(c)^3}{(\frac{1}{4}+k_0(c)^2)^2}t - \nu_0(c) \log(t) + \delta_0(c) \right)\notag\\
&\quad
+ \ord(t^{-\alpha})
\end{align}
for any $\frac{1}{2}<\alpha<1$ provided $l\geq 5$, where
\begin{align}
k_0(c) 
&=\frac{1}{2} \sqrt{-\frac{1+c-\sqrt{1+4c}}{c}},\\
\nu_0(c)
&=-\frac{1}{2\pi}\log(1-\left\vert R(k_0(c)) \right\vert^2),\\ \notag
\delta_0(c) 
&=\frac{\pi}{4}-\arg(R(k_0(c)))+\arg(\Gamma(\ii\nu_0(c)))\\ \notag
&\quad 
-\nu_0(c)\log\left(\frac{8\varkappa k_0(c)^2(3/4-k_0(c)^2)}{(1/4+k_0(c)^2)^3}\right) \\ \notag
&\quad
+4\sum_{j=1}^N\arctan\bigl(\frac{\kappa_j}{k_0(c)}\bigr) +
\frac{1}{\pi}\int_{\Sigma(c)}\log(\left\lvert\zeta-k_0(c)\right\rvert)\,\dd\log(1-\left\vert R(\zeta)\right\vert^2)\\
&\quad
+ 4\varkappa k_0(c) \sum_{j=1}^{N} \log\left(\frac{1+2\kappa_j}{1-2\kappa_j}\right)
+ \frac{4\varkappa k_0(c)}{\pi}\int_{\Sigma(c)}\frac{\log(|T(\zeta)|^2)}{1+4\zeta^2}\,\dd\zeta,
\end{align}
with $\Sigma(c)=(-k_0(c),k_0(c))$ and $\Gamma$ the Gamma function.
\begin{figure}[ht]\rm
\psset{unit=.9}
\begin{pspicture}(-7,-.2)(7,7)
\pspolygon[linestyle=none,fillstyle=solid,fillcolor=gray90](0,0)(0,6)(-.06,5.98)
\psline[linewidth=.5pt,linecolor=gray75]{->}(-6,0)(7.15,0)
\psline[linestyle=dashed](-6,0)(7,0)
\psline[linewidth=.5pt,linecolor=gray75]{->}(0,-.6)(0,6.2)
\psline[linestyle=dashed](0,0)(0,6.2)
\pspolygon[linestyle=none,fillstyle=solid,fillcolor=gray90](0,0)(6,2.8)(6,3.2)
\psline[linestyle=dashed](0,0)(6,3)
\psline[linewidth=.5pt,linecolor=gray85](0,0)(6,3.2)
\psline[linewidth=.5pt,linecolor=gray85](0,0)(6,2.8)
\pspolygon[linestyle=none,fillstyle=solid,fillcolor=gray90](0,0)(6,2.3)(6,2.5)
\psline[linestyle=dashed,linecolor=gray50](0,0)(6,2.4)
\psline[linewidth=.5pt,linecolor=gray85](0,0)(6,2.5)
\psline[linewidth=.5pt,linecolor=gray85](0,0)(6,2.3)
\pspolygon[linestyle=none,fillstyle=solid,fillcolor=gray90](0,0)(6,1.4)(6,1.6)
\psline[linestyle=dashed,linecolor=gray50](0,0)(6,1.5)
\psline[linewidth=.5pt,linecolor=gray85](0,0)(6,1.6)
\psline[linewidth=.5pt,linecolor=gray85](0,0)(6,1.4)
\pspolygon[linestyle=none,fillstyle=solid,fillcolor=gray90](0,0)(6,.6)(6,.8)
\psline[linestyle=dashed,linecolor=gray50](0,0)(6,.7)
\psline[linewidth=.5pt,linecolor=gray85](0,0)(6,.8)
\psline[linewidth=.5pt,linecolor=gray85](0,0)(6,.6)
\pspolygon[linestyle=none,fillstyle=solid,fillcolor=gray90](0,0)(-1.3,6)(-1.7,6)
\psline[linestyle=dashed](0,0)(-1.5,6)
\psline[linewidth=.5pt,linecolor=gray85](0,0)(-1.7,6)
\psline[linewidth=.5pt,linecolor=gray85](0,0)(-1.3,6)
\rput(0,0){$\bullet$}
\rput(.3,-.3){$0$}
\rput(6.75,.7){\footnotesize$c=c_N$}
\rput(6.7,1.5){\footnotesize$c=c_j$}
\rput(6.7,2.4){\footnotesize$c=c_1$}
\rput(6.6,3.15){$c=2$}
\rput(.65,5.4){$c=0$}
\rput(-2.2,6.4){$c=-\frac{1}{4}$}
\rput(4.5,1.45){(i)}
\rput(4.5,.85){\scriptsize solitons}
\rput(2.5,4){(ii)}
\rput(2.5,3.6){\scriptsize$1^{\text{st}}$ oscillatory}
\rput(2.5,3.2){\scriptsize region}
\rput(-.55,5){(iii)}
\rput(-.1,6.8){\scriptsize$2^{\text{nd}}$ oscillatory}
\rput(-.8,6.4){\scriptsize region}
\psline[linewidth=.5pt]{->}(-.7,6.1)(-.6,5.3)
\rput(-3.5,2.6){(iv)}
\rput(-3.5,2.2){\scriptsize fast decay}
\rput(6.8,-.2){$x$}
\rput(.2,6.1){$t$}
\end{pspicture}
\caption{The different regions of the $(x,t)$-half-plane, $c=\frac{x}{\varkappa t}$} 
\label{fig.regions}
\end{figure}
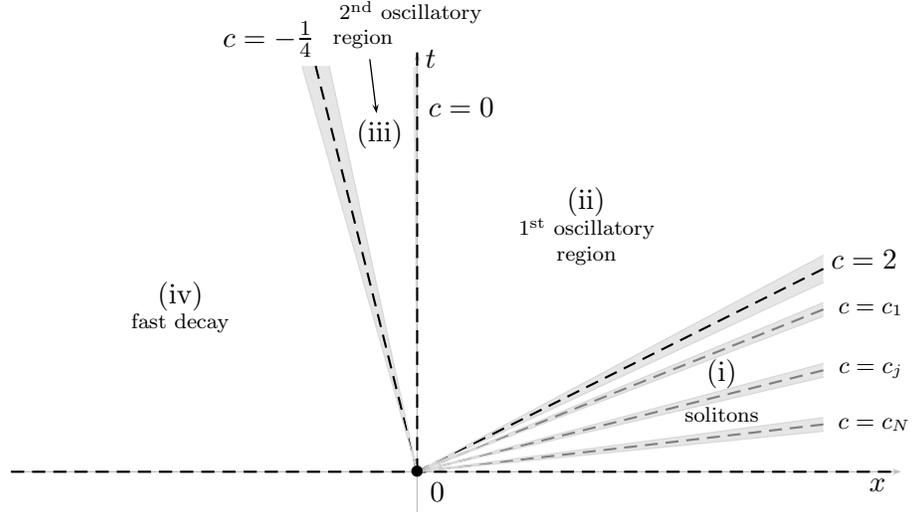
\item
The ``second oscillatory'' region $-\frac{1}{4}+C<c=\frac{x}{\varkappa t}< 0$ for any $C>0$, where
\begin{align}    \label{second-oscill}
u(x,t) 
&=-\sqrt{\frac{2\varkappa k_0(c) \nu_0(c)}{(\frac{1}{4}+k_0(c)^2)(\frac{3}{4}-k_0(c)^2) t}}
\sin\left(\frac{2\varkappa k_0(c)^3}{(\frac{1}{4}+k_0(c)^2)^2}t-\nu_0(c)\log(t) + \delta_0(c) \right)\notag\\
&\quad
-\sqrt{\frac{2\varkappa k_1(c) \nu_1(c)}{(\frac{1}{4}+k_1(c)^2)(k_1(c)^2-\frac{3}{4}) t}}
\sin\left(\frac{2\varkappa k_1(c)^3}{(\frac{1}{4}+k_1(c)^2)^2}t + \nu_1 \log(t) - \delta_1(c) \right) \notag\\
&\quad + \ord(t^{-\alpha})
\end{align}
for any $\frac{1}{2}<\alpha<1$ provided $l\geq 5$. Here, for $\ell=0,1$,
\begin{align}   \label{k-ell}
k_{\ell}(c) 
&= \frac{1}{2} \sqrt{-\frac{1+c-(-1)^{\ell}\sqrt{1+4c}}{c}}\,,\\
\nu_{\ell}(c)
&= -\frac{1}{2\pi}\log(1-\left\vert R(k_{\ell}(c)) \right\vert^2),\\ \notag
\delta_{\ell}(c) 
&= \frac{\pi}{4} - (-1)^{\ell}\arg(R(k_{\ell}(c)))+\arg(\Gamma(\ii\nu_{\ell}(c)))\\ \notag
&\quad 
-\nu_{\ell}(c) \log\frac{(-1)^{\ell}8\varkappa k_{\ell}(c)^2(3/4-k_{\ell}(c)^2)}{(1/4+k_{\ell}(c)^2)^3}
+ (-1)^{\ell}4\sum_{j=1}^N\arctan\frac{\kappa_j}{k_{\ell}(c)}\\
\notag
&\quad
+\frac{(-1)^{\ell}}{\pi}\int_{\Sigma(c)} \log(\left\vert \zeta-k_{\ell}(c) \right\vert)\dd\log(1-\left\vert R(\zeta) \right\vert^2)\\ \notag
&\quad
+(-1)^{\ell}2\nu_{\ell'}(c)\log\frac{k_1(c)-k_0(c)}{k_1(c)+k_0(c)}\\
&\quad
+ 4\varkappa k_{\ell}(c) \sum_{j=1}^N\log\frac{1+2\kappa_j}{1-2\kappa_j}
+ \frac{4\varkappa k_{\ell}(c) }{\pi}\int_{\Sigma(c)}\frac{\log(|T(\zeta)|^2)}{1+4\zeta^2}\,\dd\zeta,
\end{align}
with $\Sigma(c)=(-\infty,-k_1(c))\cup(-k_0(c),k_0(c))\cup(k_1(c),\infty)$, $0'=1$, and $1'=0$.
\item
The ``fast decay'' region $c=\frac{x}{\varkappa t}<-\frac{1}{4}-C$ for any $C>0$, where
\begin{equation}
u(x,t)=\ord(t^{-l}).
\end{equation}
\end{enumerate}
\end{theorem}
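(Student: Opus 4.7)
The plan is to derive Theorem~\ref{thm:main} from the Deift--Zhou nonlinear steepest descent analysis of an associated vector-valued Riemann--Hilbert problem. Following the approach to be developed in Section~\ref{sec:istrhp}, I would first formulate a vector RH problem on the real $k$-line whose jump matrix is built from the reflection coefficient $R(k)\ee^{2\ii t\Phi(k,c)}$, whose residue conditions at the points $\pm\ii\kappa_j$ encode the soliton data $\gamma_j$, and whose normalization at $k=\infty$ is dictated by the large-$k$ expansion of the Jost solutions of \eqref{sp.ch}. Here $c=\tfrac{x}{\varkappa t}$ and the phase $\Phi(k,c)$ is determined by the $x,t$-evolution of the scattering data coming from \eqref{LA-pair}. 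Since $u(x,t)$ is recovered from the leading coefficients of the large-$k$ expansion of the RH solution (after undoing the Liouville change of spatial variable that relates the CH spectral problem to a Schr\"odinger operator), the whole problem reduces to computing this expansion uniformly as $t\to\infty$ in each of the four sectors.

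The next step is the signature-table analysis of $\Re(\ii t\Phi)$ in the $k$-plane, parametrized by $c$. Solving $\partial_k\Phi=0$ yields exactly the two stationary points $k_0(c),\,k_1(c)$ displayed in \eqref{k-ell}: they are purely imaginary for $c>2$ (no real stationary points -- sector (i)), real with one of them detaching from $k=0$ for $0\le c<2$ (sector (ii)), real and distinct for $-\tfrac14<c<0$ (sector (iii)), and complex-conjugate off $\D{R}$ for $c<-\tfrac14$ (sector (iv)). The positions of these saddles, together with the positions of the solitonic poles $\pm\ii\kappa_j$ relative to the sign changes of $\Re(\ii\Phi)$, drive the contour deformations in each sector.

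In the fast-decay sector (iv) the triangular factors in the standard factorization of the jump can be deformed off the axis onto contours along which the jump is exponentially close to $\id$, and since every pole $\pm\ii\kappa_j$ lies in the exponentially decaying region, the small-norm theorem yields $u=\ord(t^{-l})$. In the soliton sector (i) I would reorder the poles by their velocities $c_j=1/(2(\tfrac14-\kappa_j^2))$ and successively remove those with $c_j<c$ via Blaschke-type conjugations following the procedure in \cite{kt}; this produces precisely the renormalized norming constants $\hat\gamma_j$ of \eqref{eq:tilde.gamma.j} and the phase shift $\xi_j$ of \eqref{eq:xi.j}, leaving a pure one-soliton RH problem plus an exponentially small error. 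In the oscillatory sectors (ii) and (iii) I would next apply a scalar Riemann--Hilbert factorization to remove all solitons and to absorb the rational factor $\prod_\ell(\zeta-k_\ell)^{\pm\ii\nu_\ell}$, then perform a lens-shaped deformation around each real stationary point $k_\ell(c)$. The local model near each saddle is solved explicitly in terms of parabolic cylinder functions, producing the $t^{-1/2}$ leading amplitude together with the logarithmic correction $\nu_\ell\log t$ and the explicit phase $\delta_\ell$: the $\arctan(\kappa_j/k_\ell)$ terms arise from the soliton conjugation (they are the genuinely new contribution compared to \cite{bms3}), the Cauchy-type integral of $\log(1-\abs{R}^2)$ over $\Sigma(c)$ from the scalar factorization, and the $\log\abs{T}^2/(1+4\zeta^2)$ integral and the $\log\tfrac{1+2\kappa_j}{1-2\kappa_j}$ sum from changing back to the original $x$-variable.

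The main obstacle is the uniform error analysis. I would follow the simplifications of \cite{kt,gt}, in which the reflection coefficient is approximated by a rational function that extends analytically into appropriate wedges, the approximation error being controlled by the decay assumption~\eqref{decay}; the exponent $l\ge 5$ is what is required for the resulting error to be $\ord(t^{-\alpha})$ for every $\tfrac12<\alpha<1$. Special care is needed near the boundaries $c=2$ and $c=-\tfrac14$, where the two stationary points coalesce and the parabolic-cylinder parametrix degenerates into the Painlev\'e~II parametrix analyzed in \cite{bmis}; the fixed $C>0$ in the statement is exactly what keeps us a uniform distance from this transition region. Finally, since everything is done at the level of the \emph{vector} RH problem, one must invoke the uniqueness theorem for this problem -- the subtle point flagged in the introduction and to be established in Section~\ref{sec:istrhp} -- in order to justify that the explicit asymptotic solution so constructed is indeed the correct one.
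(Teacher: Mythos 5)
Your plan follows essentially the same route as the paper: a vector Riemann--Hilbert problem derived directly from scattering theory for \eqref{sp.ch} with residue conditions at $\pm\ii\kappa_j$ and the symmetry/uniqueness issue addressed as in Section~\ref{sec:istrhp}; conjugation by the partial transmission coefficient (Blaschke factors plus a Cauchy integral of $\log(1-|R|^2)$ over $\Sigma(c)$) to handle the poles and the diagonal part of the jump; lens deformations governed by the signature of $\Re\Phi$ with stationary points $\pm k_0,\pm k_1$; parabolic-cylinder local models on small crosses plus a decoupling result to assemble the leading term; comparison with the one-soliton model problem in region (i); and analytic approximation of $R$ in the spirit of \cite{kt,gt} for the error estimates. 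Your attribution of the $\arctan(\kappa_j/k_\ell)$ terms to the soliton conjugation and of the last two terms in $\delta_\ell$ (and of $\xi_j$) to the change back to the $x$-variable is exactly what happens in the paper.

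One concrete point in your plan would not work as stated: you propose to recover $u(x,t)$ from the leading coefficients of the large-$k$ expansion of the RH solution. For CH that expansion only yields $Q_+(y,t)=\int_y^{+\infty}q(r,t)\,\dd r$ in the \emph{transformed} variable $y$ (see \eqref{eq:asym_infty}), where $q$ involves $w_{xx}$ and $w_x^2$; extracting asymptotics of $u(x,t)$ with the claimed error bounds from this, and then undoing the Liouville transform, is both lossy and essentially circular, since the map $y\mapsto x$ itself depends on the unknown $w$. The paper's key device, which your proposal is missing, is Lemma~\ref{lem:asymp}: evaluation of the RH solution at the distinguished point $k=\tfrac{\ii}{2}$ gives simultaneously $m_1(\tfrac{\ii}{2})/m_2(\tfrac{\ii}{2})=\ee^{x-y}$ (hence the implicit relation $x-y=2T_1(c)+\dots$, which is precisely the source of the extra phase terms and of $\xi_j$) and the expansion of $m_1m_2$ near $\tfrac{\ii}{2}$, whose zeroth and first order coefficients yield $w(x,t)$ and $u(x,t)$ directly. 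Without this (or an equivalent reconstruction formula at $k=\ii/2$) the final formulas in the original variable $x$ cannot be obtained with the stated error control. A smaller inaccuracy: the removal of the non-analytic part of $R$ in \cite{kt,gt} is done by an analytic-plus-small-remainder splitting controlled by \eqref{decay}, not by rational approximation; this does not affect the structure of the argument.
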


We can also give the asymptotics of the momentum $w(x,t)$.

\begin{theorem}[momentum asymptotics]\label{thm:main.momentum}
Under the same hypotheses and with the same notations the momentum of the solution behaves as follows:
\begin{enumerate}[\rm(i)]
\item
In the soliton region $c=\frac{x}{\varkappa t}>2+C$.
\begin{enumerate}[\rm{(\theenumi}$_1$)]
\item
If $|\frac{x}{\varkappa t}-c_j|<\varepsilon$ for some $j$, one has
\begin{equation}
w(x,t)=w_j(x-\xi_j-\varkappa c_jt)+\ord(t^{-l})
\end{equation}
where $w_j$ is the momentum of the one-soliton solution formed from the same parameters $\accol{\kappa_j,\gamma_j}$ and with the same phase shift $\xi_j$ as above.
\item
If $|\frac{x}{\varkappa t}-c_j|\geq\varepsilon$ for all $j$, one has
\begin{equation}
w(x,t)=\varkappa+\ord(t^{-l}).
\end{equation}
\end{enumerate}
\item
In the first oscillatory region $0\leq c=\frac{x}{\varkappa t}<2-C$ for any $C>0$, one has
\begin{align}\notag
w(x,t) 
&= \varkappa - 4\sqrt{\frac{2\varkappa k_0(c)(\frac{1}{4}+k_0(c)^2) \nu_0(c)}{(\frac{3}{4}-k_0(c)^2) t}}
\sin\left(\frac{2\varkappa k_0(c)^3}{(\frac{1}{4}+k_0(c)^2)^2}t - \nu_0(c) \log(t) + \delta_0(c) \right)\\
&\quad + \ord(t^{-\alpha}),
\end{align}
for any $\frac{1}{2}<\alpha<1$ provided $l\geq 5$.
\item
In the second oscillatory region $-\frac{1}{4}+C<c=\frac{x}{\varkappa t}<0$ for any $C>0$, one has
\begin{align}\notag
w(x,t) 
&=\varkappa 
-4\sqrt{\frac{2\varkappa k_0(c)(\frac{1}{4}+k_0(c)^2) \nu_0(c)}{(\frac{3}{4}-k_0(c)^2) t}}
\sin\left(\frac{2\varkappa k_0(c)^3}{(\frac{1}{4}+k_0(c)^2)^2}t - \nu_0(c) \log(t) + \delta_0(c)\right)\\ 
&\quad
- 4\sqrt{\frac{2\varkappa k_1(c)(\frac{1}{4}+k_1(c)^2) \nu_1(c)}{(k_1(c)^2-\frac{3}{4})t}}
\sin\left(\frac{2\varkappa k_1(c)^3}{(\frac{1}{4}+k_1(c)^2)^2}t + \nu_1(c) \log(t) - \delta_1(c)\right)\notag\\
&\quad
+\ord(t^{-\alpha}),
\end{align}
for any $\frac{1}{2}<\alpha<1$ provided $l\geq 5$.
\item
In the fast decay region $\frac{x}{\varkappa t}<-\frac{1}{4}-C$ for any $C>0$, we have
\begin{equation}
w(x,t)=\varkappa+\ord(t^{-l}).
\end{equation}
\end{enumerate}
\end{theorem}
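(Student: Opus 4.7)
The plan is to mirror the proof of Theorem \ref{thm:main}: rerun the same nonlinear steepest descent analysis of the vector Riemann--Hilbert problem associated with the CH equation, but read off the coefficient reconstructing $w$ instead of the one reconstructing $u$. The contour deformations, the parabolic cylinder parametric models near the saddles $\pm k_0$ (and additionally $\pm k_1$ in region (iii)), and the soliton-subtraction steps are all unchanged; only the amplitude prefactors differ, while the oscillation phase $\frac{2\varkappa k_0^3}{(\frac{1}{4}+k_0^2)^2}t-\nu_0(c)\log t+\delta_0(c)$ (and its $k_1$-counterpart) is identical to the one in Theorem \ref{thm:main}.

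In the soliton region (i) the claim reduces immediately to Theorem \ref{thm:main}(i): by definition $w_j$ is the momentum of the one-soliton $u_j$ with parameters $(\kappa_j,\hat\gamma_j)$ and phase shift $\xi_j$, and the $\ord(t^{-l})$ estimate obtained by the RHP analysis is uniform for the whole reconstruction, not only for the $u$-coefficient. The fast-decay region (iv) is analogous: after the appropriate contour deformation the jumps become exponentially small and contribute $\ord(t^{-l})$ to $w-\varkappa$ as well.

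For the oscillatory regions (ii)--(iii) the ratio of amplitudes can be anticipated heuristically from $w-\varkappa=u-u_{xx}$. At the stationary phase point $k_\ell(c)$ one has $\phi_\ell'(c)/\varkappa=-2k_\ell(c)$, where $\phi_\ell(c)=\frac{2\varkappa k_\ell(c)^3}{(\frac{1}{4}+k_\ell(c)^2)^2}$ (this follows from the dispersion relation together with the saddle condition $\partial_k\Psi=0$ for the underlying phase $\Psi$); hence the leading part of $u_{xx}$ equals $-4k_\ell(c)^2$ times the leading part of $u$, and
\begin{equation*}
\text{amp}(w-\varkappa)=\bigl(1+4k_\ell(c)^2\bigr)\,\text{amp}(u)
=4\bigl(\tfrac{1}{4}+k_\ell(c)^2\bigr)\,\text{amp}(u),
\end{equation*}
in agreement with the announced formulas. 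The rigorous derivation extracts this factor from the $w$-reconstruction functional applied to the local parabolic cylinder model at each saddle; in region (iii) the two saddles are treated independently and summed, yielding the two-term expansion with independent phases.

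The principal technical point is that the $\ord(t^{-\alpha})$ remainder cannot be obtained by merely differentiating the asymptotic formula for $u$ from Theorem \ref{thm:main}, since two $x$-derivatives could in principle produce an error of the same order as the main term of $u_{xx}$. The remedy is to work directly with the $w$-reconstruction functional on the RHP solution and to bound its remainder by the same small-norm $L^2$--$L^\infty$ estimates that controlled $u$ in Theorem \ref{thm:main}; the hypothesis $l\geq 5$ enters for the same reason as there.
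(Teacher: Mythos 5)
Your proposal is correct and takes essentially the same route as the paper: there Theorems~\ref{thm:main} and~\ref{thm:main.momentum} are proved simultaneously by expanding $\widehat m_1\widehat m_2$ near $k=\tfrac{\ii}{2}$ after the same conjugation/deformation and local cross models, and $w$ is read off directly from the value $m_1(\tfrac{\ii}{2})m_2(\tfrac{\ii}{2})=\sqrt{w(x,t)/\varkappa}$ of Lemma~\ref{lem:asymp} (your ``$w$-reconstruction functional''), not by differentiating the $u$-asymptotics. Your amplitude factor $4(\tfrac{1}{4}+k_\ell(c)^2)$ and the unchanged phases agree with the paper's explicit outcome $w(x,t)=\varkappa-\frac{\varkappa}{\sqrt{\Phi''_\ell t}}\frac{8k_\ell}{1/4+k_\ell^2}\Im(\beta_\ell)+\ord(t^{-\alpha})$, summed over the relevant saddles.
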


In particular we recover the fact that a pure soliton solution (i.e., $R(k)\equiv 0$) asymptotically splits into single solitons with associated phase shifts. This was shown only recently by R.S.~Johnson \cite{jo2} (for two solitons) and in the general case by Y.~Matsuno \cite{mat}. For further results on solitons of the CH equation and their stability we refer to A.~Constantin and W.~Strauss \cite{cs}, L.-C.~Li \cite{lc}, and K.~El Dika and L.~Molinet \cite{em}.

Notice that the oscillatory regions (ii) and (iii) match at $x=0$. Indeed, as $x\to 0$ with $x<0$, $k_1\to\infty$ in \eqref{k-ell} and thus the amplitude of the second term in \eqref{second-oscill} vanishes, while the parameters of the first term in \eqref{second-oscill} match those in \eqref{first-oscill}.
As for the transition between the other regions, we have already noticed
in Section~\ref{sec:intro} that there exist transition zones, where the asymptotics are described in terms of Painlev\'e transcendents. More precisely (details are given in \cite{bmis}), these zones are:
\begin{enumerate}[{(tr}1)]
\item
$\abs{x/\varkappa t-2}\,t^{2/3}<\const$,
\item
$\abs{x/\varkappa t + 1/4}\,t^{2/3}<\const$.
\end{enumerate}

It should also be emphasized that unlike the (modified) Korteweg--de Vries (KdV) equation (originally considered in \cite{dz}), the asymptotic form is given implicitly, however, to leading order this fact only manifests itself in additional phase shift. More precisely, the term $\xi_j$ in the soliton region (as already pointed out in \cite{mat}) and the last two terms in $\delta_j(c)$, have no analog in the (modified) KdV equation (cf.~\cite{dz}, respectively \cite{gt}). For results on CH on the half-line we refer to A.~Boutet de Monvel and D.~Shepelsky \cite{bms0,bms4}.

Finally, note that if $u(x,t)$ solves the CH equation, then so does $u(-x,-t)$. Therefore it suffices to investigate the case $t\to+\infty$.
\section{The Inverse scattering transform and the Riemann--Hilbert problem}
\label{sec:istrhp}

In this section we derive a vector Riemann--Hilbert problem directly from the
scattering theory for the differential operator \eqref{sp.ch}. 
We begin by recalling some
required results from scattering theory, respectively the inverse
scattering transform for the CH equation from \cite{co,cgi} (see also \cite{mar}).

Recall also that by virtue of the unitary Liouville transform
\begin{equation}\label{liovtrf}
\begin{split}
&f(x) \mapsto \tilde{f}(y) = w(x)^{1/4}f(x),\\
&y=x-\int_x^{+\infty}\biggl(\sqrt{\frac{w(r)}{\varkappa}}-1\biggr)\dd r,
\end{split}
\end{equation}
the Sturm--Liouville operator $H(t)$ introduced in \eqref{defslop} can be mapped to a self-adjoint Schr{\"o}dinger operator
\begin{equation}\label{schroeop}
\begin{split}
&\tilde{H}(t) = \frac{1}{\varkappa}\biggl(-\frac{\dd^2}{\dd y^2}+q(\,\cdot\,,t) + \frac{1}{4}\biggr), \\
&D\bigl(\tilde{H}(t)\bigr)=H^2(\D{R})\subset L^2(\D{R}).
\end{split}
\end{equation}
where
\[
q(y,t)= \frac{\varkappa}{4}\frac{w_{xx}(x,t)}{w(x,t)^2}-\frac{w(x,t)-\varkappa}{4 w(x,t)}-\frac{5\varkappa}{16}\frac{w_x(x,t)^2}{w(x,t)^3}.
\]
From our assumption \eqref{decay} it follows that $q(y,t)\in L^1(\D{R},(1+|y|)^{l+1}\dd y)$.

\begin{lemma}
There exist two Jost solutions $\psi_\pm(k,x,t)$ which solve the differential equation
\begin{equation}
H(t) \psi_\pm(k,x,t)=\frac{1}{\varkappa}\left(\frac{1}{4}+k^2\right)\psi_\pm(k,x,t), \qquad \Im(k)\geq 0,
\end{equation}
and
\begin{equation}
\lim_{x\to\pm\infty}\ee^{\mp\ii kx} \psi_{\pm}(k,x,t) = 1.
\end{equation}
Both $\psi_\pm(k,x,t)$ are analytic for $\Im(k) > 0$ and continuous
for $\Im(k)\geq 0$. For large $k$ we have
\begin{equation}\label{eq:psiasym}
\psi_\pm(k,x,t) =\ee^{\pm\ii k (y + \frac{1\mp 1}{2}H_{-1}(w))}
\frac{\varkappa^{1/4}}{w(x,t)^{1/4}}
\left(1\mp \int_y^{\pm\infty}\!q(r,t)\dd r\,\frac{1}{2\ii k}+\ord\Bigl(\frac{1}{k^2}\Bigr)\right)
\end{equation}
as $k\to\infty$, where
\begin{equation}
H_{-1}(w)=\int_{\D{R}}\biggl(\sqrt{\tfrac{w(r)}{\varkappa}}-1\biggr)\dd r
\end{equation}
is a conserved quantity of the CH equation.
\end{lemma}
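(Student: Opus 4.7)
The plan is to push the Sturm--Liouville problem through the Liouville transform \eqref{liovtrf} to the Schr\"odinger setting \eqref{schroeop}, invoke classical scattering theory for Schr\"odinger operators on the line, and then transfer the result back to the $x$-variable. Throughout, $t$ plays the role of a passive parameter.

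\emph{Step 1: Jost solutions in the $y$-variable.} The hypothesis \eqref{decay} gives $q(\cdot,t)\in L^1(\D{R},(1+|y|)^{l+1}\dd y)$. Classical scattering theory then produces Jost solutions $\tilde\psi_\pm(k,y,t)$ of $-\partial_y^2\tilde\psi+q\tilde\psi=k^2\tilde\psi$ characterized by $\lim_{y\to\pm\infty}\ee^{\mp\ii ky}\tilde\psi_\pm(k,y,t)=1$. They are obtained by Neumann iteration of the Volterra integral equation
\begin{equation*}
\tilde\psi_\pm(k,y,t)=\ee^{\pm\ii ky}+\int_y^{\pm\infty}\frac{\sin(k(r-y))}{k}\,q(r,t)\,\tilde\psi_\pm(k,r,t)\,\dd r,
\end{equation*}
which simultaneously establishes analyticity in $\Im(k)>0$ and continuity up to the real axis.

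\emph{Step 2: Large-$k$ expansion.} Splitting $\sin(k(r-y))/k=(\ee^{\ii k(r-y)}-\ee^{-\ii k(r-y)})/(2\ii k)$ in the above equation, substituting the leading behavior $\tilde\psi_\pm\sim\ee^{\pm\ii kr}$, and controlling the oscillatory remainder by an integration by parts (legitimized by the extra $(1+|y|)^{l+1}$ weight on $q$) yields
\begin{equation*}
\tilde\psi_\pm(k,y,t)=\ee^{\pm\ii ky}\Bigl(1\mp\frac{1}{2\ii k}\int_y^{\pm\infty}q(r,t)\,\dd r+\ord(k^{-2})\Bigr),\qquad k\to\infty,
\end{equation*}
uniformly for $y$ in compact sets and $k$ in the closed upper half-plane.

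\emph{Step 3: Transfer back to $x$.} The inverse of \eqref{liovtrf} is $f(x)=w(x,t)^{-1/4}\tilde f(y(x))$. Since $w\to\varkappa$ at both ends and $y(x)-x\to 0$ as $x\to+\infty$ while $y(x)-x\to -H_{-1}(w)$ as $x\to-\infty$, the normalization required by $\lim_{x\to\pm\infty}\ee^{\mp\ii kx}\psi_\pm=1$ forces
\begin{align*}
\psi_+(k,x,t)&=\frac{\varkappa^{1/4}}{w(x,t)^{1/4}}\,\tilde\psi_+(k,y(x),t),\\
\psi_-(k,x,t)&=\frac{\varkappa^{1/4}}{w(x,t)^{1/4}}\,\ee^{-\ii kH_{-1}(w)}\,\tilde\psi_-(k,y(x),t).
\end{align*}
Substituting Step~2 into these identities and rewriting the phase as $\ee^{\pm\ii k(y+\frac{1\mp 1}{2}H_{-1}(w))}$ yields \eqref{eq:psiasym}.

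\emph{Main obstacle.} The delicate point is the claim that $H_{-1}(w)$ is a conserved quantity: only then does the phase factor in $\psi_-$ have an unambiguous meaning and does $\psi_-$ qualify as a $t$-independent normalized Jost solution. I would verify this by differentiating $H_{-1}(w(\cdot,t))$ in $t$, substituting the momentum form of \eqref{ch.eq}, namely $w_t+(uw)_x+u_xw=0$ (after subtracting the linear dispersion), and integrating by parts using the decay in \eqref{decay}; it also follows from the bi-Hamiltonian structure of the CH hierarchy, in which $H_{-1}$ is a Casimir. The remaining bookkeeping---tracking the $\ord(k^{-2})$ error through the multiplication by the smooth, bounded prefactor $\varkappa^{1/4}/w(x,t)^{1/4}$---is standard.
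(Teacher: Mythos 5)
Your proposal is correct and follows essentially the same route as the paper: the paper's proof simply invokes the classical Schr\"odinger scattering results (Deift--Trubowitz, Marchenko) and transfers them via the Liouville transform through the identity $\psi_\pm(k,x,t)=\ee^{-\ii k\frac{1\mp1}{2}H_{-1}(w)}\,\varkappa^{1/4}w(x,t)^{-1/4}\,\tilde\psi_\pm(k,y,t)$, which is exactly the relation you derive in Step 3, with your Steps 1--2 filling in the standard Volterra-equation details and your closing remark correctly verifying the (well-known) conservation of $H_{-1}$.
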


\begin{proof}
This is immediate from the corresponding results for \eqref{schroeop} (cf., e.g., \cite{dt} or \cite{mar}) by virtue of
our Liouville transform \eqref{liovtrf}. Just observe
\[
\psi_\pm(k,x,t) = \ee^{-\ii k \frac{1\mp 1}{2}H_{-1}(w)}
\frac{\varkappa^{1/4}}{w(x,t)^{1/4}} \tilde{\psi}_\pm(k,y,t),
\]
where $\tilde{\psi}_\pm(k,y,t)$ are the Jost solutions of \eqref{schroeop}.
\end{proof}

Furthermore, one has the scattering relations
\begin{equation} \label{relscat}
T(k) \psi_\mp(k,x,t) =  \overline{\psi_\pm(k,x,t)} +
R_\pm(k,t) \psi_\pm(k,x,t),  \qquad k \in \D{R},
\end{equation}
where $T(k)$, $R_\pm(k,t)$ are the transmission, resp.\ reflection coefficients. We have symmetry relations 
\[
R_\pm(-k,t)=\overline{R_\pm(k,t)}\ \text{ and }\ T(-k)=\overline{T(k)}.
\]
Note also that if $\tilde{T}(k)$, $\tilde{R}_\pm(k,t)$
are the corresponding quantities for $\tilde{H}(t)$, then $T(k)= \ee^{\ii k H_{-1}(w)} \tilde{T}(k)$, $R_+(k,t)= \tilde{R}_+(k,t)$, and
$R_-(k,t)= \ee^{2\ii k H_{-1}(w)} \tilde{R}_-(k,t)$ and hence all results known for \eqref{schroeop} readily apply in our situation.
In particular, they have the following well-known properties:

\begin{lemma}\label{lem:tr_coef}
The transmission coefficient $T(k)$ is meromorphic for $\Im(k) > 0$
with simple poles at $\ii\kappa _1, \dots,\ii\kappa_N$, where as above $\kappa_j=\sqrt{\frac{1}{4} - \varkappa \lambda_j}\in(0,\frac{1}{2})$,
and is continuous up to the real line. Asymptotically we have
\begin{equation}\label{eq:asymT}
T(k)=\ee^{\ii k H_{-1}(w)} (1+\ord(k^{-1})).
\end{equation}
The residues of $T(k)$ are given by
\begin{equation}\label{eq:resT}
\Res_{\ii\kappa_j} T(k) = \ii\mu _j(t) \gamma _{+,j}(t)^2 = \ii\mu _j \gamma_{+,j}^2,
\end{equation}
where
\begin{equation}
\gamma_{+,j}(t)^{-2} = \varkappa^{-1} \int_{\D{R}} \psi _+(\ii\kappa_j,r,t)^2 w(r,t)\dd r
\end{equation}
and $\psi_+ (\ii\kappa_j,x,t) = \mu_j(t) \psi_-(\ii\kappa_j,x,t)$.

Moreover,
\begin{equation} \label{reltrpm}
T(k) \overline{R_+(k,t)} + \overline{T(k)} R_-(k,t)=0, \qquad |T(k)|^2 + |R_\pm(k,t)|^2=1.
\end{equation}
\end{lemma}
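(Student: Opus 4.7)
The strategy is to transfer everything from the Schrödinger operator $\tilde H(t)$ of \eqref{schroeop}, where all these statements are classical (see \cite{dt,mar}), via the Liouville map. The proof of the preceding lemma records
\[
\psi_\pm(k,x,t)=\ee^{-\ii k \frac{1\mp 1}{2}H_{-1}(w)}\frac{\varkappa^{1/4}}{w(x,t)^{1/4}}\,\tilde\psi_\pm(k,y,t),
\]
and the paragraph that follows it gives $T(k)=\ee^{\ii k H_{-1}(w)}\tilde T(k)$, $R_+=\tilde R_+$, and $R_-=\ee^{2\ii k H_{-1}(w)}\tilde R_-$. These are the only bridges I would use.

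From these the meromorphy of $T$ on $\{\Im k>0\}$, its simple poles at $\ii\kappa_j$, the continuity up to $\D R$, and the asymptotics \eqref{eq:asymT} are immediate, since $\ee^{\ii k H_{-1}(w)}$ is entire and nonvanishing and $\tilde T(k)=1+\ord(k^{-1})$. The identity $\psi_+(\ii\kappa_j,\cdot,t)=\mu_j(t)\psi_-(\ii\kappa_j,\cdot,t)$ just expresses that both Jost solutions decay at $\pm\infty$ for $k=\ii\kappa_j$ and hence span the one-dimensional $\lambda_j$-eigenspace of $H(t)$. For the residue I would start from the Wronskian representation $\tilde T(k)=2\ii k/\wronsk(\tilde\psi_-,\tilde\psi_+)$, differentiate the denominator at its simple zero, and use the classical formula
\[
\Res_{\ii\kappa_j}\tilde T=\ii\tilde\mu_j\tilde\gamma_{+,j}^2,\qquad \tilde\gamma_{+,j}^{-2}=\int_{\D R}\tilde\psi_+(\ii\kappa_j,y,t)^2\,\dd y.
\]
Changing variables by $\dd y=\sqrt{w/\varkappa}\,\dd x$ and substituting $\tilde\psi_+=(w/\varkappa)^{1/4}\psi_+$ converts this integral into $\varkappa^{-1}\int_{\D R}\psi_+^2 w\,\dd x$, so $\gamma_{+,j}=\tilde\gamma_{+,j}$; the same Liouville formula yields $\mu_j=\ee^{-\kappa_j H_{-1}(w)}\tilde\mu_j$; and multiplication by $\ee^{\ii k H_{-1}(w)}|_{k=\ii\kappa_j}=\ee^{-\kappa_j H_{-1}(w)}$ delivers \eqref{eq:resT}.

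The unitarity and compatibility relations \eqref{reltrpm} then follow from the corresponding identities for $\tilde T$, $\tilde R_\pm$ together with $|\ee^{\ii k H_{-1}(w)}|=1$ for $k\in\D R$: the moduli are unchanged, and in $T\overline{R_+}+\overline T R_-$ one pulls out a common factor $\ee^{\ii k H_{-1}(w)}$, reducing the identity to $\tilde T\overline{\tilde R_+}+\overline{\tilde T}\tilde R_-=0$. The only nontrivial point, and the main obstacle, is the phase bookkeeping at the residue: one must verify that the constant $\ee^{-\kappa_j H_{-1}(w)}$ from the prefactor gets absorbed into $\tilde\mu_j$ rather than into $\tilde\gamma_{+,j}^2$, so that \eqref{eq:resT} comes out in the clean stated form with no residual exponential.
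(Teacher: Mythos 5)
Your proposal is correct and is exactly the paper's route: the paper offers no separate proof of this lemma but simply invokes the Liouville transform together with the stated relations $T=\ee^{\ii k H_{-1}(w)}\tilde T$, $R_+=\tilde R_+$, $R_-=\ee^{2\ii k H_{-1}(w)}\tilde R_-$ and the classical Schr\"odinger scattering results of \cite{dt,mar}. Your detailed bookkeeping (in particular $\gamma_{+,j}=\tilde\gamma_{+,j}$ via $\dd y=\sqrt{w/\varkappa}\,\dd x$ and $\mu_j=\ee^{-\kappa_j H_{-1}(w)}\tilde\mu_j$, so the prefactor $\ee^{-\kappa_j H_{-1}(w)}$ is absorbed into $\mu_j$) checks out and correctly fills in what the paper leaves implicit.
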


Note that one reflection coefficient, say $R(k,t)=R_+(k,t)$, and one set of
norming constants, say $\gamma_j(t):=\gamma_{+,j}(t)$, suffices.

The time dependence is given by (see \cite{co}):

\begin{lemma}
The time evolutions of the quantities $R(k,t)$ and $\gamma_j(t)$ are given by,
\begin{align}
R(k,t) &= R(k) \ee^{- \ii\frac{\varkappa k}{1/4+k^2} t},\\
\gamma_j(t) &= \gamma_j \ee^{\frac{\varkappa \kappa_j/2}{1/4-\kappa_j^2} t}
\end{align}
where $R(k)=R(k,0)$ and $\gamma_j=\gamma_j(0)$.
\end{lemma}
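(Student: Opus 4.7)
The plan is to exploit the time-evolution part \eqref{LA-pair} of the Lax pair to track the scattering data. Writing $Uf := -\bigl(\frac{1}{2\lambda}+u\bigr)f' + \frac{1}{2}u'f$, the first step is to observe that the Jost solutions $\psi_\pm$ cannot satisfy $\partial_t f = Uf$ directly because they are pinned by the $t$-independent asymptotics $\psi_\pm\sim\ee^{\pm\ii kx}$ at $x\to\pm\infty$; instead I would show
\[
\partial_t\psi_\pm=U\psi_\pm+C_\pm(k)\,\psi_\pm,\qquad C_\pm(k)=\pm\frac{\ii\varkappa k}{2(\tfrac14+k^2)}.
\]
The compatibility condition between \eqref{sp.ch} and \eqref{LA-pair}---equivalent to $u$ solving \eqref{ch.eq}---forces $\Phi_\pm:=\partial_t\psi_\pm-U\psi_\pm$ to solve \eqref{sp.ch} at the same $\lambda$. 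Inspection of its $x\to\pm\infty$ asymptotics (using $u,u'\to0$) yields $\Phi_\pm\sim\pm\frac{\ii k}{2\lambda}\ee^{\pm\ii kx}$, and uniqueness of the Jost solutions then gives $\Phi_\pm=\pm\frac{\ii k}{2\lambda}\psi_\pm$ with $\lambda=(\tfrac14+k^2)/\varkappa$.

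Next I would deduce that $\dot T(k)\equiv 0$: the Wronskian $W(\psi_+,\psi_-)$ is $x$-independent and related to $T$ by $T(k)=-2\ii k/W(\psi_+,\psi_-)$, and a direct computation using the modified evolutions reduces $\dot W$ to $(C_++C_-)W$ plus a remainder that vanishes after invoking \eqref{sp.ch} to kill $\psi_+\psi_-''-\psi_-\psi_+''$ and cancelling the $u'$-terms coming from $U$; since $C_++C_-=0$, this gives $\dot W=0$ and hence $\dot T=0$. For the reflection coefficient, I would then differentiate the scattering relation $T\psi_-=\overline{\psi_+}+R_+\psi_+$ in $t$ using the modified evolutions (with $\overline{\psi_+}$ carrying the constant $C_+(-k)=-C_+(k)$). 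The $U$-pieces cancel by linearity combined with the scattering relation itself, and substituting $C_-=-C_+$ collapses what remains to $\dot R_+=-2C_+R_+=-\frac{\ii\varkappa k}{\tfrac14+k^2}R_+$, which integrates to the stated formula.

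Finally, I would treat the norming constants in the same spirit: differentiating $\psi_+(\ii\kappa_j,\cdot,t)=\mu_j(t)\psi_-(\ii\kappa_j,\cdot,t)$ in $t$ causes the $U$-pieces to cancel and leaves
\[
\dot\mu_j=\bigl(C_+(\ii\kappa_j)-C_-(\ii\kappa_j)\bigr)\mu_j=-\frac{\varkappa\kappa_j}{\tfrac14-\kappa_j^2}\,\mu_j.
\]
Since $\Res_{\ii\kappa_j}T=\ii\mu_j\gamma_j^2$ by \eqref{eq:resT} and $T$ is $t$-independent, the product $\mu_j(t)\gamma_j(t)^2$ is conserved; taking the positive square root yields the claimed formula for $\gamma_j(t)$. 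The hard part will be the opening step---establishing that the Jost solutions really obey a \emph{modified} evolution with only a scalar additive correction to \eqref{LA-pair}---which rests on the compatibility identity together with the uniqueness of Jost solutions with prescribed asymptotics at $\pm\infty$; once this is in place everything downstream is algebraic manipulation plus the residue identity for $T$.
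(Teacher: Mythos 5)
Your proposal is correct, and it is essentially the standard argument: the paper itself states this lemma without proof, referring to \cite{co}, and the derivation there proceeds exactly as you outline (the Jost solutions satisfy the time equation \eqref{LA-pair} only up to the scalar corrections $C_\pm(k)=\pm\frac{\ii\varkappa k}{2(1/4+k^2)}$ fixed by their $x\to\pm\infty$ asymptotics, whence $\dot T=0$ via the Wronskian, $\dot R_+=-2C_+R_+$ from the scattering relation, and $\dot\mu_j=2C_+(\ii\kappa_j)\mu_j$ combined with the residue identity \eqref{eq:resT} gives $\gamma_j(t)$). Your constants all check out against the stated formulas; the only point needing care in a full write-up is the justification that $\partial_t\psi_\pm\to0$ in the relevant limits (uniform control from the Jost integral equations) and the extension of the modified evolution to $k=\ii\kappa_j$ in the upper half plane, which you correctly flag as the crux.
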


\subsection*{Vector Riemann\textendash Hilbert problem}
We will set up a vector Riemann--Hilbert problem as follows. Let $m(k,x,t)=\bigl(m_1(k,x,t)\ \ m_2(k,x,t)\bigr)$ be defined by
\begin{equation}\label{defm}
\begin{cases}
(\frac{w(x,t)}{\varkappa})^{1/4} \begin{pmatrix} T(k) \psi_-(k,x,t) \ee^{\ii k y}  & \psi_+(k,x,t) \ee^{-\ii k y} \end{pmatrix},
& \Im(k) > 0,\\
(\frac{w(x,t)}{\varkappa})^{1/4} \begin{pmatrix} \psi_+(-k,x,t) \ee^{\ii k y} & T(-k) \psi_-(-k,x,t) \ee^{-\ii k y} \end{pmatrix},
& \Im(k) < 0.
\end{cases}
\end{equation}
We are interested in the jump condition of $m(k,x,t)$ on the real $k$-axis (oriented
from negative to positive).
To formulate our jump condition we use the following convention:
when representing functions on $\D{R}$, the lower subscript denotes
the non-tangential limit from different sides.
By $m_+(k)$ we denote the limit from above and by $m_-(k)$ the one from below.
Using the notation above implicitly assumes that these limits exist in the sense that
$m(k)$ extends to a continuous function on the real axis.
In general, for an oriented contour $\Sigma$, $m_+(k)$ (resp.\ $m_-(k)$) will denote the limit
of $m(\kappa)$ as $\kappa\to k$ from the positive (resp.\ negative) side of $\Sigma$. Here
the positive (resp.\ negative) side is the one which lies to the left (resp.\ right) as one traverses the contour in the
direction of the orientation.

\begin{theorem}[vector RH-problem]\label{thm:vecrhp}
Let $\mathcal{S}_+(H(0))=\{R(k),\: (\kappa_j, \gamma_j), \: j=1,\dots,N\}$ be
the right scattering data of the operator $H(0)$ associated with the initial data $w(x,0)$. Then $m(k)\equiv m(k,x,t)$ defined in \eqref{defm}
is a solution of the following vector Riemann--Hilbert problem.
Find a function $m(k)$ which satisfies:
\begin{enumerate}[\rm(i)]
\item 
The analyticity condition: 
\begin{enumerate}[]
\item 
$m(k)$ is meromorphic away from the real axis with simple poles at $\pm\ii\kappa_j$.
\end{enumerate}
\item 
The jump condition, for $k\in\D{R}$:
\begin{equation} \label{eq:jumpcond}
\aligned
m_+(k)&=m_-(k) v(k), \\
v(k)&=\begin{pmatrix}
1-|R(k)|^2 & -\overline{R(k)}\ee^{-t\Phi(k)} \\
R(k)\ee^{t\Phi(k)} & 1
\end{pmatrix}.
\endaligned
\end{equation}
\item
The pole conditions, for $j=1,\dots,N$:
\begin{equation}\label{eq:polecond}
\Res_{\ii\kappa_j} m(k) = \lim_{k\to\ii\kappa_j} m(k)
\begin{pmatrix} 0 & 0\\ \ii\gamma_j^2 \ee^{t\Phi(\ii\kappa_j)}  & 0 \end{pmatrix}.
\end{equation}
In \emph{(ii)} and \emph{(iii)} the phase is given by
\begin{equation}
\Phi(k)= -\ii\frac{\varkappa k}{\tfrac{1}{4}+k^2} +2\ii k \frac {y}{t}.
\end{equation}
\item
The symmetry condition
\begin{equation} \label{eq:symcond}
m(-k) = m(k) \begin{pmatrix}0&1\\1&0\end{pmatrix}.
\end{equation}
\item
The normalization
\begin{equation}\label{eq:normcond}
\lim_{k\to\infty}m(k)= (1\quad 1).
\end{equation}
\end{enumerate}
\end{theorem}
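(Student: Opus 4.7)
\medskip

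\noindent\textbf{Proof proposal.} My plan is to verify the five conditions (i)--(v) one at a time, each by a direct computation from the definition \eqref{defm} using the scattering and asymptotic data supplied by the three preceding lemmas. None of these steps is deep individually; the work is mostly in tracking that the phase factors involving $y$ and $t$ line up with $t\Phi(k)$ as stated.

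For analyticity, I would note that in the upper half-plane $\psi_\pm(k,x,t)$ are analytic by the first lemma, while $T(k)$ is meromorphic with simple poles exactly at $\ii\kappa_j$ by Lemma~\ref{lem:tr_coef}; hence $m_1$ inherits these simple poles and $m_2$ is analytic there. In the lower half-plane the same facts applied to $-k$ give simple poles at $-\ii\kappa_j$. For the normalization \eqref{eq:normcond} I would combine the Jost asymptotics \eqref{eq:psiasym} with \eqref{eq:asymT}: the two exponentials $\ee^{\ii k H_{-1}(w)}$ and $\ee^{-\ii k H_{-1}(w)}$ coming from $T(k)$ and from $\psi_-$ cancel, leaving $(w/\varkappa)^{1/4}\cdot\varkappa^{1/4}w^{-1/4}(1+\ord(k^{-1}))=1+\ord(k^{-1})$ in each column. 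The symmetry \eqref{eq:symcond} is immediate from the fact that the two cases in \eqref{defm} become each other upon $k\mapsto-k$.

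The main computation is the jump condition \eqref{eq:jumpcond}. Starting from the scattering relation \eqref{relscat} on $\D{R}$, I would write $T(k)\psi_-(k)=\overline{\psi_+(k)}+R_+(k,t)\psi_+(k)$ and use the Schwartz reflection $\overline{\psi_+(k)}=\psi_+(-k)$ valid for real $k$ together with $\overline{\psi_-(k)}=\psi_-(-k)$ and $\overline{T(k)}=T(-k)$. Reading off $m_\pm$ from \eqref{defm}, this yields
\begin{align*}
m_+(k)_1 &= m_-(k)_1 + R(k)\ee^{t\Phi(k)}\,m_+(k)_2,\\
m_-(k)_2 &= m_+(k)_2 + \overline{R(k)}\ee^{-t\Phi(k)}\,m_-(k)_1,
\end{align*}
once one observes that the time-dependent factor from $R_+(k,t)=R(k)\ee^{-\ii\varkappa k t/(1/4+k^2)}$ combines with $\ee^{2\ii k y}$ to exactly form $\ee^{t\Phi(k)}$. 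Eliminating $m_+(k)_2$ from the first equation via the second gives the stated matrix factorization with determinant $1-|R(k)|^2$, consistent with \eqref{reltrpm}.

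For the pole condition \eqref{eq:polecond}, I would take the residue of $m_1$ at $\ii\kappa_j$ using $\Res_{\ii\kappa_j}T=\ii\mu_j\gamma_{+,j}(t)^2$ and $\psi_+(\ii\kappa_j,x,t)=\mu_j\psi_-(\ii\kappa_j,x,t)$ from Lemma~\ref{lem:tr_coef}; this gives
\[
\Res_{\ii\kappa_j}m_1 = \ii\gamma_{+,j}(t)^2\,\ee^{-2\kappa_j y}\,m_2(\ii\kappa_j).
\]
Substituting the time evolution $\gamma_j(t)^2=\gamma_j^2\ee^{\varkappa\kappa_j t/(1/4-\kappa_j^2)}$ and comparing with
\[
t\Phi(\ii\kappa_j)=\frac{\varkappa\kappa_j t}{1/4-\kappa_j^2}-2\kappa_j y
\]
produces exactly $\ii\gamma_j^2\ee^{t\Phi(\ii\kappa_j)}m_2(\ii\kappa_j)$, which is the residue formula. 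The corresponding condition at $-\ii\kappa_j$ then follows automatically from the symmetry \eqref{eq:symcond}. The main obstacle, if any, is purely bookkeeping: one must be careful that the ``spatial'' variable in the definition \eqref{defm} is $y$ (the Liouville variable), not $x$, so that the Jost phase $\ee^{\pm\ii k y}$ from \eqref{eq:psiasym} cancels cleanly and matches the $2\ii k y$ piece of $\Phi(k)$.
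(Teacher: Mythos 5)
Your verification is correct and takes essentially the same route as the paper's (sketched) proof: the jump from the scattering relation \eqref{relscat} with the time evolution of $R_+$ absorbed into $\ee^{t\Phi}$, the pole condition from $\Res_{\ii\kappa_j}T=\ii\mu_j\gamma_{+,j}(t)^2$ together with $\psi_+=\mu_j\psi_-$, symmetry by construction, and normalization from \eqref{eq:psiasym} and \eqref{eq:asymT}. Only a throwaway remark is off: the jump matrix $v$ has $\det v\equiv 1$ (it is its $(1,1)$ entry that equals $1-|R(k)|^2$), but this plays no role in your derivation.
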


\begin{remarks*}
\begin{enumerate}[(a)]
\item 
Note that $\det v(k)\equiv 1$ and $v(-k)=\sigma_1v(k)^{-1}\sigma_1$ with $\sigma_1=\left(\begin{smallmatrix}0&1\\1&0\end{smallmatrix}\right)$. 
\item
Note also that (iii) and (iv) imply the pole conditions, for $j=1,\dots,N$:
\[
\Res_{-\ii\kappa_j}m(k) = \lim_{k\to -\ii\kappa_j} m(k)
\begin{pmatrix} 0 & - \ii\gamma_j^2 \ee^{t\Phi(\ii\kappa_j)} \\ 0 & 0 \end{pmatrix}.
\]
\end{enumerate}
\end{remarks*}

\begin{proof}
The jump condition \eqref{eq:jumpcond} is a simple calculation using the scattering relations
\eqref{relscat} plus \eqref{reltrpm}. The pole conditions follow since $T(k)$ is meromorphic for $\Im(k) > 0$
with simple poles at $\ii\kappa_j$ and residues given by \eqref{eq:resT}.
The symmetry condition holds by construction and the normalization \eqref{eq:normcond}
is immediate from \eqref{eq:psiasym} and
\begin{equation}
T(k)=\ee^{\ii k H_{-1}(w)} \left(1+ \int_{-\infty}^{+\infty} q(r,t)\dd r (2\ii k)^{-1} + \ord(k^{-2}) \right)
\end{equation}
which implies
\begin{equation}\label{eq:asym_infty}
m(k,x,t)=\begin{pmatrix}1&1\end{pmatrix}+Q_+(y,t)\frac{1}{2\ii k}\begin{pmatrix}1&-1\end{pmatrix}+\ord(k^{-2}),
\end{equation}
where $Q_+(y,t)= \int_y^{+\infty} q(r,t)\dd r$.
\end{proof}

Observe that the pole condition at $\ii\kappa_j$ is sufficient since the one at $-\ii\kappa_j$ follows
by symmetry. Hence, the Riemann--Hilbert problem for the Camassa--Holm equation is, for given
scattering data $\mathcal{S}_+$, to find a sectionally meromorphic vector function $m(k)$
satisfying \eqref{eq:jumpcond}--\eqref{eq:normcond}. We will show that the solution given
in the above theorem is in fact the only one in Corollary~\ref{cor:unique} below.
Moreover, it should be pointed out that except for the phase, this Riemann--Hilbert
problem is identical to the one for the Korteweg--de Vries equation (cf.\ \cite[Thm.~2.3]{gt}).

Next we note the following useful asymptotics

\begin{lemma}\label{lem:asymp}
The function $m(k,x,t)$ defined in \eqref{defm} satisfies
\begin{equation}\label{eq:asymy}
\frac{m_1(\tfrac{\ii}{2},x,t)}{m_2(\tfrac{\ii}{2},x,t)} = \ee^{x-y},
\end{equation}
and
\begin{equation}\label{eq:asym}
m_1(k,x,t) m_2(k,x,t) =  \sqrt{\frac{w(x,t)}{\varkappa}}\Bigl( 1 + \tfrac{2\ii}{\varkappa} u(x,t) \left(k-\tfrac{\ii}{2}\right)
+\ord(k-\tfrac{\ii}{2})^2\Bigr).
\end{equation}
\end{lemma}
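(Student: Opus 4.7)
The plan is to exploit the fact that at $k=\tfrac{\ii}{2}$ the spectral parameter $\lambda=\tfrac{1}{\varkappa}(\tfrac{1}{4}+k^{2})$ vanishes, so the equation $-\psi''+\tfrac{1}{4}\psi=\lambda w\psi$ degenerates to $-\psi''+\tfrac{1}{4}\psi=0$ and becomes independent of $w$. Its two solutions $e^{\pm x/2}$ match the Jost normalisations at $\pm\infty$, yielding $\psi_+(\tfrac{\ii}{2},x,t)=e^{-x/2}$ and $\psi_-(\tfrac{\ii}{2},x,t)=e^{x/2}$ exactly, with Wronskian $W_{x}(\psi_{-},\psi_{+})(\tfrac{\ii}{2})=-1$. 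The standard identity $T(k)\,W_{x}(\psi_{-},\psi_{+})=2\ii k$ (valid on $\D{R}$ by applying \eqref{relscat} and computing $W(\psi_{+}(-k),\psi_{+}(k))=2\ii k$ from the plane-wave asymptotics at $x\to+\infty$, then extended to $\Im k>0$ by the identity principle once the pole/zero at each $\ii\kappa_{j}$ is accounted for) forces $T(\tfrac{\ii}{2})=1$. Substituting into \eqref{defm},
\[
\frac{m_{1}(\tfrac{\ii}{2},x,t)}{m_{2}(\tfrac{\ii}{2},x,t)}=\frac{(w/\varkappa)^{1/4}\,T(\tfrac{\ii}{2})\,e^{x/2}\,e^{-y/2}}{(w/\varkappa)^{1/4}\,e^{-x/2}\,e^{y/2}}=e^{x-y},
\]
which is \eqref{eq:asymy}.

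For \eqref{eq:asym}, write $m_{1}m_{2}=\sqrt{w/\varkappa}\,F(k)$ with $F(k):=T(k)\,\psi_{-}(k,x,t)\psi_{+}(k,x,t)$; the preceding step yields $F(\tfrac{\ii}{2})=1$, so I need only the $O(k-\tfrac{\ii}{2})$ correction. I will use the product identity for $G(k,x):=\psi_{-}(k,x,t)\psi_{+}(k,x,t)$,
\[
(G')^{2}-2GG''+4\bigl(\tfrac{1}{4}-\lambda w\bigr)G^{2}=W_{x}(\psi_{-},\psi_{+})^{2},
\]
obtained by substituting $\psi_{-}'\psi_{+}'=\tfrac{1}{2}\bigl(G''-2(\tfrac{1}{4}-\lambda w)G\bigr)$ (a consequence of $\psi_{\pm}''=(\tfrac{1}{4}-\lambda w)\psi_{\pm}$) into the elementary relation $(G')^{2}-W_{x}^{2}=4G\,\psi_{-}'\psi_{+}'$.

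Set $\epsilon:=k-\tfrac{\ii}{2}$, so $\lambda=\tfrac{\ii\epsilon}{\varkappa}+O(\epsilon^{2})$, and expand $G=1+\epsilon G_{1}(x)+O(\epsilon^{2})$, $W_{x}(\psi_{-},\psi_{+})=-1+\epsilon W_{1}+O(\epsilon^{2})$. The coefficient of $\epsilon$ in the product identity is the linear ODE
\[
-G_{1}''+G_{1}=\tfrac{2\ii w}{\varkappa}-W_{1}.
\]
Using $w=u-u_{xx}+\varkappa$, a particular solution is $G_{1}^{(p)}(x)=\tfrac{2\ii u(x,t)}{\varkappa}+(2\ii-W_{1})$; the homogeneous contributions $c_{1}e^{x}+c_{2}e^{-x}$ are excluded because $G(k,x)\to 1/T(k)$ as $x\to+\infty$ (from the scattering asymptotics of $\psi_{-}$), which forces $G_{1}$ to stay bounded. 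Differentiating $T(k)\,W_{x}(\psi_{-},\psi_{+})=2\ii k$ at $k=\tfrac{\ii}{2}$ gives $W_{1}=2\ii+T'(\tfrac{\ii}{2})$, so $G_{1}(x)=\tfrac{2\ii u(x,t)}{\varkappa}-T'(\tfrac{\ii}{2})$. Multiplying by $T(k)=1+\epsilon T'(\tfrac{\ii}{2})+O(\epsilon^{2})$, the $T'(\tfrac{\ii}{2})$-contributions cancel and
\[
F(k)=1+\tfrac{2\ii u(x,t)}{\varkappa}\,\epsilon+O(\epsilon^{2}),
\]
which is \eqref{eq:asym}.

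The main subtlety is pinning down the homogeneous part of $G_{1}$: boundedness at $x\to+\infty$ (via $G\to 1/T$) forces the integration constant to be exactly $-T'(\tfrac{\ii}{2})$, and it is the clean cancellation with the corresponding term from the expansion of $T(k)$ that isolates $u(x,t)$ in the final formula. This reflects the role of $k=\tfrac{\ii}{2}$ as the natural reconstruction point for the CH potential, analogous to $k=\infty$ in the KdV setting.
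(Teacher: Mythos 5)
Your treatment of \eqref{eq:asymy} is correct: at $k=\tfrac{\ii}{2}$ the spectral problem degenerates, so $\psi_\pm(\tfrac{\ii}{2},x,t)=\ee^{\mp x/2}$ exactly, $\wronsk(\psi_-,\psi_+)=-1=2\ii k$ there, hence $T(\tfrac{\ii}{2})=1$, and \eqref{eq:asymy} follows from \eqref{defm}; this is in substance what the paper extracts from the integral equations \eqref{int_eqs_02}. For \eqref{eq:asym} you take a genuinely different route: a quadratic (Wronskian-type) identity for $G=\psi_-\psi_+$ and a second-order ODE for the first $k$-derivative $G_1$ at $k=\tfrac{\ii}{2}$, instead of the paper's explicit first-order expansions \eqref{g_ass_01}--\eqref{g_ass_06} of the integral-equation solutions $g_\pm$ and of the Wronskian. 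The route is viable and your final formula agrees with \eqref{g_ass_04}--\eqref{g_ass_06}, but the step that fixes $G_1$ has a genuine gap as written.

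To determine $G_1$ you must exclude both homogeneous solutions $c_1\ee^{x}$ and $c_2\ee^{-x}$ of $-G_1''+G_1=\tfrac{2\ii w}{\varkappa}-W_1$, yet the only boundary information you invoke is $G(k,x)\to 1/T(k)$ as $x\to+\infty$. That controls the behavior at $+\infty$ only: it kills $c_1\ee^{x}$ and fixes the additive constant at $-T'(\tfrac{\ii}{2})$, but it cannot exclude $c_2\ee^{-x}$, which is bounded (indeed decaying) at $+\infty$; a nonzero $c_2$ would change the coefficient of $(k-\tfrac{\ii}{2})$ in \eqref{eq:asym}. You need the mirror statement $G(k,x)\to 1/T(k)$ as $x\to-\infty$ (expand $\psi_+$ in the Jost basis at $-\infty$, with the same Wronskian), or equivalently boundedness of $G_1$ on all of $\D{R}$, to force $c_2=0$. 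A secondary point: passing from ``$G(k,\cdot)\to 1/T(k)$ for each fixed $k$'' to a statement about $G_1=\partial_k G|_{k=\ii/2}$ differentiates an asymptotic relation in $k$, which requires uniformity of the large-$\abs{x}$ asymptotics in $k$ near $\tfrac{\ii}{2}$ (e.g.\ Cauchy estimates on a small circle, or dominated convergence in the integral equations); you assert it without justification. The paper's proof avoids both issues because the coefficients $F_\pm(x,t)$ in \eqref{g_ass_01}--\eqref{g_ass_02} are explicit, manifestly bounded integrals of $w-\varkappa$, so no homogeneous ambiguity arises in the first place.
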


\begin{proof}
The Jost solutions admit the representation $\psi_\pm(k,x,t)=\ee^{\pm\ii kx}g_\pm(k,x,t)$, where
$g_\pm(k,x,t)$ are the solutions of the integral equations
\begin{equation}\label{int_eqs_02}
g_\pm(k,x,t)=1\pm\frac{k^2+\tfrac{1}{4}}{2\ii\varkappa k}\int_x^{\pm\infty}(1-\ee^{\mp 2\ii k(x-r)})g_\pm(k,r,t)(w(r,t)-\varkappa)\dd r.
\end{equation}
Since $w$ satisfies \eqref{decay}, the solution of \eqref{int_eqs_02} exists and is unique (see, for example, \cite{mar}).
Moreover, $g_\pm$ is analytic for $\Im(k)>0$.

Since $k^2+\tfrac{1}{4} = (k-\tfrac{\ii}{2})(k+\tfrac{\ii}{2})$, we get
\begin{equation}\label{g_ass_01}
g_\pm(k,x,t)=1\pm \frac{\ii}{\varkappa}(k-\tfrac{\ii}{2})F_\pm(x,t)+\ord(k-\tfrac{\ii}{2})^2,\quad k\to \ii /2,
\end{equation}
where
\begin{equation}\label{g_ass_02}
F_\pm(x,t)=\int_x^{\pm\infty} (\ee^{\pm (x-r)}-1)(w(r,t)-\varkappa)\dd r.
\end{equation}
Moreover, differentiating with respect to $x$, we see
\begin{equation}\label{g_ass_03}
g_\pm'(k,x,t)=\pm \frac{\ii}{\varkappa}(k-\tfrac{\ii}{2})F_\pm'(x,t)+\ord(k-\tfrac{\ii}{2})^2,
\end{equation}
with
\begin{equation}\label{g_ass_02'}
F_\pm'(x,t)= \pm\int_x^{\pm\infty} \ee^{\pm (x-r)}(w(r,t)-\varkappa)\dd r.
\end{equation}
Using
\begin{equation}\label{momentum_inv}
u(x,t)= \left(1-\partial_x^2\right)^{-1} (w(x,t)-\varkappa) =
\frac{1}{2} \int_{\D{R}} \ee^{-|x-r|}  (w(r,t)-\varkappa)\dd r,
\end{equation}
we thus obtain
\begin{align}\notag
\psi_+(k,x,t)\psi_-(k,x,t)
&= 1+\frac{\ii}{\varkappa}(F_+(x,t)-F_-(x,t))(k-\tfrac{\ii}{2})+\ord(k-\tfrac{\ii}{2})^2 \\ \label{g_ass_04}
&= 1+\frac{\ii}{\varkappa}(2u(x,t)-H_0(u))(k-\tfrac{\ii}{2})+\ord(k-\tfrac{\ii}{2})^2,
\end{align}
where
\begin{equation}
H_0(u) = \int_{\D{R}} (w(x,t)-\varkappa)\dd x = \int_{\D{R}} u(x,t)\dd x
\end{equation}
is a conserved quantity of the CH equation.

Furthermore, straightforward calculations show that
\begin{align} \label{g_ass_05}
T(k)^{-1} 
&= \frac{\wronsk(\psi_-,\psi_+)}{2\ii k}\notag\\
&=1+\frac{\ii}{\varkappa}\bigl(F_+(x,t)-F_-(x,t)- F'_+(x,t) - F_-'(x,t)\bigr) (k-\tfrac{\ii}{2})+\ord(k-\tfrac{\ii}{2})^2\notag\\
&= 1-\frac{\ii}{\varkappa} H_0(u) (k-\tfrac{\ii}{2})+\ord(k-\tfrac{\ii}{2})^2,
\end{align}
where $\wronsk(f,g)=fg'-f'g$ is the usual Wronskian.
Therefore,
\begin{equation}\label{g_ass_06}
T(k) =  1 + \frac{\ii}{\varkappa} H_0(u) \left(k-\tfrac{\ii}{2}\right) +O\!\left(k-\tfrac{\ii}{2}\right)^2.
\end{equation}
Substituting \eqref{g_ass_01} and \eqref{g_ass_06} into \eqref{defm}, we arrive at \eqref{eq:asymy}.
Substituting \eqref{g_ass_04} and \eqref{g_ass_06} into \eqref{defm}, we obtain \eqref{eq:asym}.
\end{proof}

\subsection*{Regular Riemann\textendash Hilbert problem}
For our further analysis it will be convenient to rewrite the pole condition as a jump
condition and hence turn our meromorphic Riemann--Hilbert problem into a holomorphic Riemann--Hilbert problem.

Choose $\varepsilon$ so small that the discs $\abs{k-\ii\kappa_j}<\varepsilon$ lie inside the upper half plane and do not intersect. Then redefine $m(k)$ in a neighborhood of $\ii\kappa_j$, resp.~$- \ii\kappa_j$ according to
\begin{equation}\label{eq:redefm}
\hat m(k) = 
\begin{cases} 
m(k) 
\begin{pmatrix} 
1 & 0 \\
-\frac{\ii\gamma_j^2\ee^{t\Phi(\ii\kappa_j)} }{k- \ii\kappa_j} & 1 
\end{pmatrix},
&\abs{k-\ii\kappa_j}<\varepsilon,\\
m(k) 
\begin{pmatrix} 
1 & \frac{\ii\gamma_j^2 \ee^{t\Phi(\ii\kappa_j)} }{k+ \ii\kappa_j} \\
0 & 1 
\end{pmatrix},
&\abs{k+\ii\kappa_j}<\varepsilon,\\
m(k),&\text{else}.
\end{cases}
\end{equation}
Note that we redefined $m(k)$ such that it respects our symmetry \eqref{eq:symcond}. 

Then a straightforward calculation using
$\Res_{\ii\kappa}m(k)=\lim_{k\to\ii\kappa}(k-\ii\kappa)m(k)$ shows:

\begin{lemma}[regular RH-problem]\label{lem:holrhp}
Let $C_j$ be the circle $\abs{k-\ii\kappa_j}=\varepsilon$ with $\varepsilon>0$ as above, $1\leq j\leq N$. Let $\hat m(k)$ be defined as in \eqref{eq:redefm}. Then $\hat m(k)$ is a solution of the following vector Riemann\textendash Hilbert problem. Find a function $\hat m(k)$ which satisfies:
\begin{enumerate}[\rm(i)]
\item
$\hat m(k)$ is holomorphic away from the real axis and from the circles $C_j$ and $\bar C_j$, for $j=1,\dots,N$. 
\item
The jump condition \eqref{eq:jumpcond} across the real axis.
\item
The additional jump conditions across the circles $C_j$, $\bar C_j$, for $j=1,\dots,N$:
\begin{equation} \label{eq:jumpcond2}
\aligned
\hat m_+(k) &=\hat m_-(k) \begin{pmatrix} 1 & 0 \\
-\frac{\ii\gamma_j^2 \ee^{t\Phi(\ii\kappa_j)}}{k-\ii\kappa_j} & 1 \end{pmatrix},\quad k\in C_j,\\
\hat m_+(k) &=\hat m_-(k) 
\begin{pmatrix} 1 & -\frac{\ii\gamma_j^2 \ee^{t\Phi(\ii\kappa_j)}}{k+ \ii\kappa_j} \\
0 & 1
\end{pmatrix},\quad k\in\bar C_j,
\endaligned
\end{equation}
where $C_j$ is oriented counterclockwise and $\bar C_j$ is oriented clockwise.
\item
The symmetry condition \eqref{eq:symcond}.
\item
The normalization condition \eqref{eq:normcond}.
\end{enumerate}
\end{lemma}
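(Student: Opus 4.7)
The plan is a direct verification that $\hat m(k)$ defined by \eqref{eq:redefm} satisfies properties (i)--(v); no new analytic input beyond Theorem~\ref{thm:vecrhp} is needed, so the proof is essentially bookkeeping around the poles $\pm\ii\kappa_j$. I would organize it as four short checks, one per item that actually changes.

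The main step is removing the poles. Near $k=\ii\kappa_j$ the pole condition \eqref{eq:polecond} says that only the first column of $m(k)$ has a simple pole, with residue $\ii\gamma_j^2\ee^{t\Phi(\ii\kappa_j)}\,m_2(\ii\kappa_j,x,t)$, while $m_2$ is analytic there. Expanding $m_2(k)=m_2(\ii\kappa_j)+(k-\ii\kappa_j)\,m_2'(\ii\kappa_j)+\ord((k-\ii\kappa_j)^2)$ and multiplying by the triangular matrix in \eqref{eq:redefm}, the singular contributions in the first column cancel and one obtains an analytic first column at $\ii\kappa_j$; the second column is $m_2$, already analytic. The symmetric computation at $-\ii\kappa_j$ is identical, showing (i).

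For (iii), since the redefinition is the identity outside the discs, on $C_j$ (oriented counterclockwise, so the positive side is the interior) one has $\hat m_+(k)=m(k)G_j(k)$ and $\hat m_-(k)=m(k)$ with $G_j(k)=\bigl(\begin{smallmatrix}1&0\\-\ii\gamma_j^2\ee^{t\Phi(\ii\kappa_j)}/(k-\ii\kappa_j)&1\end{smallmatrix}\bigr)$, whence the first line of \eqref{eq:jumpcond2} follows; on $\bar C_j$ oriented clockwise the positive side is the exterior, so one only has to check that the stated jump matrix equals the inverse of the triangular factor used inside $|k+\ii\kappa_j|<\varepsilon$, which is immediate. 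Property (iv) is ensured by the very form of \eqref{eq:redefm}: a brief computation using $m(-k)=m(k)\sigma_1$ together with $\sigma_1\bigl(\begin{smallmatrix}1&0\\ a&1\end{smallmatrix}\bigr)\sigma_1=\bigl(\begin{smallmatrix}1&a\\0&1\end{smallmatrix}\bigr)$ shows that the modification around $+\ii\kappa_j$ is exchanged with the modification around $-\ii\kappa_j$ under $k\mapsto -k$, so the symmetry \eqref{eq:symcond} is preserved. Properties (ii) and (v) are inherited from Theorem~\ref{thm:vecrhp} because the discs are disjoint from the real axis and from a neighborhood of $\infty$, so the transformation acts as the identity there.

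There is no genuine obstacle; the only subtlety is getting the orientations and signs right on $\bar C_j$, and making sure that the constants $\ii\gamma_j^2\ee^{t\Phi(\ii\kappa_j)}$ appearing in \eqref{eq:redefm} are exactly those dictated by \eqref{eq:polecond}, so that the residue of the first column of $m(k)$ is precisely compensated, rather than merely shifted.
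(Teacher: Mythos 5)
Your verification is correct and is exactly what the paper does: the paper's proof consists of the remark that the lemma follows from a straightforward calculation with $\Res_{\ii\kappa}m(k)=\lim_{k\to\ii\kappa}(k-\ii\kappa)m(k)$ applied to the redefinition \eqref{eq:redefm}, which is precisely your pole-cancellation, orientation, and symmetry bookkeeping. No gaps.
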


\begin{figure}[ht]
\begin{pspicture}(-7.5,-3.3)(7.5,3.7)
\psset{unit=.8}
\psline[linewidth=1pt,arrowsize=3pt 2,arrowlength=1.3]{->}(-8,0)(-3.7,0)
\psline[linewidth=1pt,arrowsize=3pt 2,arrowlength=1.3]{->}(-3.8,0)(4.1,0)
\psline[linewidth=1pt]{-}(4,0)(8,0)
\psline[linewidth=.15pt,linestyle=dashed](0,-4.8)(0,4.8)
\psarc[linewidth=.8pt,arrowsize=3pt 2,arrowlength=1.3]{->}(0,.7){.3}{-95}{100}
\psarc[linewidth=.8pt,arrowsize=3pt 2,arrowlength=1.3]{->}(0,.7){.3}{-270}{-80}
\psarc[linewidth=.8pt,arrowsize=3pt 2,arrowlength=1.3]{->}(0,2){.3}{-95}{100}
\psarc[linewidth=.8pt,arrowsize=3pt 2,arrowlength=1.3]{->}(0,2){.3}{-270}{-80}
\psarc[linewidth=.8pt,arrowsize=3pt 2,arrowlength=1.3]{->}(0,3.4){.3}{-95}{100}
\psarc[linewidth=.8pt,arrowsize=3pt 2,arrowlength=1.3]{->}(0,3.4){.3}{-270}{-80}
\psarc[linewidth=.8pt,arrowsize=3pt 2,arrowlength=1.3]{<-}(0,-.7){.3}{-95}{100}
\psarc[linewidth=.8pt,arrowsize=3pt 2,arrowlength=1.3]{<-}(0,-.7){.3}{-270}{-80}
\psarc[linewidth=.8pt,arrowsize=3pt 2,arrowlength=1.3]{<-}(0,-2){.3}{-95}{100}
\psarc[linewidth=.8pt,arrowsize=3pt 2,arrowlength=1.3]{<-}(0,-2){.3}{-270}{-80}
\psarc[linewidth=.8pt,arrowsize=3pt 2,arrowlength=1.3]{<-}(0,-3.4){.3}{-95}{100}
\psarc[linewidth=.8pt,arrowsize=3pt 2,arrowlength=1.3]{<-}(0,-3.4){.3}{-270}{-80}
\rput(7.6,-.3){$\D{R}$}
\rput(-.3,4.4){$\frac{\ii}{2}$}
\rput(-.4,-4.4){$-\frac{\ii}{2}$}
\put(.4,.6){\scriptsize$C_N$}
\put(0,.7){\circle*{.1}}
\put(.4,1.9){\scriptsize$C_j$}
\put(0,2){\circle*{.1}}
\put(.4,3.3){\scriptsize$C_1$}
\put(0,3.4){\circle*{.1}}
\put(0,4.4){\circle*{.1}}
\put(.4,-.8){\scriptsize$\bar C_N$}
\put(0,-.7){\circle*{.1}}
\put(.4,-2.1){\scriptsize$\bar C_j$}
\put(0,-2){\circle*{.1}}
\put(.4,-3.5){\scriptsize$\bar C_1$}
\put(0,-3.4){\circle*{.1}}
\put(0,-4.4){\circle*{.1}}
\end{pspicture}
\caption{Contour of the regular RH problem}
\label{fig:contour}
\end{figure}
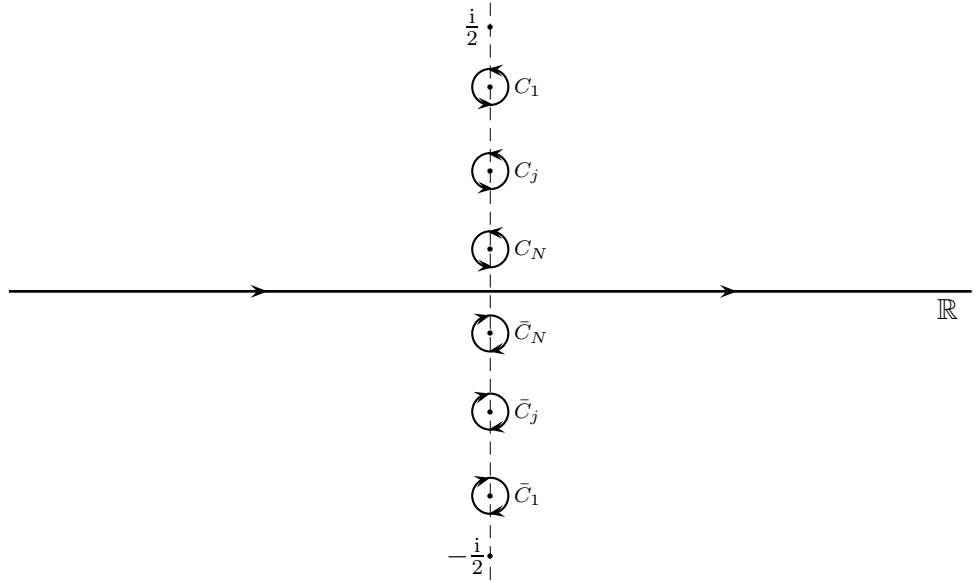

\subsection*{Uniqueness result}
Next we turn to uniqueness of the solution of this vector Riemann--Hilbert problem. This will also explain the reason for our symmetry condition. We begin by observing that if there is a point $k_1\in\D{C}$, such that $m(k_1)=\begin{pmatrix}0&0\end{pmatrix}$, then $n(k)=\frac{1}{k-k_1} m(k)$ is a solution of the associated vanishing Riemann--Hilbert problem, i.e., it satisfies the same jump and pole conditions as $m(k)$ but the normalization now reads $\lim_{\kappa\to\infty} m(\ii\kappa) = \begin{pmatrix}0&0\end{pmatrix}$. In particular, there is a whole family of solutions $m(k)+ \vartheta\, n(k)$ for any $\vartheta\in\D{C}$. However, these solutions will clearly violate the symmetry condition unless $\vartheta=0$! Hence, without the symmetry condition, the solution of our vector Riemann--Hilbert problem will not be unique in such a situation. Moreover, a look at the one-soliton solution verifies that this case indeed can happen.

\begin{lemma}[one-soliton solution]\label{lem:singlesoliton}
Suppose that the reflection coefficient vanishes, \emph{i.e.}, $R(k,t)\equiv 0$ and that there is only one eigenvalue $\kappa\in(0,\frac{1}{2})$, with the norming constant $\gamma(t)$. Then the unique solution of the Riemann--Hilbert problem \eqref{eq:jumpcond}--\eqref{eq:normcond}
is given by
\begin{align}\label{eq:oss}
m_0(k) &= \begin{pmatrix} f(k) & f(-k) \end{pmatrix} \\
\notag 
f(k) &= \frac{1}{1+\alpha}
\left(1+ \alpha \frac{k+\ii\kappa}{k-\ii\kappa}\right), \\
\notag
\alpha&= \frac{\gamma^2}{2\kappa} \ee^{t\Phi(\ii\kappa)}.
\end{align}
In particular,
\begin{align}\label{eq:uoss}
u(x,t) &= \frac{32\varkappa\kappa^2}{(1-4\kappa^2)^2}\ \alpha(y,t)
\biggl((1+\alpha(y,t))^2+ \frac{16\kappa^2}{1-4\kappa^2}\ \alpha(y,t)\biggr)^{-1},\\          \label{eq:woss}
w(x,t) &= \varkappa\left(1+\frac{16\kappa^2}{1-4\kappa^2}\frac{\alpha(y,t)}{(1+\alpha(y,t))^2}\right)^2,
\end{align}
where
\begin{equation}
x = y + \log\frac{1+\alpha(y,t) \frac{1+2\kappa}{1-2\kappa}}{1+\alpha(y,t) \frac{1-2\kappa}{1+2\kappa}}\,.
\end{equation}
\end{lemma}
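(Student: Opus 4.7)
Since $R(k,t)\equiv 0$, the jump matrix $v(k)$ is the identity on $\D{R}$, so condition (ii) forces $m(k)$ to extend meromorphically across the real axis. The only singularities are the simple poles at $\pm\ii\kappa$ prescribed by (iii) and the symmetry consequence. Together with the normalization $m(\infty)=(1\ \ 1)$, this means $m(k)$ is a rational function bounded at $\infty$ with exactly two simple poles, hence of the form
\[
m_1(k)=1+\frac{a_1}{k-\ii\kappa}+\frac{a_2}{k+\ii\kappa},\qquad
m_2(k)=1+\frac{b_1}{k-\ii\kappa}+\frac{b_2}{k+\ii\kappa}.
\]
First I would impose the symmetry $m_1(-k)=m_2(k)$ to eliminate half the parameters, obtaining $b_1=-a_2$ and $b_2=-a_1$. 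Then the residue condition at $\ii\kappa$, which reads $\Res_{\ii\kappa} m_1 = \ii\gamma^2\ee^{t\Phi(\ii\kappa)}\, m_2(\ii\kappa)$ and $\Res_{\ii\kappa} m_2 = 0$, forces $a_2=0$ and gives a single scalar equation for $a_1$. Writing $\alpha:=\frac{\gamma^2}{2\kappa}\ee^{t\Phi(\ii\kappa)}$ converts this equation into $a_1(1+\alpha)=2\ii\kappa\alpha$, yielding $a_1=\frac{2\ii\kappa\alpha}{1+\alpha}$. A direct rearrangement identifies $m_1(k)$ with the $f(k)$ of \eqref{eq:oss} and, by the symmetry, $m_2(k)=f(-k)$. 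Uniqueness in this reflectionless case is elementary, since a second solution would yield by subtraction an entire bounded rational difference respecting the symmetry and vanishing at infinity; I would cite the more general Corollary~\ref{cor:unique} below for completeness.

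To extract $u$ and $w$ I would use Lemma~\ref{lem:asymp}. Identity \eqref{eq:asymy} gives
\[
\ee^{x-y}=\frac{f(\tfrac{\ii}{2})}{f(-\tfrac{\ii}{2})}
=\frac{1+\alpha\,\frac{1+2\kappa}{1-2\kappa}}{1+\alpha\,\frac{1-2\kappa}{1+2\kappa}},
\]
which is exactly the implicit relation for $y$ as a function of $x$. Next, \eqref{eq:asym} gives $\sqrt{w/\varkappa}=m_1(\tfrac{\ii}{2})m_2(\tfrac{\ii}{2})=f(\tfrac{\ii}{2})f(-\tfrac{\ii}{2})$, and a short algebraic simplification (combining the two factors over the common denominator $(1+\alpha)^2(1-4\kappa^2)$ and using $(1+2\kappa)^2+(1-2\kappa)^2=2(1+4\kappa^2)$) produces
\[
\sqrt{w/\varkappa}=1+\frac{16\kappa^2}{1-4\kappa^2}\,\frac{\alpha}{(1+\alpha)^2},
\]
which is \eqref{eq:woss} after squaring.

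For $u$ I would expand $m_1(k)m_2(k)=f(k)f(-k)$ as a Taylor series around $k=\tfrac{\ii}{2}$ to order one and match coefficients with $\sqrt{w/\varkappa}\bigl(1+\tfrac{2\ii}{\varkappa}u(k-\tfrac{\ii}{2})+\ord(k-\tfrac{\ii}{2})^2\bigr)$ from \eqref{eq:asym}. Concretely, differentiating $f(k)f(-k)$ at $k=\tfrac{\ii}{2}$ and dividing by $f(\tfrac{\ii}{2})f(-\tfrac{\ii}{2})$ gives $\tfrac{2\ii}{\varkappa}u$ as a rational expression in $\alpha$, $\kappa$, and $\varkappa$. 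After collecting terms over the common denominator $(1+\alpha)^2(1-4\kappa^2)+16\kappa^2\alpha$ used above, the numerator reduces to $\frac{32\varkappa\kappa^2}{(1-4\kappa^2)^2}\alpha$, which is precisely \eqref{eq:uoss}.

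The main obstacle is the bookkeeping in this last step: the rational expression for the logarithmic derivative of $f(k)f(-k)$ at $k=\tfrac{\ii}{2}$ must be algebraically massaged into the clean one-soliton form, using the identity
$(1-4\kappa^2)(1+\alpha)^2+16\kappa^2\alpha=(1+\alpha)^2+\frac{16\kappa^2}{1-4\kappa^2}\alpha\cdot(1-4\kappa^2)$
so that the denominator in \eqref{eq:uoss} appears naturally. Everything else is straightforward linear algebra in the RH problem plus the already-derived asymptotic identities of Lemma~\ref{lem:asymp}.
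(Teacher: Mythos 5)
Your proposal is correct and takes essentially the same route as the paper: since $R\equiv 0$ the jump on $\D{R}$ disappears, the symmetry condition reduces $m$ to $\begin{pmatrix} f(k) & f(-k)\end{pmatrix}$ with a rational ansatz whose coefficients are uniquely fixed by the pole condition and the normalization, and \eqref{eq:uoss}, \eqref{eq:woss} and the $x$--$y$ relation are then read off from Lemma~\ref{lem:asymp}. One minor quibble: the difference of two solutions is not entire (it may still carry simple poles at $\pm\ii\kappa$ subject to the homogeneous pole conditions), but this is immaterial since your explicit parametrization already gives uniqueness, exactly as in the paper's argument.
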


\begin{proof}
By assumption the reflection coefficient vanishes and so the jump along the real axis
disappears. Therefore and by the symmetry condition, we know that the solution is of the form
$m_0(k) = \begin{pmatrix} f(k) & f(-k) \end{pmatrix}$ where $f(k)$ is meromorphic. Furthermore the function
$f(k)$ has only a simple pole at $\ii\kappa$, so that we can make the ansatz
$f(k)=C+D\frac{k+\ii\kappa}{k-\ii\kappa}$. Then the constants $C$ and $D$ are uniquely determined by the
pole conditions and the normalization. Formul\ae\ \eqref{eq:uoss}-\eqref{eq:woss} are obtained applying Lemma \ref{lem:asymp}, formula \eqref{eq:asym}.
\end{proof}

In fact, observe that $f(k_1)=f(-k_1)=0$ if and only if $k_1=0$ and $2\kappa=\gamma^2\ee^{t\Phi (\ii\kappa)} $.
Furthermore, even in the general case $m(k_1)=\begin{pmatrix}0&0\end{pmatrix}$ can only occur at $k_1=0$ as the
following lemma shows.

\begin{lemma}\label{lem:resonant}
If $m(k_1) = \begin{pmatrix}0&0\end{pmatrix}$ for $m$ defined as in \eqref{defm}, then $k_1  = 0$. Moreover,
the zero of at least one component is simple in this case.
\end{lemma}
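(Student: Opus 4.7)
The approach exploits the explicit expression \eqref{defm} of $m$ in terms of the Jost solutions $\psi_\pm$ and the transmission coefficient $T$, together with the symmetry $m_1(-k)=m_2(k)$ forced by \eqref{eq:symcond}. This reduces the hypothesis $m(k_1)=(0,0)$ to $m_1(k_1)=m_1(-k_1)=0$, and \eqref{defm} then shows that it is equivalent to the simultaneous vanishing $\psi_-(k_1,x,t)=\psi_+(k_1,x,t)=0$ at our fixed point $x$, provided the multiplicative factors $T(\pm k_1)$ and the exponentials do not vanish.

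To force $k_1=0$ I would invoke the Wronskian identity $\wronsk(\psi_-,\psi_+)(k)=2\ii k/T(k)$. Since the Wronskian is $x$-independent, simultaneous vanishing of $\psi_-(k_1,x,t)$ and $\psi_+(k_1,x,t)$ at some $x$ gives $\wronsk(\psi_-,\psi_+)(k_1)=0$, contradicting $2\ii k_1/T(k_1)\neq 0$ whenever $k_1\neq 0$ and $T(k_1)$ is finite and non-zero. For $\Im(k_1)>0$, $k_1\neq\ii\kappa_j$, this is immediate since the only upper-half-plane poles of $T$ are the $\ii\kappa_j$. The case $\Im(k_1)<0$ is handled by the symmetry. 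For $k_1\in\D{R}\setminus\{0\}$, the decay hypothesis \eqref{decay} gives $|R(k_1)|<1$, hence $T(k_1)\neq 0$, and the same argument closes the case.

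For the simplicity statement, the symmetry implies that the orders of zero of $m_1$ and $m_2$ at $k=0$ coincide; call it $n$. The product $P(k):=m_1(k)m_2(k)=\sqrt{w/\varkappa}\,T(k)\psi_-(k,x,t)\psi_+(k,x,t)$ is then even in $k$ and vanishes to order $2n$ at the origin. I would distinguish the non-resonant case (where $T$ has a simple zero at $0$ and $\psi_-(0,\cdot\,),\psi_+(0,\cdot\,)$ are linearly independent, so exactly one of them vanishes at $x$) from the resonant case (where $T(0)\neq 0$ and $\psi_-(0,\cdot\,),\psi_+(0,\cdot\,)$ are linearly dependent, so both vanish at $x$); a direct Taylor expansion shows in either case that $P(k)$ vanishes to order exactly two, whence $n=1$.

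The main obstacle is this last expansion: one has to check that $\partial_k\psi_\pm(0,x,t)\neq 0$ at every $x$ at which $\psi_\pm(0,x,t)=0$. This can be deduced from the observation that $\partial_k\psi_+$ satisfies the same ODE as $\psi_+$ at $k=0$ (since $\partial_k\lambda|_{k=0}=0$): if both $\psi_+(0,x_0,t)$ and $\partial_k\psi_+(0,x_0,t)$ were to vanish, then $\wronsk(\psi_+(0,\cdot\,),\partial_k\psi_+(0,\cdot\,))\equiv 0$, contradicting their different behaviour at $+\infty$ ($\psi_+(0,x)\to 1$ whereas $\partial_k\psi_+(0,x)\sim\ii x$).
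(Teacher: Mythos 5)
Your reduction of the hypothesis to the simultaneous vanishing of $\psi_+(k_1,x,t)$ and $\psi_-(k_1,x,t)$, together with the Wronskian identity $\wronsk(\psi_-,\psi_+)=2\ii k/T(k)$, correctly disposes of every $k_1$ at which $T$ is finite and nonzero, and your proof of simplicity at $k_1=0$ (the observation that $\partial_k\psi_\pm(0,\cdot)$ solves the same equation as $\psi_\pm(0,\cdot)$ because $\partial_k\lambda$ vanishes at $k=0$, so the two cannot vanish at the same $x$) is sound and in fact parallels the paper's computation of $\dot\wronsk(0)=-\ii(c+c^{-1})\neq 0$, covering in addition the non-resonant case. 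But there is a genuine gap in the first assertion: you never treat $k_1=\ii\kappa_j$, and your Wronskian argument cannot, since at an eigenvalue the Wronskian does vanish ($\psi_+(\ii\kappa_j,\cdot)$ and $\psi_-(\ii\kappa_j,\cdot)$ are genuinely linearly dependent) and $T$ has a pole rather than a finite nonzero value there. At a point $x$ where the eigenfunction vanishes, one has $\psi_+(\ii\kappa_j,x)=\psi_-(\ii\kappa_j,x)=0$, so $m_2(\ii\kappa_j)=0$, and the simple pole of $T$ in $m_1=(\frac{w}{\varkappa})^{1/4}T\psi_-\ee^{\ii ky}$ is cancelled by the $k$-zero of $\psi_-(\cdot,x)$; hence $m(\ii\kappa_j)=\begin{pmatrix}0&0\end{pmatrix}$ is not excluded by anything you have said. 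One must show that $\lim_{k\to\ii\kappa_j}T(k)\psi_-(k,x)\neq 0$, i.e.\ that the zero of $k\mapsto\psi_-(k,x)$ at $\ii\kappa_j$ is at most simple, and your device from $k=0$ does not transfer because $\partial_k\lambda=2k/\varkappa\neq 0$ at $k=\ii\kappa_j$, so $\partial_k\psi_\pm$ no longer solves the same equation. The paper closes precisely this case by a different idea: the diagonal Green's function $g=\wronsk(k)^{-1}\psi_+(k,x)\psi_-(k,x)$ is Herglotz in $z=-k^2$ and therefore has at most a simple zero, which combined with the simple pole of $T$ gives $m_1(\ii\kappa_j)\neq 0$, a contradiction. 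Without this (or an equivalent) argument the statement $k_1=0$ is not proved.

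A further, minor, point: the symmetry \eqref{eq:symcond} relates the boundary value of $m_1$ from one side of $\D{R}$ to that of $m_2$ from the other side, so it does not by itself give equal orders of vanishing of $m_1$ and $m_2$ at $k=0$ within one half-plane. This is harmless for your conclusion, since both components vanish by hypothesis and your expansion of $m_1m_2=\sqrt{w/\varkappa}\,T\psi_-\psi_+$ to order exactly two already forces each zero to be simple; but the sentence invoking the symmetry should be dropped or justified.
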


\begin{proof}
By \eqref{defm} the condition $m(k_1) = \begin{pmatrix}0&0\end{pmatrix}$ implies that the Jost solutions $\psi_-(k,x)$ and
$\psi_+(k,x)$ are linearly dependent or that the transmission coefficient $T(k_1)=0$. This can only happen at the band edge, $k_1 = 0$
or at an eigenvalue $k_1=\ii\kappa_j$.

We begin with the case $k_1=\ii\kappa_j$. In this case the $\psi_-(k,x)$ and
$\psi_+(k,x)$ are linearly dependent. Moreover, $T(\,\cdot\,)$ has a simple pole at $k=k_1$ since the derivative of the Wronskian
$\wronsk(k)= \psi_+(k,x)\psi_-'(k,x)-\psi_+'(k,x)\psi_-(k,x)$ does not vanish by the well-known formula
\[
\frac{\dd}{\dd k}\wronsk(k) |_{k=k_1} = - \frac{2k_1}{\varkappa}\int_{\D{R}} \psi_+(k_1,x)  \psi_-(k_1,x) w(x)\dd x \neq 0
\]
(cf.~Lemma \ref{lem:tr_coef}). The diagonal Green's function $g(z,x)=\wronsk(k)^{-1} \psi_+(k,x) \psi_-(k,x)$ is
Herglotz as a function of $z=-k^2$ and hence can have at most a simple zero at $z=-k_1^2$.
Since $z\mapsto-k^2$ is conformal away from $z=0$ the same is true as a function of $k$. Hence, if
$\psi_+(\ii\kappa_j,x) = \psi_-(\ii\kappa_j,x) =0$, both can have at most a simple zero at $k=\ii\kappa_j$.
But $T(k)$ has a simple pole at $\ii\kappa_j$ and hence $T(k) \psi_-(k,x)$ cannot
vanish at $k=\ii\kappa_j$, a contradiction.

It remains to show that one zero is simple in the case $k_1=0$. In fact,
one can show that 
\[
\frac{\dd}{\dd k}\wronsk(k) |_{k=k_1} \neq 0
\]
in this case as follows:
first of all note that $\dot{\psi}_\pm(k)$ (where the dot denotes the derivative with respect to
$k$) again solves
\[
H\dot{\psi}_\pm(k_1) = \frac{1}{\varkappa}\left(\frac14+k_1^2\right) \dot{\psi}_\pm(k_1)
\]
if $k_1=0$. Moreover, by
$\wronsk(k_1)=0$ we have $\psi_+(k_1) = c\, \psi_-(k_1)$ for some constant $c$ (independent of $x$).
Thus we can compute
\begin{align*}
\dot{\wronsk}(k_1) 
&=\wronsk(\dot{\psi}_+(k_1),\psi_-(k_1))+\wronsk(\psi_+(k_1),\dot{\psi}_-(k_1))\\
&= c^{-1}\wronsk(\dot{\psi}_+(k_1),\psi_+(k_1))+c\wronsk(\psi_-(k_1),\dot{\psi}_-(k_1))
\end{align*}
by letting $x\to+\infty$ for the first and $x\to-\infty$ for the second Wronskian (in which case we can
replace $\psi_\pm(k)$ by $\ee^{\pm\ii k x}$),
which gives
\[
\dot{\wronsk}(k_1)=-\ii(c+c^{-1}).
\]
Hence the Wronskian has a simple zero. But if both functions had more than
simple zeros, so would the Wronskian, a contradiction.
\end{proof}

By \cite[Theorem~3.2]{gt} we obtain

\begin{corollary}\label{cor:unique}
The function $m(k,x,t)$ defined in \eqref{defm} is the only solution of the
vector Riemann--Hilbert problem \eqref{eq:jumpcond}--\eqref{eq:normcond}.
\end{corollary}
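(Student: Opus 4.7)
The plan is to invoke the general uniqueness theorem \cite[Theorem~3.2]{gt} for symmetric vector Riemann--Hilbert problems; it will suffice to verify its hypotheses using the preparatory work above.

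First I would note that the symmetry hypothesis \eqref{eq:symcond} is built into our problem. The only nontrivial hypothesis is a nonresonance condition: any point $k_1$ at which $m(k_1) = \begin{pmatrix}0 & 0\end{pmatrix}$ must be fixed by $k \mapsto -k$ (so $k_1 = 0$), and at such a point at least one component of $m$ must have only a simple zero. Both statements are precisely the content of Lemma~\ref{lem:resonant}, so the cited theorem applies.

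For orientation, the underlying mechanism in our setting runs as follows. Given a second solution $\tilde m$, one forms the $2\times 2$ matrix $M(k)$ whose rows are $m(k)$ and $\tilde m(k)$. The identity $\det v \equiv 1$ (cf.\ Remark~(a) after Theorem~\ref{thm:vecrhp}) kills the jump of $\det M$ across $\D R$, and the nilpotent form of the residue matrices in \eqref{eq:polecond} makes the naive double pole of $\det M$ at $\pm \ii\kappa_j$ cancel together with its single-pole residue; hence $\det M$ is entire. The normalization \eqref{eq:normcond} forces $M(\infty)$ to be the rank-one matrix $\left(\begin{smallmatrix} 1 & 1 \\ 1 & 1 \end{smallmatrix}\right)$, so $\det M \equiv 0$ by Liouville. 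This yields $\tilde m = \lambda \, m$ away from zeros of $m$, for a meromorphic scalar $\lambda$; Lemma~\ref{lem:resonant} confines the poles of $\lambda$ to $k = 0$ and bounds their order by one, while normalization at infinity gives $\lambda(\infty) = 1$, so $\lambda(k) = 1 + c/k$. Applying \eqref{eq:symcond} to both $m$ and $\tilde m$ forces $\lambda(-k) = \lambda(k)$, hence $c = 0$ and $\tilde m \equiv m$.

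The main obstacle has been entirely absorbed into the preparatory Lemma~\ref{lem:resonant}; once that lemma is in hand, the corollary is a formal citation plus the verification just sketched. The genuine subtlety of the vector formulation—that a naive difference argument fails because $m$ is only determined up to addition of $\vartheta\, n(k)$ at a vanishing point—is handled entirely by the interplay of the symmetry \eqref{eq:symcond} with the location and order of the zero, both of which Lemma~\ref{lem:resonant} pins down.
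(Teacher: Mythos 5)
Your proposal matches the paper's proof: the paper establishes Corollary~\ref{cor:unique} by simply invoking \cite[Theorem~3.2]{gt}, with Lemma~\ref{lem:resonant} (vanishing only at $k=0$, with a simple zero in at least one component) and the symmetry condition \eqref{eq:symcond} supplying exactly the hypotheses you verify. Your additional sketch of the mechanism behind the cited theorem is consistent with this and does not change the route, so the argument is essentially identical to the paper's.
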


\section{Conjugation and Deformation}              \label{sec:condef}

This section demonstrates how to conjugate our Riemann--Hilbert problem (with respect to the augmented contour) and how to deform our jump
contour, such that the jumps will be exponentially close to the identity away from the stationary
phase points. Throughout this and the following section, we will assume that the $R(k)$ has an analytic
extension to a small neighborhood of the real axis. This is for example the case if we assume that our solution is exponentially decaying. This assumption can then be removed using analytic approximation.

For easy reference we note the following result:

\begin{lemma}[conjugation]\label{lem:conjug}
Let $\widetilde{\Sigma}$ be a part of some contour $\Sigma$. Let $D$ be a matrix of the form
\begin{equation}
D(k) = 
\begin{pmatrix} 
d(k)^{-1} & 0 \\ 0 & d(k) 
\end{pmatrix},
\end{equation}
where $d\colon\D{C}\setminus\widetilde{\Sigma}\to\D{C}$ is a sectionally analytic function. Set
\begin{equation}
\tilde{m}(k)=m(k)D(k),
\end{equation}
then the jump matrix transforms according to
\begin{equation}
\tilde v(k)=D_-(k)^{-1} v(k) D_+(k).
\end{equation}
If $d$ satisfies $d(-k) = d(k)^{-1}$ and $\lim_{k\to\infty}d(k)=1$, then the transformation 
\[
\tilde{m}(k) = m(k) D(k)
\]
respects our symmetry condition, that is, $\tilde{m}(k)$ satisfies \eqref{eq:symcond} if and only if $m(k)$ does, and our normalization condition.
\end{lemma}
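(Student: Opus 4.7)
The plan is to verify each of the three claims---the jump transformation formula, preservation of the symmetry condition, and preservation of the normalization---by a direct short matrix computation that exploits only the diagonal structure of $D$ together with its sectional analyticity; no deep input is required.

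For the jump, I would substitute $\tilde m(k)=m(k)D(k)$ into the relation $m_+(k)=m_-(k)v(k)$ and take non-tangential boundary values on $\widetilde\Sigma$, using that $D$ has well-defined one-sided limits $D_\pm$ there (and that $D_+=D_-$ on $\Sigma\setminus\widetilde\Sigma$, where the formula collapses to $\tilde v=D^{-1}vD$). This immediately produces
$$\tilde m_+(k) = m_+(k)D_+(k) = m_-(k)v(k)D_+(k) = \tilde m_-(k)\,D_-(k)^{-1} v(k) D_+(k),$$
yielding the stated identity $\tilde v(k)=D_-(k)^{-1}v(k)D_+(k)$.

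For the symmetry, the key observation is that the hypothesis $d(-k)=d(k)^{-1}$ says exactly that swapping the diagonal entries of $D(k)$ gives $D(-k)$; equivalently $D(-k)=\sigma_1 D(k)\sigma_1$, with $\sigma_1=\left(\begin{smallmatrix}0&1\\1&0\end{smallmatrix}\right)$ and $\sigma_1^2=I$. Assuming $m(-k)=m(k)\sigma_1$, I would then compute
$$\tilde m(-k) = m(-k)D(-k) = m(k)\sigma_1\cdot\sigma_1 D(k)\sigma_1 = \tilde m(k)\sigma_1,$$
so $\tilde m$ inherits the symmetry. The reverse implication follows by applying the same one-line computation to the inverse transformation $m(k)=\tilde m(k)D(k)^{-1}$, noting that $d^{-1}$ obeys the same hypotheses as $d$. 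The normalization claim is even simpler: $d(k)\to 1$ at infinity gives $D(k)\to I$, hence $\lim_{k\to\infty}\tilde m(k)=\lim_{k\to\infty}m(k)=\begin{pmatrix}1&1\end{pmatrix}$.

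There is no genuine obstacle here: the lemma is a bookkeeping device packaging three routine algebraic identities. The only design choice worth flagging is the placement of $d^{-1}$ in the $(1,1)$ entry and $d$ in the $(2,2)$ entry of $D$; this is precisely what makes conjugation by $\sigma_1$ compatible with the assumption $d(-k)=d(k)^{-1}$, so that the conjugation is consistent with the $\sigma_1$-symmetry governing the vector Riemann--Hilbert problem.
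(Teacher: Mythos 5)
Your proof is correct; the paper in fact states this lemma without proof, regarding it as immediate, and your three routine verifications (boundary values for the jump, the identity $D(-k)=\sigma_1 D(k)\sigma_1$ for the symmetry, and $D(k)\to I$ for the normalization) are exactly the intended argument. Nothing is missing.
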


In particular, we obtain
\begin{equation}
\tilde v(k)= 
\begin{cases}
\begin{pmatrix} v_{11} & v_{12} d^{2} \\ v_{21} d^{-2}  & v_{22} \end{pmatrix},
&k\in\Sigma\setminus\widetilde{\Sigma},\\[4mm]
\begin{pmatrix} \frac{d_-}{d_+} v_{11} & v_{12} d_+ d_- \\
v_{21} d_+^{-1} d_-^{-1}  & \frac{d_+}{d_-} v_{22} \end{pmatrix},
&k\in\widetilde{\Sigma}.
\end{cases}
\end{equation}
In order to analyse the regular vector RH problem from Lemma \ref{lem:holrhp} there are two cases to distinguish. 
\begin{enumerate}[(a)]
\item
If $\Phi(\ii\kappa_j)<0$
then the corresponding jump matrix \eqref{eq:jumpcond2} is exponentially close to the identity
as $t\to+\infty$ and there is nothing to do.
\item
Otherwise we use conjugation to turn the jumps into one with exponentially decaying
off-diagonal entries.
\end{enumerate}
It turns out that we will have to handle the jumps across $C_j$ and $\bar C_j$ in one step in order to preserve symmetry and in order to not add additional singularities elsewhere.

\begin{lemma}\label{lem:twopolesinc}
Assume that the Riemann--Hilbert problem for $m$ has jump conditions near $\ii\kappa$ and
$-\ii\kappa$ given by
\begin{equation}
m_+(k)=
\begin{cases}
m_-(k)\begin{pmatrix}1&0\\ -\frac{\ii\gamma^2}{k-\ii\kappa}&1\end{pmatrix}, 
&\abs{k-\ii\kappa}=\varepsilon, \\[4mm]
m_-(k)\begin{pmatrix}1& -\frac{\ii\gamma^2}{k+\ii\kappa}\\0&1\end{pmatrix}, 
&\abs{k+\ii\kappa}=\varepsilon.
\end{cases}
\end{equation}
Then this Riemann--Hilbert problem is equivalent to a Riemann--Hilbert problem for $\tilde{m}=mD$ which
has jump conditions near $\ii\kappa$ and $-\ii\kappa$ given by
\begin{equation}
\tilde{m}_+(k)= 
\begin{cases}
\tilde{m}_-(k)\begin{pmatrix}1& -\frac{(k+\ii\kappa)^2}{\ii\gamma^2(k-\ii\kappa)}\\ 0 &1\end{pmatrix},
&\abs{k-\ii\kappa}=\varepsilon, \\[4mm]
\tilde{m}_-(k)\begin{pmatrix}1& 0 \\ -\frac{(k-\ii\kappa)^2}{\ii\gamma^2(k+\ii\kappa)}&1\end{pmatrix},
&\abs{k+\ii\kappa}=\varepsilon,
\end{cases}
\end{equation}
and all remaining data conjugated by
\begin{equation}
D(k) = 
\begin{cases}
\begin{pmatrix} 1 & -\frac{k-\ii\kappa}{\ii\gamma^2} \\  \frac{\ii\gamma^2}{k-\ii\kappa} & 0 \end{pmatrix}
\begin{pmatrix} \frac{k-\ii\kappa}{k+\ii\kappa} & 0 \\ 0 & \frac{k+\ii\kappa}{k-\ii\kappa} \end{pmatrix}, &  \left\vert k-\ii\kappa \right\vert <\varepsilon, \\
\begin{pmatrix} 0 & -\frac{\ii\gamma^2}{k+\ii\kappa} \\ \frac{k+\ii\kappa}{\ii\gamma^2} & 1 \end{pmatrix} 
\begin{pmatrix} \frac{k-\ii\kappa}{k+\ii\kappa} & 0 \\ 0 & \frac{k+\ii\kappa}{k-\ii\kappa} \end{pmatrix}, & \left\vert k+\ii\kappa \right\vert <\varepsilon, \\ 
\begin{pmatrix} \frac{k-\ii\kappa}{k+\ii\kappa} & 0 \\ 0 & \frac{k+\ii\kappa}{k-\ii\kappa} \end{pmatrix}, & \text{else}.
\end{cases}
\end{equation}
\end{lemma}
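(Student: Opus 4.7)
\medskip

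\noindent\textbf{Plan.} The proof is a direct verification that the proposed sectionally defined matrix $D(k)$ implements the desired conjugation. The core identity underlying everything is the standard conjugation rule for Riemann--Hilbert problems: if $\tilde{m}(k) = m(k) D(k)$ and $D$ has a jump across an oriented contour, then $\tilde{m}_+ = \tilde{m}_-\, D_-^{-1} v D_+$ on that contour. Away from contours where $D$ itself jumps, the new jump factor is simply $D^{-1} v D$. Thus the proof reduces to three bookkeeping tasks: check that $D$ is well-defined and holomorphic off the two circles $|k \mp \ii\kappa|=\varepsilon$; compute the new jumps on these circles; and verify that $D$ respects the symmetry and normalization, so that Lemma~\ref{lem:conjug} (in its spirit, extended to the present non-diagonal local factors) applies.

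\medskip

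\noindent\textbf{First step: well-definedness.} Inside $|k-\ii\kappa|<\varepsilon$ the factor $\bigl(\begin{smallmatrix}1 & -\frac{k-\ii\kappa}{\ii\gamma^2}\\ \frac{\ii\gamma^2}{k-\ii\kappa} & 0\end{smallmatrix}\bigr)$ has a simple pole at $\ii\kappa$, but after multiplication by the diagonal factor $\mathrm{diag}\bigl(\frac{k-\ii\kappa}{k+\ii\kappa},\frac{k+\ii\kappa}{k-\ii\kappa}\bigr)$, the pole is cancelled and the product is holomorphic in the entire disc. An analogous cancellation occurs inside $|k+\ii\kappa|<\varepsilon$. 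A short computation confirms $\det D(k)\equiv 1$ on each of the three pieces, so $D$ is everywhere invertible. Outside the discs $D$ is the diagonal matrix $\mathrm{diag}\bigl(\frac{k-\ii\kappa}{k+\ii\kappa},\frac{k+\ii\kappa}{k-\ii\kappa}\bigr)$, which is analytic across the real axis, so no new jumps are created there; the only jumps of $D$ are across $C_j$ and $\bar C_j$. Moreover, one checks directly that $D(-k)=\sigma_1 D(k)\sigma_1$ on each piece (recall $\sigma_1=\bigl(\begin{smallmatrix}0&1\\1&0\end{smallmatrix}\bigr)$), so the symmetry condition \eqref{eq:symcond} is preserved, and $D(\infty)=\id$, so the normalization is preserved.

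\medskip

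\noindent\textbf{Second step: the new jumps on $C_j$ and $\bar C_j$.} Since $C_j$ is oriented counterclockwise, the ``$+$'' side is the interior of the disc and the ``$-$'' side is the exterior. Thus on $|k-\ii\kappa|=\varepsilon$,
\[
D_+(k)=\begin{pmatrix} \frac{k-\ii\kappa}{k+\ii\kappa} & -\frac{k+\ii\kappa}{\ii\gamma^2} \\ \frac{\ii\gamma^2}{k+\ii\kappa} & 0\end{pmatrix},\qquad
D_-(k)=\begin{pmatrix} \frac{k-\ii\kappa}{k+\ii\kappa} & 0 \\ 0 & \frac{k+\ii\kappa}{k-\ii\kappa}\end{pmatrix}.
\]
A straightforward computation of $D_-^{-1}(k)\,v(k)\,D_+(k)$ with $v(k)=\bigl(\begin{smallmatrix}1&0\\ -\frac{\ii\gamma^2}{k-\ii\kappa}&1\end{smallmatrix}\bigr)$ then yields exactly $\bigl(\begin{smallmatrix}1 & -\frac{(k+\ii\kappa)^2}{\ii\gamma^2(k-\ii\kappa)}\\ 0 & 1\end{smallmatrix}\bigr)$ as claimed (the first column simplifies to $(1,0)^\top$ because the pole term in $v$ is designed to cancel the off-diagonal entry of $D_+$). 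The calculation on $\bar C_j$ is the mirror image and follows either by the same direct computation or, more economically, by invoking the symmetry $D(-k)=\sigma_1 D(k)\sigma_1$ together with remark~(a) after Theorem~\ref{thm:vecrhp}.

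\medskip

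\noindent\textbf{Main obstacle.} There is no deep obstacle, just one point that has to be tracked carefully: the orientation convention. If one slips and identifies $D_+$ with the exterior value of $D$, the resulting jump contains a $-1$ on the diagonal rather than $1$, and one might wrongly conclude that the ansatz for $D$ is off by a sign. Once the convention ``counterclockwise $\Rightarrow$ interior is the $+$ side'' is applied consistently, the calculation is purely mechanical. Everything else (analyticity, unit determinant, symmetry, normalization, and the fact that the remaining jump data is simply conjugated by $D$ since $D$ is diagonal outside the two small discs) is routine.
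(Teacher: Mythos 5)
Your proposal is correct and is exactly the verification the paper leaves to the reader (the lemma is stated without proof, being the straightforward computation from the Kr\"uger--Teschl/Grunert--Teschl framework): with the convention that the $+$ side of the counterclockwise circle is its interior, one checks $D_-^{-1}vD_+=D_0^{-1}(vA)D_0$ is the claimed upper-triangular factor, the mirror computation (or the symmetry $D(-k)=\sigma_1D(k)\sigma_1$) handles the circle around $-\ii\kappa$, and $D$ is holomorphic with $\det D\equiv1$, diagonal off the discs, symmetric, and $\to\id$ at $\infty$, so the two problems are equivalent. I verified your matrix products and orientation bookkeeping; they are accurate.
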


The jump along the real axis is of oscillatory type and our aim is to apply
a contour deformation following \cite{dz} such that all jumps will be moved into regions where the oscillatory terms
will decay exponentially. Since the jump matrix $v$ contains both $\exp(t\Phi)$ and
$\exp(-t\Phi)$ we need to separate them in order to be able to move them to different regions
of the complex plane.

We recall that the phase of the associated Riemann--Hilbert problem is given by
\begin{equation} \label{eq:Phi}
\Phi(k)= -\ii\frac{\varkappa k}{\tfrac{1}{4}+k^2}+2\ii k\frac{y}{t}.
\end{equation}
Let
\begin{equation}  \label{eq:slope}
c\equiv c(y,t)=\frac{y}{\varkappa t}\,.
\end{equation}
The stationary phase points, i.e., $\Phi'(k)=0$, are given by $\pm k_0$ and $\pm k_1$, where
\begin{equation}       \label{eq:stationary.phase.points}
\begin{split}
k_0&\equiv k_0(c) = \frac{1}{2} \sqrt{-\frac{1+c-\sqrt{1+4c}}{c}},\\
k_1&\equiv k_1(c) = \frac{1}{2} \sqrt{-\frac{1+c+\sqrt{1+4c}}{c}}.
\end{split}
\end{equation}
There are four cases to distinguish:
\begin{enumerate}[(i)]
\item 
$2\varkappa<\frac{y}{t}$, i.e.\ $c>2$.
In this case 
\begin{enumerate}[$\triangleright$]
\item
$k_0,k_1\not\in\D{R}$,
\item
$\Re(\Phi)<0$ for $\Im(k)>0$ near $\D{R} \cup \ii(0,\kappa_0)$
and 
\item
$\Re(\Phi)>0$ for $\Im(k)>0$ near $\ii(\kappa_0,\tfrac{1}{2})$, where
\begin{equation}
\kappa_0=\tfrac{1}{2}\sqrt{1-\tfrac{2}{c}}.
\end{equation}
\end{enumerate}
We will set $\kappa_0 =0$ for $\tfrac{y}{t}<2\varkappa$ for notational convenience later on.
\item 
$0 < \frac{y}{t}< 2\varkappa$, i.e.\ $0<c<2$.
In this case 
\begin{enumerate}[$\triangleright$]
\item
$k_0\in\D{R}$, $k_1\not\in\D{R}$,
\item
$\Re(\Phi)>0$ for $\Im(k)>0$ near $(-k_0,k_0)\cup \ii(0,\tfrac{1}{2})$
and 
\item
$\Re(\Phi)<0$ for $\Im(k)>0$ near $(-\infty,-k_0) \cup (k_0,\infty)$.
\end{enumerate}
\item 
$-\frac{\varkappa}{4} < \frac{y}{t}< 0$, i.e.\ $-1/4<c<0$.
In this case 
\begin{enumerate}[$\triangleright$]
\item
$k_0,k_1\in\D{R}$,
\item 
$\Re(\Phi)>0$ for $\Im(k)>0$ near $(-\infty,-k_1) \cup(-k_0,k_0) \cup (k_1,\infty) \cup \ii(0,\tfrac{1}{2})$
and 
\item
$\Re(\Phi)<0$ for $\Im(k)>0$ near $(-k_1,-k_0) \cup (k_0,k_1)$.
\end{enumerate}
\item 
$\frac{y}{t}<-\frac{\varkappa}{4}$, i.e.\ $c<-1/4$.
In this case 
\begin{enumerate}[$\triangleright$]
\item
$k_0,k_1 \not\in\D{R}$ and 
\item
$\Re(\Phi)>0$ for $\Im(k)>0$ near $\D{R} \cup \ii(0,\tfrac{1}{2})$.
\end{enumerate}
\end{enumerate}
The situation is depicted in Figure~\ref{fig:signRePhi}.

\begin{figure}[ht]
\begin{picture}(2.2,4.2)
\put(0.3,3.8){$c<-1/4$}
\put(1.2,0.7){$+$}
\put(1.2,2.5){$-$}
\put(0.6,1.4){$-$}
\put(0.6,1.8){$+$}
\put(0,1.7){\vector(1,0){2}}
\put(1,0){\vector(0,1){3.6}}
\closecurve(1.856,2.534, 1.728,2.772, 1.529,2.964, 1.278,3.089, 1.,3.132,
0.722,3.089, 0.471,2.964, 0.272,2.772, 0.144,2.534, 0.1,2.283,
0.144,2.073, 0.272,2.001, 0.471,2.077, 0.722,2.166, 1.,2.2,
1.278,2.166, 1.529,2.077, 1.728,2.001, 1.856,2.073, 1.9,2.283)
\closecurve(1.9,1.117, 1.856,0.866, 1.728,0.628, 1.529,0.436, 1.278,0.311,
1.,0.268, 0.722,0.311, 0.471,0.436, 0.272,0.628, 0.144,0.866,
0.1,1.117, 0.144,1.327, 0.272,1.399, 0.471,1.323, 0.722,1.234,
1.,1.2, 1.278,1.234, 1.529,1.323, 1.728,1.399, 1.856,1.327)
\end{picture}
\quad
\begin{picture}(3,4.2)
\put(0.2,3.8){$-1/4<c< 0$}
\put(1.6,0.5){$+$}
\put(1.6,2.5){$-$}
\put(1,1.4){$-$}
\put(1,1.8){$+$}
\put(0,1.7){\vector(1,0){2.9}}
\put(1.4,0){\vector(0,1){3.6}}
\put(2.06,1.7){\circle*{0.06}}
\put(2.06,1.8){\scriptsize $k_0$}
\put(0.74,1.7){\circle*{0.06}}
\put(0.25,1.8){\scriptsize $-k_0$}
\put(2.65,1.7){\circle*{0.06}}
\put(2.7,1.4){\scriptsize $k_1$}
\put(0.15,1.7){\circle*{0.06}}
\put(-0.35,1.4){\scriptsize $-k_1$}
\closecurve(2.645,1.948, 2.584,2.406, 2.407,2.807, 2.132,3.106, 1.785,3.294,
1.4,3.358, 1.015,3.294, 0.668,3.106, 0.393,2.807, 0.216,2.406,
0.155,1.948, 0.155,1.452, 0.216,0.994, 0.393,0.593, 0.668,0.294,
1.015,0.106, 1.4,0.042, 1.785,0.106, 2.132,0.294, 2.407,0.593,
2.584,0.994, 2.645,1.452)
\closecurve(2.064,1.707, 2.032,1.85, 1.937,1.988, 1.79,2.1, 1.605,2.174, 1.4,2.2,
1.195,2.174, 1.01,2.1, 0.863,1.988, 0.768,1.85, 0.736,1.707,
0.736,1.693, 0.768,1.55, 0.863,1.412, 1.01,1.3, 1.195,1.226, 1.4,1.2,
1.605,1.226, 1.79,1.3, 1.937,1.412, 2.032,1.55, 2.064,1.693)
\end{picture}
\quad
\begin{picture}(2.2,4.2)
\put(0.3,3.8){$0<c<2$}
\put(0.6,0.7){$+$}
\put(0.6,2.5){$-$}
\put(1.1,1.4){$-$}
\put(1.1,1.8){$+$}
\put(0,1.7){\vector(1,0){2}}
\put(1,0){\vector(0,1){3.6}}
\put(1.45,1.7){\circle*{0.06}}
\put(1.5,1.8){\scriptsize $k_0$}
\put(0.55,1.7){\circle*{0.06}}
\put(0.,1.8){\scriptsize $-k_0$}
\closecurve(1.456,1.7, 1.434,1.855, 1.369,1.994, 1.268,2.105, 1.141,2.176,
1.,2.2, 0.859,2.176, 0.732,2.105, 0.631,1.994, 0.566,1.855,
0.544,1.7, 0.566,1.545, 0.631,1.406, 0.732,1.295, 0.859,1.224,
1.,1.2, 1.141,1.224, 1.268,1.295, 1.369,1.406, 1.434,1.545)
\end{picture}
\quad
\begin{picture}(2.2,4.2)
\put(0.5,3.8){$2<c$}
\put(0.6,0.7){$+$}
\put(0.6,2.5){$-$}
\put(0,1.7){\vector(1,0){2}}
\put(1,0){\vector(0,1){3.6}}
\put(1,1.8){\circle*{0.06}}
\put(1.2,1.8){\scriptsize $\kappa_0$}
\put(1,1.6){\circle*{0.06}}
\put(1.1,1.5){\scriptsize $-\kappa_0$}
\closecurve(1.119,2.037, 1.113,1.985, 1.096,1.927, 1.07,1.872, 1.037,1.828,
1.,1.809, 0.963,1.828, 0.93,1.872, 0.904,1.927, 0.887,1.985,
0.881,2.037, 0.881,2.047, 0.887,2.094, 0.904,2.138, 0.93,2.172,
0.963,2.193, 1.,2.2, 1.037,2.193, 1.07,2.172, 1.096,2.138,
1.113,2.094, 1.119,2.047)
\closecurve(1.119,1.363, 1.113,1.415, 1.096,1.473, 1.07,1.528, 1.037,1.572,
1.,1.591, 0.963,1.572, 0.93,1.528, 0.904,1.473, 0.887,1.415,
0.881,1.363, 0.881,1.353, 0.887,1.306, 0.904,1.262, 0.93,1.228,
0.963,1.207, 1.,1.2, 1.037,1.207, 1.07,1.228, 1.096,1.262,
1.113,1.306, 1.119,1.353)
\end{picture}
\caption{Sign of $\Re(\Phi(k))$ for different values of $c=\frac{y}{\varkappa\ t}$}\label{fig:signRePhi}
\end{figure}
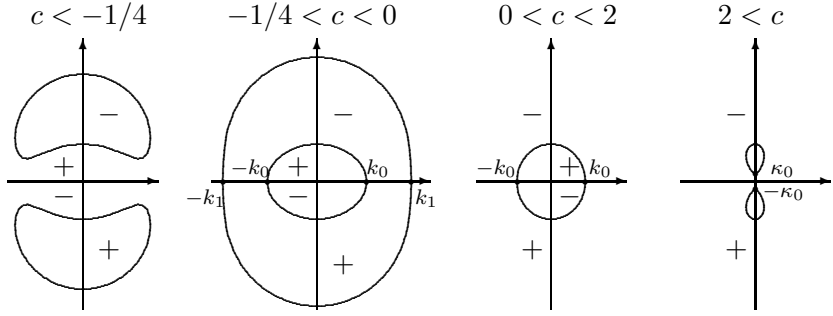

Accordingly we will introduce
\begin{equation}
\Sigma(c) = 
\begin{cases}
\D{R}, & c< - \frac{1}{4},\\
(-\infty,-k_1)\cup(-k_0,k_0)\cup(k_1,\infty), & - \frac{1}{4} <  c < 0,\\
(-k_0,k_0), & 0 < c < 2,\\
\varnothing, & 2 < c.
\end{cases}
\end{equation}
As mentioned above we will need the following factorizations of the jump condition \eqref{eq:jumpcond}:
\begin{equation}
v(k)=b_-(k)^{-1}b_+(k),
\end{equation}
where
\begin{equation}
b_-(k)=\begin{pmatrix} 1 & \overline{R(k)}\ee^{-t\Phi(k)} \\ 0 & 1 \end{pmatrix}, \qquad
b_+(k)=\begin{pmatrix} 1 & 0 \\ R(k)\ee^{t\Phi(k)} & 1\end{pmatrix}.
\end{equation}
for $k\in\D{R}\setminus\Sigma(c)$ and
\begin{equation}
v(k)=B_-(k)^{-1}
\begin{pmatrix} 1-\left\vert R(k) \right\vert^2 & 0 \\
	0 & \frac{1}{1-\left\vert R(k) \right\vert^2}\end{pmatrix}
B_+(k),
\end{equation}
where
\begin{equation}
B_-(k)=\begin{pmatrix} 1 & 0\\
	-\frac{R(k)\ee^{t\Phi(k)}}{1-\left\vert R(k) \right\vert^2} & 1 \end{pmatrix}, \qquad
B_+(k)=\begin{pmatrix} 1 & -\frac{\overline{R(k)}\ee^{-t\Phi(k)}}{1-\left\vert R(k) \right\vert^2}\\
	0 & 1 \end{pmatrix}.
\end{equation}
for $k\in\Sigma(c)$.

To get rid of the diagonal part in the factorization corresponding to
$k\in\Sigma(c)$ and to conjugate the jumps near the eigenvalues we need
the partial transmission coefficient with respect to $c$ defined by
\begin{equation}\label{def:Tkk0}
T(k,c)=
\begin{cases}
\prod\limits_{\kappa_0<\kappa_j<1/2}\frac{k+\ii\kappa_j}{k-\ii\kappa_j},& c>2, \\
\prod\limits_{j=1}^{N} \frac{k+\ii\kappa_j}{k-\ii\kappa_j}
\exp\bigl(\frac{1}{2\pi\ii}\int_{\Sigma(c)}\frac{\log(\abs{T(\zeta)}^2)}{\zeta-k}\,
\dd\zeta\bigr), & c<2,
\end{cases}
\end{equation}
for $k\in\D{C}\setminus\Sigma(c)$. Thus $T(k,c)$
is meromorphic for $k\in\D{C}\setminus\Sigma(c)$. Note that $T(k,c)$ can be computed in terms
of the scattering data since $|T(k)|^2= 1- |R_+(k,t)|^2$. Moreover, we set
\begin{equation}
T_1(c)=\log\bigl(T(\tfrac{\ii}{2},c)\bigr)= 
\begin{cases}
\sum\limits_{\kappa_0<\kappa_j<1/2}\log\frac{1+2\kappa_j}{1-2\kappa_j},& c>2,\\
\sum\limits_{j=1}^{N}\log\frac{1+2\kappa_j}{1-2\kappa_j}
+\frac{1}{\pi}\int_{\Sigma(c)}\frac{\log(|T(\zeta)|^2)}{1+4\zeta^2}\,\dd\zeta, & c<2.
\end{cases}
\end{equation}
Note that combining $T(k) = \ee^{\ii k H_{-1}(w)} T(k,c)$ for  $c< -\frac{1}{4}$ with \eqref{g_ass_06} shows
\begin{equation}
H_{-1}(w) = 2 \sum_{j=1}^{N}\log\frac{1+2\kappa_j}{1-2\kappa_j}
+ \frac{2}{\pi}\int_{\D{R}}\frac{\log(|T(\zeta)|^2)}{1+4\zeta^2}\,\dd\zeta.
\end{equation}

\begin{theorem}\label{thm:part}
The partial transmission coefficient $T(k,c)$ is meromorphic in $\D{C}\setminus\Sigma(c)$ with simple poles
at $\ii\kappa_j$ and simple zeros at $-\ii\kappa_j$ for all j with $\kappa_0 < \kappa_j$, and
satisfies the jump condition
\begin{equation}\label{eq:jumpt}
T_+(k,c)=T_-(k,c)(1-\left\vert R(k) \right\vert^2), \quad \text{ for } k\in\Sigma(c).
\end{equation}
Moreover:
\begin{enumerate}[\rm(i)]
\item
$T(-k,c)=T(k,c)^{-1}$, $k\in\D{C}\setminus\Sigma(c)$.
\item
$T(-k,c)=\overline{T(\overline{k},c)}$, $k\in\D{C}$, in particular $T(k,c)$ is
real for $k\in\ii\D{R}$. 
\item
If $c<2$ the behaviour near $k=0$ is given by $T(k,c)=T(k)(C+o(1))$ with $C\neq 0$ for $\Im(k)\geq 0$.
\end{enumerate}
\end{theorem}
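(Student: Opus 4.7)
The case $c>2$ is essentially trivial: $T(k,c)$ is then a finite Blaschke product, rational on all of $\D{C}$, with the asserted simple poles at $\ii\kappa_j$ and simple zeros at $-\ii\kappa_j$ precisely for those $j$ with $\kappa_j>\kappa_0$. Since $\Sigma(c)=\varnothing$, the jump condition is vacuous, and the symmetries (i), (ii) follow from $\frac{-k+\ii\kappa_j}{-k-\ii\kappa_j}=\bigl(\frac{k+\ii\kappa_j}{k-\ii\kappa_j}\bigr)^{-1}$ and the reality of the $\kappa_j$. I therefore concentrate on $c<2$.

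For $c<2$, the rational factor in \eqref{def:Tkk0} contributes exactly the claimed poles and zeros, while the Cauchy-type factor $\exp\bigl(\frac{1}{2\pi\ii}\int_{\Sigma(c)}\frac{\log|T(\zeta)|^2}{\zeta-k}\,\dd\zeta\bigr)$ is holomorphic and nonvanishing on $\D{C}\setminus\Sigma(c)$. The jump condition \eqref{eq:jumpt} is immediate from Plemelj--Sokhotski: the boundary values of the Cauchy integral differ by $\log|T(k)|^2$ across $\Sigma(c)$, and exponentiating yields $|T(k)|^2=1-|R(k)|^2$, while the Blaschke factor is continuous across $\Sigma(c)$ and contributes nothing. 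Relations (i) and (ii) follow by direct computation: for the Cauchy integral one uses that $\Sigma(c)$ is invariant under $\zeta\mapsto-\zeta$ and that $|R|$ (hence $|T|$) is even, so the substitution $\zeta\mapsto-\zeta$ sends the integral at $-k$ to the negative of the integral at $k$, giving (i) after exponentiation; for (ii) one additionally uses $\overline{(2\pi\ii(\zeta-\bar k))^{-1}}=-(2\pi\ii(\zeta-k))^{-1}$ for $\zeta\in\D{R}$. Realness on $\ii\D{R}$ follows from (ii) and $\bar k=-k$.

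The interesting part is (iii). Since $\kappa_0=0$ for $c<2$, the Blaschke factors in $T(k,c)$ match exactly the poles and zeros of the full transmission coefficient $T(k)$, so the ratio $\rho(k):=T(k,c)/T(k)$ has no singularities in $\D{C}\setminus\D{R}$. Both $T$ and $T(\cdot,c)$ share the jump $1-|R|^2$ on $\Sigma(c)$, so $\rho$ extends analytically across $\Sigma(c)$; a short computation using $T_+=T_-(1-|R|^2)$ on all of $\D{R}$ gives $\rho_+/\rho_-=(1-|R|^2)^{-1}$ on $\D{R}\setminus\Sigma(c)$, while \eqref{eq:asymT} yields $\rho(k)\to \ee^{-\ii k H_{-1}(w)}$ at infinity. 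Setting $\tilde\rho(k):=\ee^{\ii k H_{-1}(w)}\rho(k)$, the function $\log\tilde\rho$ is the unique bounded solution of the resulting additive scalar RH problem, given by the Cauchy representation
\[
\log\tilde\rho(k) = -\frac{1}{2\pi\ii}\int_{\D{R}\setminus\Sigma(c)}\frac{\log(1-|R(\zeta)|^2)}{\zeta-k}\,\dd\zeta.
\]
For every $c<2$ the contour $\D{R}\setminus\Sigma(c)$ is bounded away from $0$ (namely $[-k_1,-k_0]\cup[k_0,k_1]$ for $-1/4<c<0$, $(-\infty,-k_0]\cup[k_0,\infty)$ for $0<c<2$, and empty for $c<-1/4$), so the integrand depends analytically on $k$ in a neighborhood of $0$; hence $\tilde\rho(0)$ is finite and nonzero and $\rho(k)\to \tilde\rho(0)=:C\neq 0$ as $k\to 0$ with $\Im k\geq 0$, which is (iii). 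The main technical point will be the uniqueness step identifying $\log\tilde\rho$ with this Cauchy integral, together with integrability of $\log(1-|R(\zeta)|^2)$ at infinity in the range $0<c<2$; both rest on the decay of $R$ implied by \eqref{decay}.
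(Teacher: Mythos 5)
Your proposal is correct, and for the bulk of the theorem (pole/zero structure from the Blaschke factors, the jump \eqref{eq:jumpt} via Plemelj, and the symmetries (i)--(ii) by the substitution $\zeta\mapsto-\zeta$ together with evenness of $|T|$ on $\Sigma(c)$) it is exactly the paper's argument, which the authors leave as ``obvious/straightforward'' and you simply spell out. The only place you take a genuinely different route is (iii): you set up a scalar Riemann--Hilbert problem for $\rho=T(k,c)/T(k)$ and invoke uniqueness of the bounded solution to get the Cauchy representation of $\log\tilde\rho$, flagging the uniqueness step (endpoint behaviour at $\pm k_0,\pm k_1$, behaviour at $k=0$, and the implicit choice of the lower half-plane extension $1/\overline{T(\overline k)}$ that makes $T_+=T_-(1-|R|^2)$ hold on all of $\D{R}$) as the main remaining technicality; note also a mild circularity in working with $\log\tilde\rho$ before knowing $\tilde\rho\neq 0$, which is avoided by running the Liouville argument multiplicatively. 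The more direct route, and presumably the intended one, is to divide the definition \eqref{def:Tkk0} by the standard trace-formula representation of the transmission coefficient --- which the paper itself uses when it observes $T(k)=\ee^{\ii k H_{-1}(w)}T(k,c)$ for $c<-\frac14$, i.e.\ \eqref{def:Tkk0} with $\Sigma(c)=\D{R}$ --- so that $T(k,c)/T(k)=\ee^{-\ii k H_{-1}(w)}\exp\bigl(-\frac{1}{2\pi\ii}\int_{\D{R}\setminus\Sigma(c)}\frac{\log(1-|R(\zeta)|^2)}{\zeta-k}\,\dd\zeta\bigr)$ falls out by subtracting the two Cauchy integrals, with no uniqueness argument needed; from there your observation that $\D{R}\setminus\Sigma(c)$ stays away from $k=0$ (plus integrability at infinity from \eqref{decay}) immediately gives $C\neq0$. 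Either way the conclusion stands; the direct division just buys you a proof of (iii) with none of the technical debt your sketch defers.
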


\begin{proof}
That $\ii\kappa_j$ are simple poles and $-\ii\kappa_j$ are simple zeros is obvious from the
Blaschke factors and that $T(k,c)$ has the given jump follows from Plemelj's formulas.
Properties (i), (ii), and (iii) are straightforward to check.
\end{proof}

Now we are ready to perform our conjugation step. Introduce
\[
D(k)= 
\begin{cases}
\begin{pmatrix} 1 & -\frac{k-\ii\kappa_j}{\ii\gamma_j^2 \ee^{t\Phi (\ii\kappa_j)}}\\
\frac{\ii\gamma_j^2 \ee^{t\Phi(\ii\kappa_j)}}{k-\ii\kappa_j} & 0 \end{pmatrix}
D_0(k), 
&\abs{k-\ii\kappa_j}<\varepsilon, \: \kappa_0 < \kappa_j,\\
\begin{pmatrix} 0 & -\frac{\ii\gamma_j^2 \ee^{t\Phi (\ii\kappa_j)}}{k+\ii\kappa_j} \\
\frac{k+\ii\kappa_j}{\ii\gamma_j^2 \ee^{t\Phi(\ii\kappa_j)}} & 1 \end{pmatrix}
D_0(k), 
&\abs{k+\ii\kappa_j}<\varepsilon, \: \kappa_0 < \kappa_j,\\
D_0(k), 
& \text{else},
\end{cases}
\]
where
\[
D_0(k) = \begin{pmatrix} T(k,c)^{-1} & 0 \\ 0 & T(k,c) \end{pmatrix}.
\]
Observe that $D(k)$ respects our symmetry:
\[
D(-k)=\begin{pmatrix}0&1\\1&0\end{pmatrix} D(k)\begin{pmatrix}0&1\\1&0\end{pmatrix}.
\]
Now we conjugate our problem using $D(k)$ and set
\begin{equation}\label{def:mti}
\tilde{m}(k)=m(k) D(k).
\end{equation}
Note that even though $D(k)$ might be singular at $k=0$ (if $c<2$ and $R(0)=-1$),
$\tilde{m}(k)$ is nonsingular since the possible singular behaviour of $T(k,c)^{-1}$ from $D_0(k)$
cancels with $T(k)$ in $m(k)$ by virtue of Theorem~\ref{thm:part} (iii).

Then using Lemma~\ref{lem:conjug} with $\tilde\Sigma=\Sigma(c)$ and Lemma~\ref{lem:twopolesinc} the jump
corresponding to $\kappa_0<\kappa_j$ (if any) is given by
\begin{equation}
\aligned
\tilde{v}(k) &= \begin{pmatrix}1& -\frac{k-\ii\kappa_j}
{\ii\gamma_j^2 \ee^{t\Phi (\ii\kappa_j)}T(k,c)^{-2}}\\ 0 &1\end{pmatrix},
&&\abs{k-\ii\kappa_j}=\varepsilon, \\
\tilde{v}(k) &= \begin{pmatrix}1& 0 \\ -\frac{k+\ii\kappa_j}
{\ii\gamma_j^2 \ee^{t\Phi(\ii\kappa_j)} T(k,c)^2}&1\end{pmatrix},
&&\abs{k+\ii\kappa_j}=\varepsilon,
\endaligned
\end{equation}
and corresponding to $\kappa_0>\kappa_j$ (if any) by
\begin{equation}
\tilde{v}(k)=
\begin{cases}
\begin{pmatrix} 1 & 0 \\ -\frac{\ii\gamma_j^2 \ee^{t\Phi(\ii\kappa_j)} T(k,c)^{-2}}{k-\ii\kappa_j}& 1 \end{pmatrix},
&\abs{k-\ii\kappa_j}=\varepsilon, \\[6mm]
\begin{pmatrix} 1 & -\frac{\ii\gamma_j^2 \ee^{t\Phi(\ii\kappa_j)} T(k,c)^2}{k+\ii\kappa_j} \\
0 & 1 \end{pmatrix},
&\abs{k+\ii\kappa_j}=\varepsilon.
\end{cases}
\end{equation}
In particular, all jumps corresponding to poles, except for possibly one if
$\kappa_j=\kappa_0$, are exponentially close to the identity. In the latter case we will keep the
pole condition for $\kappa_j=\kappa_0$ which now reads
\begin{equation}
\aligned
\Res_{\ii\kappa_j} \tilde{m}(k) &= \lim_{k\to\ii\kappa_j} \tilde{m}(k)
\begin{pmatrix} 0 & 0\\ \ii\gamma_j^2 \ee^{t\Phi(\ii\kappa_j)} T(\ii\kappa_j,c)^{-2}  & 0 \end{pmatrix},\\
\Res_{-\ii\kappa_j} \tilde{m}(k) &= \lim_{k\to -\ii\kappa_j} \tilde{m}(k)
\begin{pmatrix} 0 & -\ii\gamma_j^2 \ee^{t\Phi(\ii\kappa_j)} T(\ii\kappa_j,c)^{-2} \\ 0 & 0 \end{pmatrix}.
\endaligned
\end{equation}
Furthermore, the jump along
$\D{R}$ is given by
\begin{equation}
\tilde{v}(k) = 
\begin{cases}
\tilde{b}_-(k)^{-1} \tilde{b}_+(k),&k\not\in\Sigma(c),\\
\tilde{B}_-(k)^{-1} \tilde{B}_+(k),&k\in\Sigma(c),\\
\end{cases}
\end{equation}
where
\begin{equation} \label{eq:deftib}
\tilde{b}_-(k) = \begin{pmatrix} 1 & \frac{R(-k) \ee^{-t\Phi(k)}}{T(-k,c)^2} \\ 0 & 1 \end{pmatrix}, \quad
\tilde{b}_+(k) = \begin{pmatrix} 1 & 0 \\ \frac{R(k) \ee^{t\Phi(k)}}{T(k,c)^2} & 1 \end{pmatrix},
\end{equation}
and
\begin{align*}
&\tilde{B}_-(k)
=\begin{pmatrix} 1 & 0 \\ -\frac{T_-(k,c)^{-2}}{1-\left\vert R(k) \right\vert^2}R(k)\ee^{t\Phi(k)} & 1\end{pmatrix}
=\begin{pmatrix} 1 & 0 \\ - \frac{T_-(-k,c)}{T_-(k,c)} R(k) \ee^{t\Phi(k)} & 1 \end{pmatrix},\\
&\tilde{B}_+(k)
=\begin{pmatrix} 1 & -\frac{T_+(k,c)^2}{1-\left\vert R(k)\right\vert^2}R(-k)\ee^{-t\Phi(k)} \\ 0 & 1 \end{pmatrix}
=\begin{pmatrix} 1 & - \frac{T_+(k,c)}{T_+(-k,c)} R(-k) \ee^{-t\Phi(k)} \\ 0 & 1 \end{pmatrix}.
\end{align*}
Here we have used
\begin{alignat*}{2}
&R(-k)=\overline{R(k)},&\quad& k\in\D{R},\\
&T_\pm(-k,c)=T_\mp(k,c)^{-1},&&k\in\Sigma(c),
\end{alignat*}
and the jump condition \eqref{eq:jumpt} for the partial transmission coefficient $T(k,c)$ along $\Sigma(c)$ in the last step.
This also shows that the matrix entries are bounded for $k\in\D{R}$ near $k=0$ since $T_\pm(-k,c)=\overline{T_\pm(k,c)}$.

Since we have assumed that $R(k)$ has an analytic continuation to a neighborhood of the real axis, we can now
deform the jump along $\D{R}$ to move the oscillatory terms into regions where they are decaying.

There are four cases to distinguish:

\subsubsection*{Case \emph{(i):} $c>2$}

We set $\Sigma_\pm =\accol{k\in\D{C}\mid\Im(k) = \pm \varepsilon}$ for some small $\varepsilon$ such that $\Sigma_\pm$ lies in the region with $\pm\Re(\Phi(k))<0$ and such that the circles $C_j$, $\bar C_j$ around $\pm\ii\kappa_j$ lie outside the region in between $\Sigma_-$ and $\Sigma_+$.
Then we can split our jump by redefining $\tilde{m}(k)$ according to
\begin{equation}
\widehat{m}(k) = 
\begin{cases} 
\tilde{m}(k)\tilde{b}_+(k)^{-1} ,  &
0<\Im(k)<\varepsilon,\\
\tilde{m}(k) \tilde{b}_-(k)^{-1},  &
-\varepsilon < \Im(k) < 0,\\
\tilde{m}(k), & \text{else}.
\end{cases}
\end{equation}
Thus the jump along the real axis disappears and the jump along $\Sigma_\pm$ is given by
\begin{equation}\label{eq:jumpsolreg}
\widehat{v}(k) = 
\begin{cases} 
\tilde{b}_+(k) , &  k\in\Sigma_+ \\
\tilde{b}_-(k)^{-1} , & k\in\Sigma_-.
\end{cases}
\end{equation}
All other jumps are unchanged. By construction the jump along $\Sigma_{\pm}$ is exponentially close to the identity as $t\to\infty$.

\subsubsection*{Cases \emph{(ii)} and \emph{(iii)}: $0<c<2$, respectively $-1/4<c<0$}
We set $\Sigma_\pm=\Sigma_\pm^1 \cup\Sigma_\pm^2$ according to Figure~\ref{figure:simreg1} respectively
Figure~\ref{figure:simreg2} again such that the circles around $\pm\ii\kappa_j$ lie outside the region in between
$\Sigma_-$ and $\Sigma_+$. 
Again note that $\Sigma_\pm^1$ respectively $\Sigma_\pm^2$ lie in the region with
$\pm\Re(\Phi(k))<0$.

\begin{figure}[hb]
\centering
\begin{picture}(11,5)

\put(0,2.5){\line(1,0){11}}
\put(2.4,2.5){\vector(1,0){0.4}}
\put(8,2.5){\vector(1,0){0.4}}

\put(10.8,2.2){$\D{R}$}

\put(0.5,2){\vector(1,0){0.4}}
\put(2.4,2){\vector(1,0){0.4}}
\put(5.3,2){\vector(1,0){0.4}}
\put(8.3,2){\vector(1,0){0.4}}
\put(10.3,2){\vector(1,0){0.4}}

\put(0.5,3){\vector(1,0){0.4}}
\put(2.4,3){\vector(1,0){0.4}}
\put(5.3,3){\vector(1,0){0.4}}
\put(8.3,3){\vector(1,0){0.4}}
\put(10.3,3){\vector(1,0){0.4}}

\put(0.5,1.6){$\Sigma_-^2$}
\put(2.4,1.6){$\Sigma_-^1$}
\put(5.3,1.6){$\Sigma_-^2$}
\put(8.3,1.6){$\Sigma_-^1$}
\put(10.3,1.6){$\Sigma_-^2$}

\put(0.5,3.2){$\Sigma_+^2$}
\put(2.4,3.2){$\Sigma_+^1$}
\put(5.3,3.2){$\Sigma_+^2$}
\put(8.3,3.2){$\Sigma_+^1$}
\put(10.3,3.2){$\Sigma_+^2$}

\put(0.6,1.0){$\scriptstyle\Re\Phi<0$}
\put(0.6,3.8){$\scriptstyle\Re\Phi>0$}
\put(2.3,1.0){$\scriptstyle\Re\Phi>0$}
\put(2.3,3.8){$\scriptstyle\Re\Phi<0$}
\put(5.0,3.6){$\scriptstyle\Re\Phi>0$}
\put(5.0,1.2){$\scriptstyle\Re\Phi<0$}
\put(7.5,1.0){$\scriptstyle\Re\Phi>0$}
\put(7.5,3.8){$\scriptstyle\Re\Phi<0$}
\put(9.7,1.0){$\scriptstyle\Re\Phi<0$}
\put(9.7,3.8){$\scriptstyle\Re\Phi>0$}

\put(3.7,2.6){$\scriptstyle -k_0$}
\put(7.1,2.6){$\scriptstyle k_0$}
\put(1.1,2.6){$\scriptstyle -k_1$}
\put(9.5,2.6){$\scriptstyle k_1$}

\put(0,3){\line(1,0){1.3}}
\put(0,2){\line(1,0){1.3}}
\curve(1.3,2., 1.5,2.1, 1.7,2.4, 1.75,2.5, 1.8,2.6, 2.,2.9, 2.2,3.)
\curve(1.3,3., 1.5,2.9, 1.7,2.6, 1.75,2.5, 1.8,2.4, 2.,2.1, 2.2,2.)
\put(2.2,3){\line(1,0){0.86}}
\put(2.2,2){\line(1,0){0.86}}
\curve(3.06,2., 3.26,2.1, 3.46,2.4, 3.51,2.5, 3.56,2.6, 3.76,2.9, 3.96,3.)
\curve(3.06,3., 3.26,2.9, 3.46,2.6, 3.51,2.5, 3.56,2.4, 3.76,2.1, 3.96,2.)
\put(3.96,3){\line(1,0){3.08}}
\put(3.96,2){\line(1,0){3.08}}
\curve(7.04,2., 7.24,2.1, 7.44,2.4, 7.49,2.5, 7.54,2.6, 7.74,2.9, 7.94,3.)
\curve(7.04,3., 7.24,2.9, 7.44,2.6, 7.49,2.5, 7.54,2.4, 7.74,2.1, 7.94,2.)
\put(7.94,3){\line(1,0){0.88}}
\put(7.94,2){\line(1,0){0.88}}
\curve(8.82,2., 9.02,2.1, 9.22,2.4, 9.27,2.5, 9.32,2.6, 9.52,2.9, 9.72,3.)
\curve(8.82,3., 9.02,2.9, 9.22,2.6, 9.27,2.5, 9.32,2.4, 9.52,2.1, 9.72,2.)
\put(9.72,3){\line(1,0){1.28}}
\put(9.72,2){\line(1,0){1.28}}

\curvedashes{0.05,0.05}

\curve(1.948,4.618, 1.765,3.245, 1.765,1.755, 1.948,0.382)
\curve(9.052,4.618, 9.235,3.245, 9.235,1.755, 9.052,0.382)

\closecurve(7.492,2.52, 7.395,2.95, 7.112,3.363, 6.671,3.701, 6.116,3.923,
5.5,4., 4.884,3.923, 4.329,3.701, 3.888,3.363, 3.605,2.95,
3.508,2.52, 3.508,2.48, 3.605,2.05, 3.888,1.637, 4.329,1.299,
4.884,1.077, 5.5,1., 6.116,1.077, 6.671,1.299, 7.112,1.637,
7.395,2.05, 7.492,2.48)

\end{picture}
\caption{Deformed contour for $-1/4<c<0$}
\label{figure:simreg1}
\end{figure}
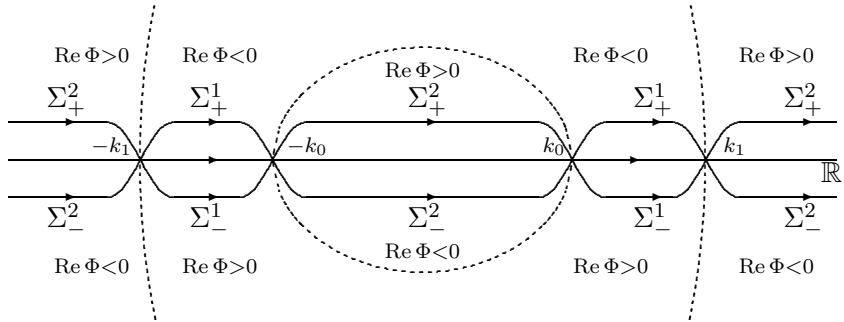

\begin{figure}[ht]
\centering
\begin{picture}(7,4.5)
\put(0,2.5){\line(1,0){7.0}}
\put(1.5,2.5){\vector(1,0){0.4}}
\put(5,2.5){\vector(1,0){0.4}}

\put(6.8,2.2){$\D{R}$}

\put(0,2){\line(1,0){2.0}}
\put(2.9,3){\line(1,0){1.2}}
\put(5,2){\line(1,0){2.0}}
\put(1.1,2){\vector(1,0){0.4}}
\put(5.5,2){\vector(1,0){0.4}}
\put(3.3,3){\vector(1,0){0.4}}

\put(1.3,1.6){$\Sigma_-^1$}
\put(5.7,1.6){$\Sigma_-^1$}
\put(3.5,3.2){$\Sigma_+^2$}

\curve(2.,2., 2.2,2.1, 2.4,2.4, 2.45,2.5, 2.5,2.6, 2.7,2.9, 2.9,3.)

\curve(4.1,3., 4.3,2.9, 4.5,2.6, 4.55,2.5, 4.6,2.4, 4.8,2.1, 5.,2.)

\put(0,3){\line(1,0){2.0}}
\put(2.9,2){\line(1,0){1.2}}
\put(5,3){\line(1,0){2.0}}
\put(1.1,3){\vector(1,0){0.4}}
\put(5.5,3){\vector(1,0){0.4}}
\put(3.3,2){\vector(1,0){0.4}}

\curve(2.,3., 2.2,2.9, 2.4,2.6, 2.45,2.5, 2.5,2.4, 2.7,2.1, 2.9,2.)

\curve(4.1,2., 4.3,2.1, 4.5,2.4, 4.55,2.5, 4.6,2.6, 4.8,2.9, 5.,3.)

\put(1.3,3.2){$\Sigma_+^1$}
\put(5.7,3,2){$\Sigma_+^1$}
\put(3.5,1.6){$\Sigma_-^2$}

\put(0.3,1.0){$\scriptstyle\Re\Phi>0$}
\put(0.3,3.8){$\scriptstyle\Re\Phi<0$}
\put(5.9,1.0){$\scriptstyle\Re\Phi>0$}
\put(5.9,3.8){$\scriptstyle\Re\Phi<0$}
\put(3,3.6){$\scriptstyle\Re\Phi>0$}
\put(3,1.2){$\scriptstyle\Re\Phi<0$}

\put(2.6,2.6){$\scriptstyle -k_0$}
\put(4.1,2.6){$\scriptstyle k_0$}

\curvedashes{0.05,0.05}

\closecurve(4.55,2.5, 4.498,2.979, 4.349,3.402, 4.117,3.727, 3.824,3.931, 3.5,4.,
3.176,3.931, 2.883,3.727, 2.651,3.402, 2.502,2.979, 2.45,2.5,
2.502,2.021, 2.651,1.598, 2.883,1.273, 3.176,1.069, 3.5,1.,
3.824,1.069, 4.117,1.273, 4.349,1.598, 4.498,2.021)

\end{picture}
\caption{Deformed contour for $0<c<2$}
\label{figure:simreg2}
\end{figure}
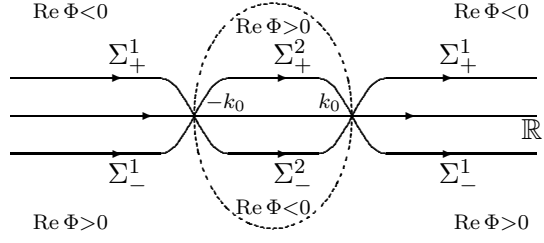

Then we can split our jump by redefining $\tilde{m}(k)$ according to
\begin{equation}
\widehat{m}(k)= \begin{cases} \tilde{m}(k)\tilde{b}_+(k)^{-1} , &  k\text{ between } \D{R} \text{ and } \Sigma_+^1, \\
\tilde{m}(k)\tilde{b}_-(k)^{-1} , & k \text{ between } \D{R} \text{ and } \Sigma_-^1,\\
\tilde{m}(k)\tilde{B}_+(k)^{-1} , & k \text{ between } \D{R} \text{ and } \Sigma_+^2,\\
\tilde{m}(k)\tilde{B}_-(k)^{-1} , & k \text{ between } \D{R} \text{ and } \Sigma_-^2,\\
\tilde{m}(k) , & \text{else}.
\end{cases}
\end{equation}

One checks that the jump along $\D{R}$ disappears and the jump along $\Sigma_{\pm}$ is given by
\begin{equation}
\widehat{v}(k)=\begin{cases} \tilde{b}_+(k) , & k\in\Sigma_+^1,\\
\tilde{b}_-(k)^{-1} , & k\in\Sigma_-^1,\\
\tilde{B}_+(k) , & k\in\Sigma_+^2,\\
\tilde{B}_-(k)^{-1} , &  k\in\Sigma_-^2. 
\end{cases}
\end{equation}

All other jumps are unchanged. Again the resulting Riemann--Hilbert problem still satisfies our symmetry condition \eqref{eq:symcond} and the jump along $\Sigma_{\pm}\setminus\{\pm k_0, \pm k_1\}$ is exponentially decreasing as $t\to\infty$.

\subsubsection*{Case \emph{(iv)}: $c<-1/4$}

We set $\Sigma_\pm =\accol{k\in\D{C}\mid\Im(k)=\pm\varepsilon}$ for some small $\varepsilon$ such that
$\Sigma_\pm$ lies in the region with $\pm \Re(\Phi(k)) > 0$ and such that the circles around $\pm\ii\kappa_j$
lie outside the region in between $\Sigma_-$ and $\Sigma_+$.
Then we can split our jump by redefining $\tilde{m}(k)$ according to
\begin{equation}
\widehat{m}(k) = 
\begin{cases} 
\tilde{m}(k)\tilde{B}_+(k)^{-1} ,  &0<\Im(k)<\varepsilon,\\
\tilde{m}(k) \tilde{B}_-(k)^{-1},  &-\varepsilon < \Im(k) < 0,\\
\tilde{m}(k), & \text{else}.
\end{cases}
\end{equation}
Thus the jump along the real axis disappears and the jump along $\Sigma_\pm$ is given by
\begin{equation}
\widehat{v}(k) = 
\begin{cases} 
\tilde{B}_+(k) , &  k\in\Sigma_+ \\
\tilde{B}_-(k)^{-1} , & k\in\Sigma_-.
\end{cases}
\end{equation}
All other jumps are unchanged. By construction the jump along $\Sigma_{\pm}$ is exponentially close to the identity as $t\to\infty$.

Note that in all cases the resulting Riemann--Hilbert problem still satisfies our symmetry condition \eqref{eq:symcond}, since we have
\begin{equation}
\tilde{b}_\pm(-k)= \begin{pmatrix}0&1\\1&0\end{pmatrix} \tilde{b}_\mp(k)\begin{pmatrix}0&1\\1&0\end{pmatrix}, \quad \tilde{B}_\pm(-k)= \begin{pmatrix}0&1\\1&0\end{pmatrix} \tilde{B}_\mp(k)\begin{pmatrix}0&1\\1&0\end{pmatrix}.
\end{equation}

In Cases~(i) and (iv) we can immediately apply Theorem~\ref{thm:remcontour} to $\widehat{m}$ as follows:

\begin{proof}[Proof of Theorem~\emph{\ref{thm:main}--\ref{thm:main.momentum} (iv)}]

Since $\widehat{v}(k)= \id + \ord(t^{-l})$ for any $l$ as $t\to\infty$, the same is true for $\widehat{m}(k)= \begin{pmatrix} 1 & 1\end{pmatrix} + \ord(t^{-l})$
by Theorem~\ref{thm:remcontour} (for the case $\gamma=0$). Hence
\begin{align}\notag
m_1(k) m_2(k) &= \widehat{m}_1(k) \widehat{m}_2(k) = 1 + \ord(t^{-l}), \\
\frac{m_1(\frac{\ii}{2})}{m_2(\frac{\ii}{2})} &= \ee^{2 T_1(c)}\frac{\widehat{m}_1(\frac{\ii}{2})}{\widehat{m}_2(\frac{\ii}{2})} = \ee^{2 T_1(c)} + \ord(t^{-l}),
\end{align}
for $k$ near $\frac{\ii}{2}$ and the claim follows from Lemma~\ref{lem:asymp} in case the reflection coefficient has an
analytic extensions.

Otherwise one has to split the reflection coefficient into an analytic part plus a
small remainder. One can literally follow the argument of \cite[Lemma~6.1]{gt}.
\end{proof}

\begin{proof}[Proof of Theorem~\emph{\ref{thm:main}--\ref{thm:main.momentum} (i)}]

If $|\frac{x}{t} - c_j|>\varepsilon$ for all $j$ we can choose $\gamma=0$ in Theorem~\ref{thm:remcontour}. Hence as in the proof of (iv),
\begin{align}\notag
m_1(k) m_2(k) &= \widehat{m}_1(k) \widehat{m}_2(k) = 1 + \ord(t^{-l}), \\
\frac{m_1(\frac{\ii}{2})}{m_2(\frac{\ii}{2})} &= \ee^{2 T_1(c)} \frac{\widehat{m}_1(\frac{\ii}{2})}{\widehat{m}_2(\frac{\ii}{2})} =
\ee^{2 T_1(c)} + \ord(t^{-l}),
\end{align}
for $k$ near $\frac{\ii}{2}$ and the claim follows as before.

Otherwise, if $|\frac{x}{t} - c_j|<\varepsilon$ for some $j$, we choose $\gamma=\gamma_j(x,t)$. Again we conclude
\begin{align}\notag
m_1(k) m_2(k) &= \widehat{m}_1(k) \widehat{m}_2(k) = f(k)f(-k) + \ord(t^{-l}), \\
\frac{m_1(\frac{\ii}{2})}{m_2(\frac{\ii}{2})} &= \ee^{2 T_1(c)} \frac{\widehat{m}_1(\frac{\ii}{2})}{\widehat{m}_2(\frac{\ii}{2})} =
\ee^{2 T_1(c)} \frac{f(\frac{\ii}{2})}{f(-\frac{\ii}{2})}+ \ord(t^{-l}),
\end{align}
where $f(k)$ is the one-soliton solution from Lemma~\ref{lem:singlesoliton}.
\end{proof}

In the cases (ii) and (iii) the jump will not decay on the small crosses containing the stationary phase
points and we need to continue the investigation of this problem in the next section.

\section{Reduction to a Riemann--Hilbert problem on a small cross}\label{sec:simrhp}

In the previous section we have seen that for $-1/4<c<2$ we can reduce everything to a
Riemann--Hilbert problem for $\widehat{m}(k)$ such that the jumps are exponentially close to the identity
except in small neighborhoods of the stationary phase points $\pm k_0$ and $\pm k_1$. Hence we
need to continue our investigation of this case in this section.

Denote by $\Sigma^c(\pm k_{\ell})$, $\ell=0,1$ the parts of $\Sigma_+\cup \Sigma_-$ inside a small neighborhood of $\pm k_{\ell}$.
We will now show how to solve the two problems on the small crosses $\Sigma^c(k_{\ell})$ respectively
$\Sigma^c(-k_{\ell})$ by reducing them to Theorem~\ref{thm:solcross}. This will lead us to the solution of our original problem by
virtue of Theorem~\ref{thm:decoup}.

Now let us turn to the solution of the problem on
\[
\Sigma^c(k_{\ell})=(\Sigma_+\cup\Sigma_-)\cap\accol{k\mid\abs{k-k_{\ell}}<\varepsilon}
\]
for some small $\varepsilon > 0$. We can also deform our contour slightly such
that $\Sigma^c(k_{\ell})$ consists of two straight lines.
Next, abbreviate
\begin{align}\notag
\Phi_{\ell} &= (-1)^{\ell}\frac{\Phi(k_{\ell})}{\ii} = -(-1)^{\ell}\frac{2 \varkappa k_{\ell}^3}{(1/4+k_{\ell}^2)^2}, \\
\Phi''_{\ell} &= (-1)^{\ell}\frac{\Phi''(k_{\ell})}{\ii} = (-1)^{\ell}\frac{2 \varkappa k_{\ell}(3/4-k_{\ell}^2)}{(1/4+k_{\ell}^2)^3},
\end{align}
where $\Phi_0''>0$ for $-1/4<c<2$ and $\Phi_1''>0$ for $-1/4<c<0$.

As a first step we make a change of coordinates
\begin{equation}\label{eq:zeta}
\zeta_{\ell}=(-1)^{\ell}\sqrt{\Phi''_{\ell}}(k-k_{\ell}) , \qquad
k=k_{\ell}+ \frac{(-1)^{\ell}}{\sqrt{\Phi''_{\ell}}} \zeta_{\ell}
\end{equation}
such that the phase reads $\Phi(k)= (-1)^{\ell}\,\ii\,(\Phi_{\ell}+\frac{1}{2}\zeta^2+\ord(\zeta^3))$.

Next we need the behavior of our jump matrix near $k_{\ell}$, that is, the behavior of $T(k,c)$ near $k_{\ell}$.

\begin{lemma}
We have
\begin{equation}
T(k,c)= 
\begin{cases}
\left(\frac{k-k_0}{k+k_0}\right)^{\ii\nu_0} \left(-\frac{k-k_1}{k+k_1}\right)^{-\ii\nu_1}\tilde{T}(k,c), & -1/4<c<0,\\
\left(\frac{k-k_0}{k+k_0}\right)^{\ii\nu_0} \tilde{T}(k,c), & 0<c<2,
\end{cases}
\end{equation}
where $\nu_{\ell}=-\frac{1}{\pi}\log(|T(k_{\ell})|)>0$ and the branch cut
of the logarithm is chosen along the negative real axis.
Here
\begin{equation}
\tilde{T}(k,c)= \prod_{j=1}^{N}\frac{k+\ii\kappa_j}{k-\ii\kappa_j}
\exp\biggl(-\frac{1}{2\pi\ii}\int_{\Sigma(c)}\log(|k-\zeta|)\,\dd\log(\abs{T(\zeta)}^2)\biggr).
\end{equation}
The function $\tilde{T}(\,\cdot\,,c)$ is H\"older continuous of any exponent less
than $1$ at the stationary phase points $k=k_{\ell}$ and satisfies $\abs{\tilde{T}(k_{\ell},c)}=1$.
\end{lemma}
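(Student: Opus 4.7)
The plan is to derive the factorization by integrating by parts in the Cauchy-type integral that defines $T(k,c)$. Set $g(\zeta):=\log|T(\zeta)|^2$, which is real-analytic on a neighbourhood of $\D{R}$ by the standing assumption that $R(k)$ extends analytically there, and note that $g(\pm k_\ell)=-2\pi\nu_\ell$ while $g(\zeta)\to 0$ as $\zeta\to\pm\infty$ (since $|T(\zeta)|\to 1$). Integration by parts in each component of $\Sigma(c)$ gives
\[
\frac{1}{2\pi\ii}\int_{\Sigma(c)}\frac{g(\zeta)}{\zeta-k}\,\dd\zeta
=\frac{1}{2\pi\ii}\bigl[g(\zeta)\log(\zeta-k)\bigr]_{\partial\Sigma(c)}
-\frac{1}{2\pi\ii}\int_{\Sigma(c)}\log(\zeta-k)\,\dd g(\zeta).
\]
The boundary contributions at $\pm\infty$ vanish, while those at $\pm k_0$ (and at $\pm k_1$ when $-1/4<c<0$) combine into $\ii\nu_0\log\tfrac{k-k_0}{k+k_0}-\ii\nu_1\log\tfrac{k-k_1}{k+k_1}$; upon exponentiation this produces the singular prefactors advertised in the lemma, the precise signs inside each base being dictated by the choice of branch cut for $\log$ along the negative real axis.

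For the leftover integral $-\frac{1}{2\pi\ii}\int\log(\zeta-k)\,\dd g(\zeta)$, I would exploit the evenness $g(-\zeta)=g(\zeta)$ together with the $\zeta\mapsto-\zeta$ symmetry of $\Sigma(c)$ to show that the imaginary part $\ii\arg(\zeta-k)$ of $\log(\zeta-k)$ integrates to zero against $\dd g$, so that only the real part $\log|k-\zeta|$ survives. Together with the Blaschke product $\prod_j(k+\ii\kappa_j)/(k-\ii\kappa_j)$, which is common to both $T(k,c)$ and $\tilde T(k,c)$, this yields the stated integral representation for $\tilde T$.

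The two analytic properties of $\tilde T$ near a stationary phase point $k_\ell$ are now short. Since $g\in C^\infty$ on $\Sigma(c)$ with bounded derivative and (on the unbounded pieces) rapid decay, the signed measure $\dd g=g'(\zeta)\,\dd\zeta$ has bounded $L^\infty$ density and finite total variation; standard estimates for Cauchy-type integrals with logarithmic kernel then show that $k\mapsto\int\log|k-\zeta|\,g'(\zeta)\,\dd\zeta$ is H\"older continuous of any exponent less than one at $k=k_\ell$, since the kernel has only an integrable logarithmic singularity at $\zeta=k_\ell$. Unimodularity $|\tilde T(k_\ell,c)|=1$ follows because the Blaschke product is unimodular on $\D{R}$ and because the exponent $-\frac{1}{2\pi\ii}\int\log|k_\ell-\zeta|\,\dd g(\zeta)$ is purely imaginary, being a real number divided by $2\pi\ii$.

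The main obstacle will be the careful bookkeeping of branch cuts in the boundary terms of the integration by parts: one must verify, with $\log$ cut along the negative real axis, that the individual logarithms $\log(\pm k_\ell-k)$ combine into a single logarithm of the advertised ratios (possibly with an explicit sign $-1$ inserted in some bases so that their arguments stay off the cut), and check that the resulting prefactors are meromorphic on the intended region of $\D{C}\setminus\Sigma(c)$ and consistent with the jumps of $T(k,c)$ across $\Sigma(c)$. Once this bookkeeping is in hand, everything else reduces to the standard symmetry and smoothness arguments indicated above.
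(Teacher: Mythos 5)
Your route is the one the paper intends: the paper's proof consists of the single remark that the factorization is ``a straightforward calculation'' (integration by parts in the exponent of \eqref{def:Tkk0}) together with a citation of Muskhelishvili for the H\"older continuity, and your treatment of the H\"older property and of $|\tilde T(k_\ell,c)|=1$ is consistent with that. The genuine problem is the step where you dispose of the imaginary part of the logarithm: the claim that, by $g(-\zeta)=g(\zeta)$ and the symmetry of $\Sigma(c)$, the quantity $\int_{\Sigma(c)}\arg(\zeta-k)\,\dd g(\zeta)$ vanishes ``so that only $\log|k-\zeta|$ survives'' is false in general. For non-real $k$ it already fails in the case $0<c<2$ (pair $\zeta$ with $-\zeta$ and note $\arg(\zeta-k)-\arg(-\zeta-k)\neq0$ for, say, $k=\ii$), so the identity with the real kernel can only be asserted for $k$ on the real axis, respectively at the stationary phase points --- which is indeed all that is used afterwards, but is not what your argument proves. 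More seriously, in the second oscillatory region $-\tfrac14<c<0$ the cancellation fails even at $k=k_\ell$: with the limit taken from the upper half plane, the unbounded component $(-\infty,-k_1)$ contributes $\int\arg(\zeta-k_0)\,\dd g\approx-\pi\,g(-k_1)=2\pi^2\nu_1$, i.e.\ the leftover integral carries a nontrivial positive factor $\ee^{-\pi\nu_1}$ (and symmetrically a factor involving $\nu_0$ at $k=k_1$).

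That factor is precisely what is absorbed by the minus sign inside the base of the second prefactor: near $k=k_0$ one has, for the branch continuous from the upper half plane, $\bigl(\tfrac{k-k_1}{k+k_1}\bigr)^{-\ii\nu_1}=\ee^{\pi\nu_1}\bigl(-\tfrac{k-k_1}{k+k_1}\bigr)^{-\ii\nu_1}$, and the two factors $\ee^{\mp\pi\nu_1}$ cancel, which is why the lemma's formula holds at $k_\ell$ with a unimodular $\tilde T(k_\ell,c)$. So the sign inside the base is not a cosmetic branch adjustment to be inserted at the end; it is the exact counterpart of the non-vanishing argument integral, and if you keep your symmetry ``cancellation'' you end up with a formula that is off by the constant $\ee^{\pm\pi\nu_{\ell'}}$ at $k=k_\ell$ (equivalently, with $|\tilde T(k_\ell,c)|\neq1$), which would propagate into $r_\ell$ and $\beta_\ell$ later. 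The fix is to carry a single continuous branch of $\log(\zeta-k)$ through the integration by parts on each component of $\Sigma(c)$, evaluate the argument contributions explicitly at $k=k_\ell$ (they vanish only for the bounded component $(-k_0,k_0)$, where $g(k_0)=g(-k_0)$), and verify that they are exactly compensated when the boundary logarithms are rewritten with the bases as in the lemma; with that bookkeeping done, your H\"older and unimodularity arguments go through as stated and coincide with the paper's appeal to Muskhelishvili.
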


\begin{proof}
This is a straightforward calculation.
H\"older continuity of any exponent less than $1$ is well-known (cf.~\cite{mu}).
\end{proof}

If $k(\zeta_{\ell})$ is defined as in \eqref{eq:zeta} and $0<\alpha<1$, then there is an $L>0$ such that
\begin{equation}
\left| T(k(\zeta_{\ell}),c)-\zeta_{\ell}^{(-1)^{\ell}\ii\nu_{\ell}}\tilde{T}_{\ell}(c)\ee^{-(-1)^{\ell}\ii\nu_{\ell}\log(2 k_{\ell}\sqrt{\Phi''_{\ell}})}
\right| \leq L\abs{\zeta_{\ell}}^{\alpha},
\end{equation}
where the branch cut of $\zeta_{\ell}^{\ii\nu_{\ell}}$ is chosen along the negative real axis and
\begin{equation}
\tilde{T}_{\ell}(c) = \Bigl((-1)^{\ell}\frac{k-k_{\ell}}{k+k_{\ell}}\Bigr)^{-(-1)^{\ell}\ii\nu_{\ell}} T(k,c) \Big|_{k=k_{\ell}}.
\end{equation}
We also have
\begin{equation}
\left| R(k(\zeta_{\ell}))-R(k_{\ell}) \right| \leq L |\zeta_{\ell}|^{\alpha}.
\end{equation}
Set
\begin{equation}
r_{\ell}=R(k_{\ell})\tilde{T}_{\ell}(c)^{-2}\ee^{(-1)^{\ell}\ii\nu_{\ell} \log(4k_{\ell}^2 \Phi''_{\ell})}
\end{equation}
and note $\nu_{\ell}= -\frac{1}{2\pi}\log(1-|R(k_{\ell})|^2)$ since $|r_{\ell}|=|R(k_{\ell})|$.
Then the assumptions of Theorem~\ref{thm:solcross} are satisfied with $r=r_0$ near for $\Sigma^c(k_0)$.
Similarly, for $\Sigma^c(k_1)$ they are satisfied with $r=\overline{r_1}$ after a conjugation with $(\begin{smallmatrix} 0 & 1\\1 & 0\end{smallmatrix})$.

Therefore we can conclude that the solution on $\Sigma^c(k_{\ell})$ is given by
\begin{align}\notag
M_{\ell}^c(k) & =\id + \frac{1}{\zeta_{\ell}}\frac{\ii}{t^{1/2}}\begin{pmatrix} 0 &  -\beta_{\ell} \\ \overline{\beta_{\ell}} & 0 \end{pmatrix} + \ord(t^{-\alpha})\\
& = \id+ \frac{1}{\sqrt{\Phi''_{\ell}}(k-k_{\ell})}\frac{\ii}{t^{1/2}} \begin{pmatrix} 0 & -\beta_{\ell} \\
\overline{\beta_{\ell}} & 0 \end{pmatrix} + \ord(t^{-\alpha}), \quad 1/2<\alpha<1,
\end{align}
where $\beta_{\ell}$ is given by
\begin{align}\notag
\beta_0 
&= \sqrt{\nu_{\ell}}\ee^{\ii (\pi/4 - \arg(r_0) +\arg(\Gamma(\ii\nu_0)) - t\Phi_0)}t^{-\ii\nu_0}\\
& = \sqrt{\nu_0}\ee^{\ii(\pi/4- \arg(R(k_0)) + 2\arg(\tilde{T}_0(c))-\nu_0\log(4k_0^2\Phi''_0) +
\arg(\Gamma(\ii\nu_0)) -t \Phi_0)}t^{-\ii\nu_0},\\ \notag
\beta_1 & = \sqrt{\nu_1}\ee^{\ii (-\pi/4 - \arg(r_1) -\arg(\Gamma(\ii\nu_1)) + t\Phi_1)}t^{\ii\nu_1}\\
& = \sqrt{\nu_{\ell}}\ee^{\ii(-\pi/4- \arg(R(k_1)) + 2\arg(\tilde{T}_1(c)) + \nu_1\log(4k_1^2\Phi''_1) -
\arg(\Gamma(\ii\nu_1)) + t\Phi_1)}t^{\ii\nu_1}.
\end{align}
We also need the solution $\bar{M}_{\ell}^c(k)$ on $\Sigma^c(-k_{\ell})$. We make the following ansatz, which
is inspired by the symmetry condition for the vector Riemann--Hilbert problem, outside the two small crosses:
\begin{align}\notag
\bar{M}_{\ell}^c(k) &= \begin{pmatrix}0&1\\1&0\end{pmatrix} M_{\ell}^c(-k) \begin{pmatrix}0&1\\1&0\end{pmatrix}\\
 &= \id - \frac{(-1)^{\ell}}{\sqrt{\Phi''_{\ell}}(k+k_{\ell})}\frac{\ii}{t^{1/2}}\begin{pmatrix} 0 & \overline{\beta_{\ell}} \\
-\beta_{\ell} & 0 \end{pmatrix} +\ord(t^{-\alpha}).
\end{align}
Now we are ready to finish the proof of Theorem~\ref{thm:main}-\ref{thm:main.momentum}.

\begin{proof}[Proof of Theorem~\emph{\ref{thm:main}-\ref{thm:main.momentum} (ii)}]
By Theorem~\ref{thm:decoup} we infer
\begin{align}\notag
\widehat{m}(k) =& \begin{pmatrix} 1 & 1 \end{pmatrix}
+ \frac{1}{\sqrt{\Phi''_0 t}}\frac{\ii}{k-k_0} \begin{pmatrix} \overline{\beta_0} & -\beta_0 \end{pmatrix}
- \frac{1}{\sqrt{\Phi''_0 t}}\frac{\ii}{k+k_0} \begin{pmatrix} -\beta_0 & \overline{\beta_0} \end{pmatrix}\\
& + \ord(t^{-\alpha})
\end{align}
and thus
\begin{align}\notag
\widehat{m}(\tfrac{\ii}{2}) & = \begin{pmatrix} 1 & 1 \end{pmatrix} + \frac{4}{\sqrt{\Phi''_0 t}} \Re
\begin{pmatrix} \frac{\beta_0}{1-2\ii k_0} & -\frac{\beta_0}{1+2\ii k_0} \end{pmatrix}+ \ord(t^{-\alpha}),\\
\widehat{m}'(\tfrac{\ii}{2}) & = \frac{8\ii}{\sqrt{\Phi''_0 t}} \Re
\begin{pmatrix} \frac{\beta_0}{(1-2\ii k_0)^2} & -\frac{\beta_0}{(1+2\ii k_0)^2} \end{pmatrix} + \ord(t^{-\alpha}).
\end{align}
Hence
\begin{align}\notag
m_1(k) m_2(k) = & \widehat{m}_1(k) \widehat{m}_2(k) = 1 - \frac{1}{\sqrt{\Phi''_0 t}} \frac{4k_0}{1/4+k_0^2} \Im(\beta_0)+ \ord(t^{-\alpha})\\
&- \left(\frac{1}{\sqrt{\Phi''_0 t}} \frac{ 4k_0\ii}{(1/4+k_0^2)^2} \Im(\beta_0) + \ord(t^{-\alpha})\right)(k-\tfrac{\ii}{2}) + \dots
\end{align}
for $k$ near $\frac{\ii}{2}$ and Lemma~\ref{lem:asymp} implies
\begin{align}\notag
w(x,t) &= \varkappa - \frac{\varkappa}{\sqrt{\Phi''_0 t}} \frac{8 k_0}{1/4+k_0^2} \Im(\beta_0) + \ord(t^{-\alpha})\\
u(x,t) &= -\frac{\varkappa}{\sqrt{\Phi''_0 t}} \frac{2 k_0}{(1/4+k_0^2)^2} \Im(\beta_0) + \ord(t^{-\alpha})
\end{align}
Similarly,
\begin{align}\notag
\frac{m_1(\frac{\ii}{2})}{m_2(\frac{\ii}{2})} &= \ee^{2 T_1(c)}\frac{\widehat{m}_1(\frac{\ii}{2})}{\widehat{m}_2(\frac{\ii}{2})} =
\ee^{2 T_1(c)} \left(1 + \frac{1}{\sqrt{\Phi''_0 t}} \frac{2}{1/4+k_0^2} \Re(\beta_0) \right) + \ord(t^{-\alpha}),
\end{align}
implies
\begin{equation}
x-y = 2T_1(c) + \frac{1}{\sqrt{\Phi''_0 t}} \frac{2}{1/4+k_0^2} \Re(\beta_0) + \ord(t^{-\alpha}).
\end{equation}
Thus the claim follows in case the reflection coefficient has an analytic extensions.
Otherwise one has to split the reflection coefficient into an analytic part plus a
small remainder using \cite[Lem.~6.2 and 6.3]{gt}. Again one can literally follow the argument given there.
\end{proof}

\begin{proof}[Proof of Theorem~\emph{\ref{thm:main}--\ref{thm:main.momentum} (iii)}]
This follows as in the previous case using
\begin{align}\notag
\widehat{m}(k) =& \begin{pmatrix} 1 & 1 \end{pmatrix}
+ \frac{1}{\sqrt{\Phi''_0 t}}\frac{\ii}{k-k_0} \begin{pmatrix} \overline{\beta_0} & -\beta_0 \end{pmatrix}
- \frac{1}{\sqrt{\Phi''_0 t}}\frac{\ii}{k+k_0} \begin{pmatrix} -\beta_0 & \overline{\beta_0} \end{pmatrix}\\ \notag
& + \frac{1}{\sqrt{\Phi''_1 t}}\frac{\ii}{k-k_1} \begin{pmatrix} \overline{\beta_1} & -\beta_1 \end{pmatrix}
- \frac{1}{\sqrt{\Phi''_1 t}}\frac{\ii}{k+k_1} \begin{pmatrix} -\beta_1 & \overline{\beta_1} \end{pmatrix}\\
& + \ord(t^{-\alpha}).
\end{align}
and hence
\begin{align}\notag
w(x,t) &= \varkappa - \frac{\varkappa}{\sqrt{\Phi''_0 t}} \frac{8k_0}{1/4+k_0^2} \Im(\beta_0)
- \frac{\varkappa}{\sqrt{\Phi''_1 t}} \frac{8k_1}{1/4+k_1^2} \Im(\beta_1) + \ord(t^{-\alpha})\\
u(x,t) &= -\frac{\varkappa}{\sqrt{\Phi''_0 t}} \frac{2k_0}{(1/4+k_0^2)^2} \Im(\beta_0)
-\frac{\varkappa}{\sqrt{\Phi''_1 t}} \frac{2k_1}{(1/4+k_1^2)^2} \Im(\beta_1) + \ord(t^{-\alpha})
\end{align}
respectively
\begin{equation}
x-y = 2T_1(c) + \frac{1}{\sqrt{\Phi''_0 t}} \frac{2}{1/4+k_0^2} \Re(\beta_0)
+\frac{1}{\sqrt{\Phi''_1 t}} \frac{2}{1/4+k_1^2} \Re(\beta_1) + \ord(t^{-\alpha}).
\end{equation}
\end{proof}

\appendix
\section{Some results for Riemann--Hilbert problems}       \label{sec:sieq}

In this section we state the results required to prove or main theorems. We will assume that $\Sigma$ is a nice contour, say a finite number of smooth oriented finite curves in $\D{C}$, which intersect at most finitely many times with all intersections being transversal. Moreover, suppose the distance between $\Sigma$ and $\{ \ii y \mid y\geq y_0\}$ is positive for some $y_0>0$.

\subsection{Riemann--Hilbert problem for the soliton region}
The first result is needed in the soliton region. 
Consider the Riemann--Hilbert problem of finding a function $m(k)$ satisfying
\begin{enumerate}[(i)]
\item
$m(k)$ is sectionally meromorphic with simple poles at $\pm\ii\kappa\not\in\Sigma$,
\item
the jump condition
\[
m_+(k) = m_-(k) v(k), \qquad k\in \Sigma,
\]
\item
the pole condition
\[
\Res_{\ii\kappa} m(k) = \lim_{k\to\ii\kappa} m(k)
\begin{pmatrix} 0 & 0\\ \ii\gamma^2  & 0 \end{pmatrix},
\]
\item
the symmetry condition
\[
m(-k) = m(k) \begin{pmatrix}0&1\\1&0\end{pmatrix},
\]
\item
the normalization condition
\[
\lim_{k\to\infty} m(k) = \begin{pmatrix} 1 & 1\end{pmatrix}.
\]
\end{enumerate}
Clearly the symmetry condition implies that our jump data $(\Sigma, v)$ should be symmetric as well, that is,
\begin{equation}
v(-k) = \begin{pmatrix}0&1\\1&0\end{pmatrix} v(k)^{-1} \begin{pmatrix}0&1\\1&0\end{pmatrix},\quad k\in\Sigma.
\end{equation}
and $\Sigma$ is invariant under $k\mapsto -k$ and oriented such that under the
mapping $k\mapsto -k$ sequences converging from the positive sided to $\Sigma$
are mapped to sequences converging to the negative side.
Moreover (ii) and (iii) imply 
\[
\Res_{-\ii\kappa} m(k) = \lim_{k\to -\ii\kappa} m(k)
\begin{pmatrix} 0 & -\ii\gamma^2  \\ 0 & 0 \end{pmatrix}.
\]
Finally we will assume $\norm{v-\id}_2<\infty$ and $\norm{v-\id}_\infty<\infty$.

We are interested in comparing this Riemann--Hilbert problem with the one-soliton problem (where $v\equiv 0$) in the case where $\norm{v-\id}_\infty$ and $\norm{v-\id}_2$ are small. For such a situation we have the following result:

\begin{theorem}[\cite{gt,kt}]\label{thm:remcontour}
Assume $v=v(t)$ satisfies
\begin{equation}
\begin{split}
&\norm{v(t)-\id}_{\infty} \leq \rho(t),\\
&\norm{v(t)-\id}_2\leq\rho(t)
\end{split}
\end{equation}
for some function $\rho(t) \to 0$ as $t\to\infty$. Then the above Riemann--Hilbert problem
has a unique solution for sufficiently large $t$ and the solution differs from the one-soliton
solution by $\ord(\rho(t))$ uniformly in $k$ away from $\Sigma \cup \{\pm\ii\kappa\}$.
\end{theorem}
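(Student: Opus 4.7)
The plan is to reduce the problem to a small-norm holomorphic Riemann--Hilbert problem on $\Sigma$, by first trading the pole conditions for jumps on small circles and then using the one-soliton solution to absorb those jumps. Concretely, I would apply the device of Lemma~\ref{lem:holrhp} and redefine $m$ inside discs of radius $\varepsilon$ around $\pm\ii\kappa$ chosen small enough that the discs avoid $\Sigma$; this converts the pole conditions into jumps on circles $C$ and $\bar C$, yielding a sectionally analytic vector $\hat m$ on $\D{C}\setminus\hat\Sigma$ with $\hat\Sigma=\Sigma\cup C\cup\bar C$ and explicit jumps on $C\cup\bar C$ depending only on $\kappa$ and $\gamma$. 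The same procedure applied to the one-soliton vector $m_0$ of Lemma~\ref{lem:singlesoliton} produces a holomorphic $\hat m_0$ whose jump on $\Sigma$ is the identity and whose jumps on $C\cup\bar C$ coincide with those of $\hat m$.

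Next I would promote $\hat m_0$ to a $2\times 2$ matrix-valued solution $\hat M_0$ of the corresponding matrix Riemann--Hilbert problem, normalized to $\id$ at infinity. Setting
\[
\check m(k)=\hat m(k)\hat M_0(k)^{-1},
\]
the jumps across $C\cup\bar C$ drop out, leaving a pure jump problem for $\check m$ on $\Sigma$ with jump matrix $\check v=\hat M_0\,v\,\hat M_0^{-1}$, the symmetry condition preserved, and normalization $\check m\to(1,1)$ at infinity. Since $\Sigma$ is at positive distance from $\pm\ii\kappa$, both $\hat M_0$ and $\hat M_0^{-1}$ are uniformly bounded on $\Sigma$, so $\|\check v-\id\|_\infty$ and $\|\check v-\id\|_2$ inherit the bound $\ord(\rho(t))$.

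Then I would invoke the standard small-norm vector Riemann--Hilbert theory: the Cauchy-type operator $C_{\check w}$ associated with $\check w=\check v-\id$ has $L^2(\Sigma)$-operator norm $\lesssim\|\check w\|_\infty$, so $\id-C_{\check w}$ is invertible by Neumann series once $t$ is so large that $\rho(t)$ falls below a fixed threshold. This yields existence and uniqueness of $\check m$ together with the representation
\[
\check m(k)=(1,1)+\frac{1}{2\pi\ii}\int_\Sigma\frac{\mu(s)\check w(s)}{s-k}\,\dd s
\]
and the uniform bound $\check m(k)-(1,1)=\ord(\rho(t))$ for $k$ away from $\Sigma$. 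Multiplying back by $\hat M_0$ and undoing the disc redefinitions then recovers $m(k)=m_0(k)+\ord(\rho(t))$ uniformly away from $\Sigma\cup\{\pm\ii\kappa\}$.

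The main obstacle is the construction of the matrix extension $\hat M_0$: by Lemma~\ref{lem:resonant} the vector $m_0$ may vanish at $k=0$, so a naive symmetric second row will make $\det\hat M_0$ vanish there, and since $\Sigma$ typically passes through or near the origin one must exploit the simple-zero statement of that lemma to arrange $\hat M_0$ to be boundedly invertible on $\Sigma$ even in this degenerate case. Once $\hat M_0$ is in hand, the remainder of the argument is a routine Neumann-series estimate, tracking the bounds through the factors $\hat M_0^{\pm 1}$ to produce the claimed error $\ord(\rho(t))$.
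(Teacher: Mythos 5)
There is a genuine gap, and it sits exactly at the step you yourself flag as the ``main obstacle'': the matrix model solution $\hat M_0$ you need for the conjugation does not exist uniformly -- in fact it does not exist at all -- for certain values of the parameters that necessarily occur in the soliton region. Any matrix solution of the model problem (no jump on $\Sigma$, the two residue conditions at $\pm\ii\kappa$, normalization $\id$ at $\infty$) must, by Liouville, have the form $\id+A(k-\ii\kappa)^{-1}+B(k+\ii\kappa)^{-1}$ with $A$ supported in the first column and $B$ in the second; writing out the residue conditions with $\alpha=\frac{\gamma^2}{2\kappa}\ee^{t\Phi(\ii\kappa)}$ one finds, e.g., $B_{12}=-2\ii\kappa\,\alpha/(1-\alpha^2)$. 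Thus for $\alpha=1$ the linear system is inconsistent and no matrix model solution exists, and for $\alpha$ near $1$ the entries of $\hat M_0^{\pm1}$ blow up like $\lvert 1-\alpha^2\rvert^{-1}$ on all of $\Sigma$. Note that $\alpha=1$ is precisely the resonance situation the paper points out after Lemma~\ref{lem:singlesoliton}: the vector one-soliton solution \eqref{eq:oss} vanishes at $k=0$ exactly when $2\kappa=\gamma^2\ee^{t\Phi(\ii\kappa)}$, and along the soliton trajectory $\alpha(y,t)$ sweeps through $1$ (this is the peak of the soliton), so this is not an avoidable corner case but the most relevant one for Theorem~\ref{thm:main}~(i$_1$). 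Your proposed repair via Lemma~\ref{lem:resonant} cannot work as stated: that lemma concerns the specific scattering-theoretic $m$ and, more importantly, the obstruction is not that a naive $\det\hat M_0$ vanishes at $k=0$ (any normalized matrix solution with nilpotent residues has $\det\equiv1$), but that the matrix Riemann--Hilbert problem itself is unsolvable at $\alpha=1$ and ill-conditioned nearby, so no choice of extension yields $\hat M_0^{\pm1}$ bounded on $\Sigma$ uniformly in $t$, and your $\ord(\rho(t))$ bound for $\check v-\id$ degenerates exactly where it is needed.

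This failure of the matrix model is the reason the cited proofs in [17,24] (\cite{gt,kt}) do not conjugate by a matrix solution at all: they work directly with the vector problem, encoding the pole conditions and the symmetry \eqref{eq:symcond} into a modified (symmetry-adapted) Cauchy kernel, so that the one-soliton vector $m_0$ plays the role of the inhomogeneous term of a singular integral equation whose perturbation has operator norm $\ord(\rho(t))$; existence, uniqueness (which for the vector problem genuinely requires the symmetry, cf.\ the discussion preceding Lemma~\ref{lem:singlesoliton} and Corollary~\ref{cor:unique}) and the uniform $\ord(\rho(t))$ estimate then follow from a Neumann series that is insensitive to the value of $\alpha$. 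Away from the resonance your small-norm argument is fine, and the Neumann-series part of your write-up is the standard one; but to prove Theorem~\ref{thm:remcontour} as stated -- uniformly in the soliton region -- you must either switch to such a vector/symmetric-kernel formulation or supply a separate argument covering a full neighborhood of the times where $\alpha$ is close to $1$, and the proposal currently contains neither.
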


\subsection{Riemann--Hilbert problem for oscillatory regions}
For the oscillatory regions we will need the following result which allows us to reduce everything to a model problem whose solution will be given below.

\begin{theorem}[\cite{gt,kt}]\label{thm:decoup}
Consider the vector Riemann--Hilbert problem 
\begin{equation}
\begin{split}
& m_+(k)=m_-(k)v(k), \qquad k\in\Sigma, \\
&\lim_{k\to\infty}m(k)=\begin{pmatrix}1&1\end{pmatrix},
\end{split}
\end{equation}
with $\det(v)\neq 0$ and let $0<\alpha<\beta \leq 2\alpha$, $\rho(t)\to\infty$ be given.

Suppose that for every sufficiently small $\varepsilon >0$ both the $L^2$ and the $L^{\infty}$
norms of $v$ are $\ord(t^{-\beta})$ away from some $\varepsilon$ neighborhoods of some points 
$k_i\in\Sigma$, $1\leq i\leq n$. Moreover, suppose that the solution of the matrix problem with jump 
$v(k)$ restricted to the $\varepsilon$ neighborhood of $k_i$ has a solution which satisfies 
\begin{equation}
M_i(k)=\id +\frac{1}{\rho(t)^{\alpha}}\frac{M_i}{k-k_i}+\ord(\rho(t)^{-\beta}), \qquad 
\abs{k-k_i}>\varepsilon.
\end{equation}
Then the solution $m(k)$ is given by 
\begin{equation}
m(k)=\begin{pmatrix} 1 & 1 \end{pmatrix} + 
\frac{1}{\rho(t)^{\alpha}}\begin{pmatrix} 1 & 1 \end{pmatrix} 
\sum_{i=1}^n \frac{M_i}{k-k_i} +
\ord\bigl(\rho(t)^{-\beta}\bigr),
\end{equation}
where the error term depends on the distance of $k$ to $\Sigma$.
\end{theorem}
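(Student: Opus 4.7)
The plan is to reduce the vector Riemann--Hilbert problem to a small-norm problem via a parametrix construction, and then extract the leading asymptotic term from the Beals--Coifman integral representation of the small-norm solution.

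First, I would pass to the associated matrix Riemann--Hilbert problem (same contour $\Sigma$ and same jump $v$, but normalized by $\widetilde m(k)\to\id$ as $k\to\infty$), and build a global parametrix $P(k)$ by setting $P(k)=M_i(k)$ inside each small disk $\{|k-k_i|<\varepsilon\}$ and $P(k)=\id$ elsewhere. The error matrix $E(k):=\widetilde m(k)\,P(k)^{-1}$ then satisfies $E(k)\to\id$ at infinity and is analytic off the augmented contour $\Sigma_E=\bigl(\Sigma\setminus\bigcup_i\{|k-k_i|<\varepsilon\}\bigr)\cup\bigcup_i C_i$, where $C_i=\{|k-k_i|=\varepsilon\}$. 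On the first part of $\Sigma_E$ the jump of $E$ equals $v$, which is $\id+\ord(t^{-\beta})$ in $L^2\cap L^\infty$ by hypothesis; on each circle $C_i$ the jump equals $M_i(k)=\id+\rho(t)^{-\alpha}\frac{M_i}{k-k_i}+\ord(\rho(t)^{-\beta})$, which is $\id+\ord(\rho(t)^{-\alpha})$ in both norms.

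Second, the standard small-norm theory for Riemann--Hilbert problems (using the $L^2$-boundedness of the Cauchy projections together with a Neumann series for the Beals--Coifman singular integral equation $(\id-C_{v_E})\mu=\id$) produces a unique solution with the representation
\[
E(k)=\id+\frac{1}{2\pi\ii}\int_{\Sigma_E}\frac{\mu(\zeta)\bigl(v_E(\zeta)-\id\bigr)}{\zeta-k}\,\dd\zeta,
\]
and $\|\mu-\id\|_{L^2(\Sigma_E)}=\ord(\rho(t)^{-\alpha})$. Inserting the expansion of $v_E$ into this formula, the portion of the integral over the original contour is $\ord(\rho(t)^{-\beta})$, while on each $C_i$ the integrand reduces to a meromorphic function plus an $\ord(\rho(t)^{-\beta})$ correction, and the residue theorem applied to the leading term produces exactly $\rho(t)^{-\alpha}\frac{M_i}{k-k_i}$. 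Thus
\[
\widetilde m(k)=\id+\frac{1}{\rho(t)^\alpha}\sum_{i=1}^n\frac{M_i}{k-k_i}+\ord(\rho(t)^{-\beta})
\]
uniformly for $k$ bounded away from $\Sigma_E$, and multiplying by $\begin{pmatrix}1 & 1\end{pmatrix}$ on the left yields the claim for the vector $m(k)$.

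The main technical obstacle is controlling the balance of the error terms. The nonlinear (quadratic and higher) contributions to the Neumann series are of order $\rho(t)^{-2\alpha}$, which is absorbed into the $\ord(\rho(t)^{-\beta})$ remainder precisely by the hypothesis $\beta\leq 2\alpha$, while $\alpha<\beta$ guarantees that the linear term is genuinely the leading asymptotic. One must also verify that the residue computation on $C_i$ is independent of $\varepsilon$ and of the precise shape of $\Sigma_E$ near $k_i$; this follows because the higher-order part of $M_i(k)$ contributes only $\ord(\rho(t)^{-\beta})$ when integrated against the Cauchy kernel over a circle of fixed radius, and because $E$ extends analytically across each $C_i$ after the parametrix has been subtracted off.
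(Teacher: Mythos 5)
Your overall strategy---local parametrices $M_i$ glued to the identity, a small-norm error problem on the augmented contour, the Beals--Coifman/Neumann-series representation, and the observation that the quadratic terms are $\ord(\rho(t)^{-2\alpha})=\ord(\rho(t)^{-\beta})$ because $\beta\leq 2\alpha$ while $\alpha<\beta$ isolates the linear term---is exactly the strategy behind the proofs cited here from \cite{gt,kt,kt2}, and the residue computation on the circles does produce the stated leading term. The one genuine gap is the final step, ``multiplying by $\begin{pmatrix}1&1\end{pmatrix}$ on the left yields the claim for the vector $m(k)$.'' What you have constructed is \emph{a} solution $\begin{pmatrix}1&1\end{pmatrix}\widetilde m(k)$ of the vector problem with the desired asymptotics; the theorem, however, is a statement about a \emph{given} solution $m(k)$ (in the body of the paper, the one built from the Jost solutions), and vector Riemann--Hilbert problems are in general not uniquely solvable---this is precisely the issue the paper emphasizes in Section \ref{sec:istrhp}, where uniqueness is recovered only through the symmetry condition \eqref{eq:symcond}. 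As written, your argument therefore does not yet say anything about $m(k)$ itself.

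The gap is fixable, but it needs an explicit argument. In the small-norm regime your matrix solution $\widetilde m=EP$ exists and is invertible for large $t$ (its determinant solves a scalar problem with jump $\det v$ close to $1$; note the hypothesis only gives $\det v\neq 0$, not $\det v\equiv 1$), and then for any row-vector solution $m$ with the usual $L^2$/continuity assumptions on its boundary values, the function $m(k)\widetilde m(k)^{-1}$ has cancelling jumps, extends analytically across the contour, and tends to $\begin{pmatrix}1&1\end{pmatrix}$ at infinity, so Liouville gives $m=\begin{pmatrix}1&1\end{pmatrix}\widetilde m$. Alternatively, one can run the Cauchy-operator argument directly on the vector problem (with the symmetry-respecting kernel), as is done in \cite{gt} and \cite{kt2}; that route never needs existence of a matrix solution as a separate input and keeps track of the specific solution throughout. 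Either way, this identification (equivalently, a uniqueness statement for large $t$) must be included; with it added, your proof is complete.
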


\subsection*{Model problem on a cross}
Introduce the cross $\Sigma = \Sigma_1 \cup\dots\cup \Sigma_4$ (see Figure~\ref{fig:contourcross}) by
\begin{equation}
\begin{split}
\Sigma_1 & = \accol{u\ee^{-\ii\pi/4}\mid u\geq 0},\qquad
\Sigma_2  = \accol{u\ee^{\ii\pi/4}\mid u\geq 0}, \\
\Sigma_3 & = \accol{u\ee^{3\ii\pi/4}\mid u\geq 0},\qquad\,
\Sigma_4 = \accol{u\ee^{-3\ii\pi/4}\mid u\geq 0}.
\end{split}
\end{equation}

\begin{figure}[ht]
\begin{picture}(7,5.2)
\put(1,5){\line(1,-1){5}}
\put(2,4){\vector(1,-1){0.4}}
\put(4.7,1.3){\vector(1,-1){0.4}}
\put(1,0){\line(1,1){5}}
\put(2,1){\vector(1,1){0.4}}
\put(4.7,3.7){\vector(1,1){0.4}}
\put(6.0,0.1){$\Sigma_1$}
\put(5.3,4.8){$\Sigma_2$}
\put(1.3,4.8){$\Sigma_3$}
\put(1.4,0.1){$\Sigma_4$}
\put(2.8,0.5){$\scriptsize\begin{pmatrix} 1 & - R_1(\zeta) \cdots\\ 0 & 1 \end{pmatrix}$}
\put(4.5,3.1){$\scriptsize\begin{pmatrix} 1 & 0 \\ R_2(\zeta) \cdots & 1 \end{pmatrix}$}
\put(1.9,4.5){$\scriptsize\begin{pmatrix} 1 & - R_3(\zeta) \cdots \\ 0 & 1 \end{pmatrix}$}
\put(0.5,1.8){$\scriptsize\begin{pmatrix} 1 & 0 \\ R_4(\zeta) \cdots  & 1 \end{pmatrix}$}
\end{picture}
\caption{Contours of a cross}
\label{fig:contourcross}
\end{figure}
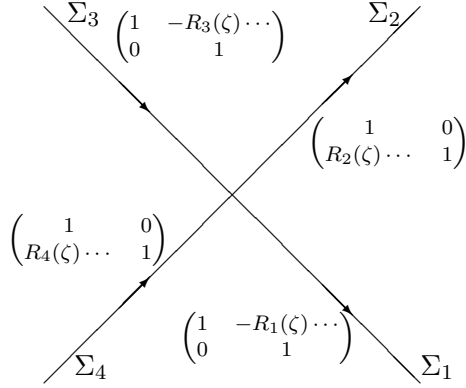

Orient $\Sigma$ such that the real part of $k$ increases
in the positive direction. Denote by $\mathbb{D} = \accol{\zeta\in\D{C}\mid\abs{\zeta}<1}$ the
open unit disc. Throughout this section $\zeta^{\ii\nu}$ will denote
the function $\ee^{\ii\nu\log(\zeta)}$, where the branch cut of the logarithm is chosen along
the negative real axis $(-\infty,0)$.

Introduce the following jump matrices ($v_j$ for $\zeta\in\Sigma_j$, $j=1,\dots,4$)
\begin{equation}
\begin{split}
v_1 &= \begin{pmatrix} 1 & - R_1(\zeta) \zeta^{2\ii\nu} \ee^{- t \Phi(\zeta)} \\ 0 & 1 \end{pmatrix},\quad
v_2 = \begin{pmatrix} 1 & 0 \\ R_2(\zeta) \zeta^{-2\ii\nu} \ee^{t \Phi(\zeta)} & 1 \end{pmatrix},  \\
v_3 &= \begin{pmatrix} 1 & - R_3(\zeta) \zeta^{2\ii\nu} \ee^{- t \Phi(\zeta)} \\ 0 & 1 \end{pmatrix}, \quad
v_4 = \begin{pmatrix} 1 & 0 \\ R_4(\zeta) \zeta^{-2\ii\nu} \ee^{t \Phi(\zeta)}  & 1 \end{pmatrix}
\end{split}
\end{equation}
and consider the RHP given by
\begin{alignat}{2}\label{eq:rhpcross}
M_+(\zeta) &= M_-(\zeta)v_j(\zeta), &\qquad& \zeta\in\Sigma_j,\quad j=1,2,3,4,\\ \notag
M(\zeta) &\to \id, && \zeta\to \infty.
\end{alignat}
The solution is given in the following theorem of P.~Deift and X.~Zhou \cite[Sect.~3-4]{dz} (for a proof of the version stated below see H.~Kr\"uger and G.~Teschl \cite[Theorem A.1]{kt2}).

\begin{theorem}[\cite{dz}]\label{thm:solcross}
Assume there is some $\rho_0>0$ such that $v_j(\zeta)=\id$ for $\abs{\zeta}>\rho_0$ and $j=1,\dots,4$. Moreover,
suppose that within $\abs{\zeta}\leq\rho_0$ the following estimates hold:
\begin{enumerate}[\rm(i)]
\item
The phase satisfies $\Phi(0)=\ii\Phi_0\in\ii\D{R}$, $\Phi'(0) = 0$, $\Phi''(0) = \ii$, and
\begin{align}\label{estPhi}
&\pm \Re\big(\Phi(\zeta)\big) \geq \frac{1}{4}\abs{\zeta}^2,\text{ with }
\begin{cases} 
+ & \text{for } \zeta\in\Sigma_1\cup\Sigma_3,\\ 
- &\text{else},
\end{cases}\\ 
\label{estPhi2}
&\Bigl\lvert\Phi(\zeta) - \Phi(0) - \frac{\ii\zeta^2}{2}\Bigr\rvert\leq C \abs{\zeta}^3.
\end{align}
\item
There is some $r\in\mathbb{D}$ and constants $(\alpha,L)\in(0,1]\times(0,\infty)$
such that $R_1,\dots,R_4$ satisfy H\"older conditions of the form
\begin{equation}
\label{holdcondrj}
\begin{split}
&\abs{R_1(\zeta) - \overline{r}} \leq L \abs{\zeta}^\alpha,\qquad\qquad\quad
\abs{R_2(\zeta) - r} \leq L \abs{\zeta}^\alpha, \\
&\biggl\lvert R_3(\zeta) - \frac{\overline{r}}{1-\abs{r}^2}\biggr\rvert\leq L \abs{\zeta}^\alpha,\qquad
\biggl\lvert R_4(\zeta) - \frac{r}{1-\abs{r}^2}\biggr\rvert\leq L\abs{\zeta}^\alpha.
\end{split}
\end{equation}
\end{enumerate}
Then the solution of the RHP \eqref{eq:rhpcross} satisfies
\begin{equation}
M(k) = \id + \frac{1}{\zeta} \frac{\ii}{t^{1/2}} 
\begin{pmatrix} 
0 & -\beta \\ 
\overline{\beta} & 0 
\end{pmatrix}
+ \ord(t^{- \frac{1 + \alpha}{2}}),
\end{equation}
for $\abs{\zeta}>\rho_0$, where
\begin{equation}
\beta = \sqrt{\nu}\,\ee^{\ii(\pi/4-\arg(r)+\arg(\Gamma(\ii\nu)))}\ee^{-\ii t\Phi_0}t^{-\ii\nu},
\qquad \nu = - \frac{1}{2\pi} \log(1-\abs{r}^2).
\end{equation}
Furthermore, if $R_j(\zeta)$ and $\Phi(\zeta)$ depend on some parameter, the error term is uniform
with respect to this parameter as long as $r$ remains within a compact subset of $\mathbb{D}$
and the constants in the above estimates can be chosen independent of the parameters.
\end{theorem}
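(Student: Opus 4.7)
The plan is to adapt the classical Deift--Zhou analysis of the oscillatory Riemann--Hilbert problem on a cross, proceeding in three stages. First, I rescale $\zeta\mapsto z=\sqrt t\,\zeta$. Under the bounds \eqref{estPhi}--\eqref{estPhi2} the phase becomes $t\Phi(\zeta)=t\Phi_0+\tfrac{\ii}{2}z^2+\ord(t^{-1/2}|z|^3)$, and the H\"older hypothesis \eqref{holdcondrj} lets me replace each $R_j(\zeta)$ by its limiting value at $\zeta=0$ (namely $r,\bar r$, or the appropriate quotient) at the cost of a jump-matrix perturbation of size $t^{-\alpha/2}$ in $L^2\cap L^\infty$ on the rescaled cross. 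A standard small-norm Beals--Coifman estimate for the associated singular integral operator on $\Sigma$ then converts this perturbation into an $\ord(t^{-(1+\alpha)/2})$ correction to the $1/\zeta$ coefficient of $M$, thereby reducing the problem to a constant-coefficient model RHP with purely quadratic phase.

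Second, I solve that model RHP explicitly. Conjugating by a suitable diagonal sectionally-holomorphic factor (essentially $\zeta^{-\ii\nu\sigma_3}\ee^{\ii t\Phi(\zeta)\sigma_3/2}$, with $\sigma_3=\mathrm{diag}(1,-1)$) absorbs all the $\zeta^{\pm 2\ii\nu}\ee^{\mp t\Phi(\zeta)}$ factors from \eqref{eq:rhpcross} and collapses the four jumps on the cross into a single jump across $\D{R}$ whose off-diagonal entries are constants built from $r$ and $\bar r$. The conjugated function $\Psi(z)$ is then bounded in $\D{C}$, and a Liouville-type argument shows that $\partial_z\Psi\cdot\Psi^{-1}$ is a polynomial of degree one in $z$; this reduces $\Psi$ to Weber's parabolic-cylinder equation. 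The unique normalized solution is expressible in terms of $D_{\ii\nu}(\ee^{\pm 3\ii\pi/4}z)$ and $D_{-\ii\nu}(\ee^{\pm\ii\pi/4}z)$, and matching the jump relations across the four sectors fixes the multiplicative constants and produces the classical connection formula involving $\Gamma(\ii\nu)$.

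Third, plugging in the well-known $z\to\infty$ asymptotics of $D_{\ii\nu}$ gives
\[
\Psi(z)=\id+\frac{1}{z}\begin{pmatrix}0&-\ii\beta_0\\ \ii\overline{\beta_0}&0\end{pmatrix}+\ord(z^{-2}),\qquad \beta_0=\sqrt{\nu}\,\ee^{\ii(\pi/4-\arg r+\arg\Gamma(\ii\nu))}.
\]
Undoing the conjugation restores the factor $\zeta^{-\ii\nu\sigma_3}$, which under the rescaling $z=\sqrt t\,\zeta$ yields $t^{-\ii\nu}$ in the $(1,2)$-entry and $t^{\ii\nu}$ in the $(2,1)$-entry, together with the phase $\ee^{\mp\ii t\Phi_0}$; these combine with $\beta_0$ to give precisely the stated $\beta$. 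Uniformity with respect to parameters follows because the small-norm inversion and the parabolic-cylinder asymptotics are uniform on compact subsets of $\D{D}$, and the H\"older constants in \eqref{holdcondrj} are assumed uniform as well.

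The main obstacle is the explicit reduction to Weber's equation in step two and the careful bookkeeping of the branch cut of $\zeta^{\ii\nu}$ along $(-\infty,0)$ across the four sectors, so that after conjugation the four jump relations remain simultaneously consistent at $\zeta=0$. Once this ``model solution'' lemma is secured, the passage from the $z$-asymptotics of $\Psi$ back to the $\zeta$-asymptotics of $M$ is mechanical, and the error term $\ord(t^{-(1+\alpha)/2})$ is precisely the one inherited from the H\"older step.
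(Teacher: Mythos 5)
The paper does not prove Theorem~\ref{thm:solcross} itself: it quotes it from \cite{dz}, with the proof of this precise version referred to \cite[Theorem A.1]{kt2}, and your sketch reproduces exactly that classical argument (rescaling $z=\sqrt{t}\,\zeta$ together with the H\"older/small-norm reduction to the constant-coefficient model, solution of the model problem via the Liouville argument leading to Weber's equation and parabolic cylinder functions, and the connection formula producing $\Gamma(\ii\nu)$ in $\beta$). So your proposal is correct in outline and takes essentially the same route as the cited proof on which the paper relies.
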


\noindent
\textbf{Acknowledgments.}
We are indebted to Ira Egorova for helpful discussions on various topics of this paper and to Adrian Constantin for comments on a previous version of this article. One of us (A.~K.) gratefully acknowledges financial support from the International Erwin Schr\"odinger Institute for Mathematical Physics in the form of a junior fellowship during which parts of this research were done.

\end{document}